%
%
%
%
%
\documentclass[12pt]{article}
\usepackage{amsmath,amsthm}
\usepackage{eufrak}


\topmargin=-2.5cm \oddsidemargin=-0.75cm \evensidemargin=-0.75cm
\textwidth 17.5cm \textheight 25cm
\newcommand{\ar}{\renewcommand{\arraystretch}{1}} 
\DeclareMathAlphabet{\bb}{U}{msb}{m}{n} \gdef\C{\bb C} \gdef\dZ{\bb
Z}   \gdef\dS{\bb S} \gdef\R{\bb R}
\gdef\K{\bb K} \gdef\BH{\bb H} \gdef\F{\bb F} 

\DeclareMathOperator{\End}{End} \DeclareMathOperator{\spin}{{\bf
Spin}}

\DeclareMathOperator{\Sym}{Sym}

 \DeclareMathOperator{\SL}{SL}
\DeclareMathOperator{\SO}{SO}\DeclareMathOperator{\SU}{SU}

\newcommand{\scr}{\scriptstyle}

\newcommand{\cA}{\mathcal{A}}

\newcommand{\sA}{{\sf A}}
\newcommand{\sB}{{\sf B}}

\newcommand{\sH}{{\sf H}}

\newcommand{\sX}{{\sf X}}
\newcommand{\sY}{{\sf Y}}

\newcommand{\bsH}{{\boldsymbol{\sf H}}}
\newcommand{\sL}{\Lambda}

\newcommand{\bp}{{\bf p}}
\newcommand{\bx}{{\bf x}}
\newcommand{\by}{{\bf y}}
\newcommand{\bz}{{\bf z}}
\newcommand{\bB}{{\bf B}}
\newcommand{\bJ}{{\bf J}}

\newcommand{\bE}{{\bf E}}

\newcommand{\fM}{\mathfrak{M}}

\newcommand{\fZ}{\mathfrak{Z}}

\newcommand{\fg}{\mathfrak{g}}

\newcommand{\cl}{C\kern -0.2em \ell}

\newcommand{\bO}{\mbox{\bf O}}

\newcommand{\hypergeom}[5]{\mbox{$
_#1 F_#2\left. \!\! \left( \!\!
\begin{array}{c}
\multicolumn{1}{c}{\begin{array}{c} #3
\end{array}}\\[1mm]
\multicolumn{1}{c}{\begin{array}{c} #4
\end{array}}\end{array}
\!\! \right|\displaystyle{#5}\right) $} }
\newcommand{\ld}{\left[}
\newcommand{\rd}{\right]}

\newtheorem{thm}{Theorem}
\begin{document}
\title{Spinor Structure and Matter Spectrum}
\author{V.~V. Varlamov\thanks{Siberian State Industrial University,
Kirova 42, Novokuznetsk 654007, Russia, e-mail:
varlamov@sibsiu.ru}}
\date{}
\maketitle
\begin{abstract}
Classification of relativistic wave equations is given on the ground of interlocking representations of the Lorentz group. A system of interlocking representations is associated with a system of eigenvector subspaces of the energy operator. Such a correspondence allows one to define matter spectrum, where the each level of this spectrum presents a some state of elementary particle. An elementary particle is understood as a superposition of state vectors in nonseparable Hilbert space. Classification of indecomposable systems of relativistic wave equations is produced for bosonic and fermionic fields on an equal footing (including Dirac and Maxwell equations). All these fields are equivalent levels of matter spectrum, which differ from each other by the value of mass and spin. It is shown that a spectrum of the energy operator, corresponding to a given matter level, is non-degenerate for the fields of type $(l,0)\oplus(0,l)$, where $l$ is a spin value, whereas for arbitrary spin chains we have degenerate spectrum. Energy spectra of the stability levels (electron and proton states) of the matter spectrum are studied in detail. It is shown that these stability levels have a nature of threshold scales of the fractal structure associated with the system of interlocking representations of the Lorentz group.
\end{abstract}
{\bf Keywords}: relativistic wave equations, spinor structure, Clifford algebras, particle spectrum, Lorentz group
\section{Introduction}
At present time there is vast observation material on spectroscopy of elementary particles (see Particle Data Group). Discovery of dynamical symmetries ($\SU(3)$- and $\SU(6)$-symmetries and so on) allows one to partially regulate and systematize these observation data, mainly for the baryon spectrum. However, a general structure of particle spectrum is unclear so far. In Heisenberg's opinion \cite{Heisen1}, the main reason of this situation is misunderstanding of elementary particle nature. Quark models, based on the approximate dynamical $\SU(N)$-symmetries, do not give answer to this question, since these models do not include lepton sector. In the standard model we see division of quark and lepton sectors, plus to these two we have a gauge sector. Such triple division of particle spectrum on the three classes `fundamental particles' is the most peculiar feature of the standard model. Over a long period of time many physicists attempted to go out from this scheme (beyond the standard model).

One of the most interesting and promising alternative schemes is a research programme proposed by Heisenberg \cite{Heisen}. The main idea of this programme is a representation of the all huge quantity of elementary particles as a \textit{\textbf{matter spectrum}}, where the each elementary particle presents a some energy level of this spectrum. At this point, an essentially important feature of such description is an absence of fundamental particles, since all the levels of matter spectrum are equivalent (enjoying equal rights). Heisenberg claimed that a notion `consists of' does not work in particle physics. Applying this notion, we obtain that the each particle consists of the all known particles. Thus, physical knowledge comes near frontiers of the area, where the notion `consists of' does not have a sense. For that reason among all elementary particles we cannot to separate one as a fundamental particle \cite{Heisen}. In Heisenberg's programme instead fundamental particles we have fundamental symmetries. Basic law, stipulating spectrum of elementary particles, is defined by fundamental symmetries.

The present paper is continuation of the works \cite{Var03,Var07}, where mapping of Bhabha equations \cite{Bha45} on the bivector space $\R^6$ (a parameter space of the Lorentz group) has been proposed. This mapping makes more clear group-theoretical structure of relativistic wave equations. Moreover, it allows one to essentially simplify a finding of solutions of these equations for any spin and mass. Solutions of obtained systems of equations have been found in the form of series in hyperspherical functions defined on the surface of two-dimensional complex
sphere\footnote{The surface of $S^c_2:$ $z^2_1+z^2_2+z^2_3=\bx^2-\by^2+2i\bx\by=r^2$ is homeomorphic to
the space of the pairs $(z_1,z_2)$, where $z_1$ and $z_2$ are the
points of a \emph{complex projective line}, $z_1\neq z_2$. This
space is a homogeneous space of the Lorentz group \cite{GGV62}. It is well-known that both quantities $\bx^2-\by^2$, $\bx\by$ are
invariant with respect to the Lorentz transformations, since a
surface of the complex sphere is invariant (Casimir operators of the
Lorentz group are constructed from such quantities). Moreover, since
the real and imaginary parts of the complex two-sphere transform
like the electric and magnetic fields, respectively, the invariance
of $\bz^2\sim(\bE+i\bB)^2$ under proper Lorentz transformations is
evident. At this point, the quantities $\bx^2-\by^2$, $\bx\by$ are
similar to the well known electromagnetic invariants $E^2-B^2$,
$\bE\bB$. This intriguing relationship between the Laplace-Beltrami
operators, Casimir operators of the Lorentz group and
electromagnetic invariants $E^2-B^2\sim\bx^2-\by^2$,
$\bE\bB\sim\bx\by$ leads naturally to a Riemann-Silberstein
representation of the electromagnetic field (see, for example,
\cite{Web01,Sil07,Bir96}).} $S^c_2$ for the fields of type $(l,0)\oplus(0,l)$ \cite{Var03} and for the fields of more general type (arbitrary spin chains) in the form of series in generalized hyperspherical functions \cite{Var07}. In the present paper, using a complex envelope of the group algebra of $\SL(2,\C)$, we associate a system of interlocking representations of this group with the system of eigenvector subspaces of the energy operator $H$ (in virtue of commutativity of energy operator and operators of complex momentum there exists a common system of eigenfunctions for these operators). Obviously, in this case fundamental symmetry of the matter spectrum is defined by the Lorentz group. In this paper we choose $\spin_+(1,3)\simeq\SL(2,\C)$ as a \emph{generating kernel} of spinor structure. However, the group $\spin_+(2,4)\simeq\SU(2,2)$ (universal covering of the conformal group $\SO_0(2,4)$) can be also chosen as such a kernel (in virtue of similarity of complex envelopes for the group algebras $\mathfrak{sl}(2,\C)$ and $\mathfrak{su}(2,2)$). Moreover, conformal group plays an important role in many areas of theoretical physics (see \cite{Ulr14} and references therein).  Further, matter spectrum is identified with the system of eigenvector subspaces of the energy operator $H$, and the each spectrum level is identified with an actualized state of some elementary particle with definite values of spin and mass. At this point, a level of the matter spectrum does not describe an elementary particle in its entirety, it describes only one state  from the variety of all its possible states at the given value of energy. An elementary particle in itself is understood as a superposition of state vectors in nonseparable Hilbert space $\bsH^S\otimes\bsH^Q\otimes\bsH_\infty$ (for more details about structure of this space\footnote{As is known, fundamental duality of nature, was first discovered in Stern-Gerlach experiment, led to introduction of spin in quantum mechanics. In 1927, Pauli \cite{Pau27} introduced spin in quantum mechanics via the doubled Hilbert space $\bsH_2\otimes\bsH_\infty$. In its turn, $\bsH^S\otimes\bsH^Q\otimes\bsH_\infty$ is a natural generalization of the Pauli space $\bsH_2\otimes\bsH_\infty$.} see \cite{Var15,Var12,Var15a}). At the reduction of superposition one from the possible states (level of the matter spectrum) is realized in accordance with superselection rules and coherent subspaces of $\bsH^S\otimes\bsH^Q\otimes\bsH_\infty$. On the other hand, according to \cite{Bha45} the each elementary particle is described by a some relativistic wave equation. In the section 4 we study structure of indecomposable relativistic wave equations related by a general interlocking scheme of the Lorentz group. Such interlocking scheme corresponds to some nonseparable state in the space $\bsH^S\otimes\bsH^Q\otimes\bsH_\infty$. So-defined matter spectrum serves as an energetic scale with respect to which we produce classification of relativistic wave equations and analysis of energy operator spectrum for the each matter level. At this point, Dirac and Maxwell equations are considered on an equal footing, that is, these equations describe particle states (electron and photon states) which are the levels of matter spectrum. For that reason Dirac and Maxwell equations have similar spinor form. It is shown that for the fields of type $(l,0)\oplus(0,l)$ and related equations we have simple non-degenerate spectrum, whereas for the arbitrary spin chains this spectrum is degenerate (at least twofold). Structure of the two stability levels (electron and proton states) of matter spectrum is studied in detail. It is shown that a place of these levels on the energetic scale of matter spectrum is not arbitrary. Namely, these levels have a nature of threshold scales of the fractal structure associated with the system of interlocking representations of the Lorentz group.
\section{Complex momentum and spinor structure}
In this section we consider basic facts concerning representations of the group $\spin_+(1,3)\simeq\SL(2,\C)$ with respect to underlying spintensor substrate (spinor structure)\footnote{Here we follow to terminology of \cite{Weyl}, where Weyl wrote that symmetric and antisymmetric tensor products are \textit{substrates} for all representations of the group $\mathfrak{c}_n$ ($\mathfrak{c}_n$ is the group of all non-singular linear transformations in $n$ dimensions).}. Complex envelope of $\mathfrak{sl}(2,\C)$ and system of eigenvector subspaces of the energy operator are introduced in sections 2.1 and 2.2. In the sections 2.3-2.4, following to Heisenberg-Fock conception, we define elementary particle as a \textbf{\textit{nonseparable (entangled) state}} in spin-charge Hilbert space $\bsH^S\otimes\bsH^Q\otimes\bsH_\infty$. The space $\bsH^S\otimes\bsH^Q\otimes\bsH_\infty$ allows us to take into account \textbf{\textit{spin}}, \textbf{\textit{charge}} and \textbf{\textit{mass}} of elementary particle. At this point, all these three characteristics are attributes of underlying spintensor substrate associated with the each state vector of $\bsH^S\otimes\bsH^Q\otimes\bsH_\infty$. Charge conjugation is interpreted as a pseudoautomorphism of the spinor structure.
\subsection{Complex momentum and energy operator}
As is known, a universal covering of the proper orthochronous Lorentz group $\SO_0(1,3)$
(rotation group of the Minkowski space-time $\R^{1,3}$)
is the spinor group
\[\ar
\spin_+(1,3)\simeq\left\{\begin{pmatrix} \alpha & \beta \\ \gamma &
\delta
\end{pmatrix}\in\C_2:\;\;\det\begin{pmatrix}\alpha & \beta \\ \gamma & \delta
\end{pmatrix}=1\right\}=\SL(2,\C).
\]

Let $\fg\rightarrow T_{\fg}$ be an arbitrary linear
representation of the proper orthochronous Lorentz group
$\SO_0(1,3)$ and let $\sA_i(t)=T_{a_i(t)}$ be an infinitesimal
operator corresponding to the rotation $a_i(t)\in\SO_0(1,3)$.
Analogously, let $\sB_i(t)=T_{b_i(t)}$, where $b_i(t)\in\SO_0(1,3)$ is
the hyperbolic rotation. The elements $\sA_i$ and $\sB_i$ form a basis of the group algebra
$\mathfrak{sl}(2,\C)$ and satisfy the relations
\begin{equation}\label{Com1}
\left.\begin{array}{lll} \ld\sA_1,\sA_2\rd=\sA_3, &
\ld\sA_2,\sA_3\rd=\sA_1, &
\ld\sA_3,\sA_1\rd=\sA_2,\\[0.1cm]
\ld\sB_1,\sB_2\rd=-\sA_3, & \ld\sB_2,\sB_3\rd=-\sA_1, &
\ld\sB_3,\sB_1\rd=-\sA_2,\\[0.1cm]
\ld\sA_1,\sB_1\rd=0, & \ld\sA_2,\sB_2\rd=0, &
\ld\sA_3,\sB_3\rd=0,\\[0.1cm]
\ld\sA_1,\sB_2\rd=\sB_3, & \ld\sA_1,\sB_3\rd=-\sB_2, & \\[0.1cm]
\ld\sA_2,\sB_3\rd=\sB_1, & \ld\sA_2,\sB_1\rd=-\sB_3, & \\[0.1cm]
\ld\sA_3,\sB_1\rd=\sB_2, & \ld\sA_3,\sB_2\rd=-\sB_1. &
\end{array}\right\}
\end{equation}
Defining the operators
\begin{gather}
\sX_l=\frac{1}{2}i(\sA_l+i\sB_l),\quad\sY_l=\frac{1}{2}i(\sA_l-i\sB_l),
\label{SL25}\\
(l=1,2,3),\nonumber
\end{gather}
we come to a \textit{complex envelope} of the group algebra $\mathfrak{sl}(2,\C)$.
Using the relations (\ref{Com1}), we find
\begin{equation}\label{Com2}
\ld\sX_k,\sX_l\rd=i\varepsilon_{klm}\sX_m,\quad
\ld\sY_l,\sY_m\rd=i\varepsilon_{lmn}\sY_n,\quad \ld\sX_l,\sY_m\rd=0.
\end{equation}
From the relations (\ref{Com2}) it follows that each of the sets of
infinitesimal operators $\sX$ and $\sY$ generates the group $\SU(2)$
and these two groups commute with each other. Thus, from the
relations (\ref{Com2}) and (\ref{Waerden}) it follows that the group
$\SL(2,\C)$, in essence, is equivalent locally to the group
$\SU(2)\otimes\SU(2)$.


As is known, the most important \textit{observable} in quantum mechanics is the \textbf{\textit{energy}}. Let $H$ be a Hermitian \textit{energy operator} $H$\footnote{\textit{Spectrum} of this operator forms a some set of real numbers.}, defined on the Hilbert space $\sH_\infty$\footnote{We suppose that this space is separable.}. Then all the possible values of energy are eigenvalues of the operator $H$. At this point, if $E^\prime\neq E^{\prime\prime}$ are eigenvalues of $H$, and $|\psi^\prime\rangle$, $|\psi^{\prime\prime}\rangle$ are their eigenvectors in the space $\sH_\infty$, then $\langle\psi^\prime|\psi^{\prime\prime}\rangle=0$. All the eigenvectors, belonging to a given eigenvalue $E$, form (together with the null vector) an \textit{eigenvector subspace} $\sH_E$ of the Hilbert space $\sH_\infty$. All the eigenvector subspaces $\sH_E\in\sH_\infty$ are finite-dimensional. A dimension $r$ of $\sH_E$ is called a \textit{multiplicity} of the eigenvalue $E$. When $r>1$ the eigenvalue $E$ is \textit{$r$-fold degenerate}. As is known \cite{BHJ26}, the energy operator $H$ commutes with the all operators in $\sH_\infty$, representing a Lie algebra of the group $\SU(2)$. Therefore, $H$ commutes also with the all operators in $\sH_\infty$, which represent a group algebra of $\SU(2)\otimes\SU(2)$.

Let us consider an arbitrary eigenvector subspace $\sH_E$ of the energy operator $H$. Since the operators $\sX_l$, $\sY_l$ and $H$ commute with the each other, then, as is known \cite{Dir}, for these operators we can build a common system of eigenfunctions. It means that the subspace $\sH_E$ is invariant with respect to operators $\sX_l$, $\sY_l$\footnote{Moreover, the operators $\sX_l$, $\sY_l$ can be considered \textit{only on} $\sH_E$.}. The each operator from $\sX_l$, $\sY_l$, for example, $\sX_3$ (or $\sY_3$) is a Hermitian operator on $\sH_E$ and, therefore, $\sX_3$ has $r_E$ independent eigenvectors on the space $\sH_E$.

Further, let us suppose that there is a some \textit{local representation} of the group $\SU(2)\otimes\SU(2)$, defined by the operators acting in the space $\sH_\infty$. At this point, we assume that all the representing operators commute with $H$. Then the each eigenvector subspace $\sH_E$ of the energy operator is invariant with respect to operators of complex momentum $\sX_l$, $\sY_l$. On the Fig.\,1 eigenvector subspaces $\sH_E\simeq\Sym_{(k,r)}$ are placed in accordance with interlocking representation scheme of $\SL(2,\C)$. Here we have the first cell of spinorial chessboard of second order (for more details see \cite{Var15a}).
\begin{figure}[ht]
\unitlength=1.5mm
\begin{center}
\begin{picture}(100,50)
\put(50,0){$\overset{(0,0)}{\bullet}$}\put(47,5.5){\line(1,0){10}}\put(52.25,2.75){\line(0,1){7.25}}
\put(55,5){$\overset{(\frac{1}{2},0)}{\bullet}$}
\put(45,5){$\overset{(0,\frac{1}{2})}{\bullet}$}
\put(40,10){$\overset{(0,1)}{\bullet}$}\put(42,10.5){\line(1,0){10}}\put(47.25,7.75){\line(0,1){7.25}}
\put(50,10){$\overset{(\frac{1}{2},\frac{1}{2})}{\bullet}$}
\put(52,10.5){\line(1,0){10}}\put(57.25,7.75){\line(0,1){7.25}}
\put(60,10){$\overset{(1,0)}{\bullet}$}
\put(35,15){$\overset{(0,\frac{3}{2})}{\bullet}$}\put(37,15.5){\line(1,0){10}}\put(42.25,12.75){\line(0,1){7.25}}
\put(45,15){$\overset{(\frac{1}{2},1)}{\bullet}$}\put(47,15.5){\line(1,0){10}}\put(52.25,12.75){\line(0,1){7.25}}
\put(55,15){$\overset{(1,\frac{1}{2})}{\bullet}$}\put(57,15.5){\line(1,0){10}}\put(62.25,12.75){\line(0,1){7.25}}
\put(65,15){$\overset{(\frac{3}{2},0)}{\bullet}$}
\put(30,20){$\overset{(0,2)}{\bullet}$}\put(32,20.5){\line(1,0){10}}\put(37.25,17.75){\line(0,1){7.25}}
\put(40,20){$\overset{(\frac{1}{2},\frac{3}{2})}{\bullet}$}
\put(42,20.5){\line(1,0){10}}\put(47.25,17.75){\line(0,1){7.25}}
\put(50,20){$\overset{(1,1)}{\bullet}$}\put(52,20.5){\line(1,0){10}}\put(57.25,17.75){\line(0,1){7.25}}
\put(60,20){$\overset{(\frac{3}{2},\frac{1}{2})}{\bullet}$}
\put(62,20.5){\line(1,0){10}}\put(67.25,17.75){\line(0,1){7.25}}
\put(70,20){$\overset{(2,0)}{\bullet}$}
\put(25,25){$\overset{(0,\frac{5}{2})}{\bullet}$}\put(27,25.5){\line(1,0){10}}\put(32.25,22.75){\line(0,1){7.25}}
\put(35,25){$\overset{(\frac{1}{2},2)}{\bullet}$}\put(37,25.5){\line(1,0){10}}\put(42.25,22.75){\line(0,1){7.25}}
\put(45,25){$\overset{(1,\frac{3}{2})}{\bullet}$}\put(47,25.5){\line(1,0){10}}\put(52.25,22.75){\line(0,1){7.25}}
\put(55,25){$\overset{(\frac{3}{2},1)}{\bullet}$}\put(57,25.5){\line(1,0){10}}\put(62.25,22.75){\line(0,1){7.25}}
\put(65,25){$\overset{(2,\frac{1}{2})}{\bullet}$}\put(67,25.5){\line(1,0){10}}\put(72.25,22.75){\line(0,1){7.25}}
\put(75,25){$\overset{(\frac{5}{2},0)}{\bullet}$}
\put(20,30){$\overset{(0,3)}{\bullet}$}\put(22,30.5){\line(1,0){10}}\put(27.25,27.75){\line(0,1){7.25}}
\put(30,30){$\overset{(\frac{1}{2},\frac{5}{2})}{\bullet}$}
\put(32,30.5){\line(1,0){10}}\put(37.25,27.75){\line(0,1){7.25}}
\put(40,30){$\overset{(1,2)}{\bullet}$}\put(42,30.5){\line(1,0){10}}\put(47.25,27.75){\line(0,1){7.25}}
\put(50,30){$\overset{(\frac{3}{2},\frac{3}{2})}{\bullet}$}
\put(52,30.5){\line(1,0){10}}\put(57.25,27.75){\line(0,1){7.25}}
\put(60,30){$\overset{(2,1)}{\bullet}$}\put(62,30.5){\line(1,0){10}}\put(67.25,27.75){\line(0,1){7.25}}
\put(70,30){$\overset{(\frac{5}{2},\frac{5}{2})}{\bullet}$}
\put(72,30.5){\line(1,0){10}}\put(77.25,27.75){\line(0,1){7.25}}
\put(80,30){$\overset{(3,0)}{\bullet}$}
\put(15,35){$\overset{(0,\frac{7}{2})}{\bullet}$}\put(17,35.5){\line(1,0){10}}\put(22.25,32.75){\line(0,1){7.25}}
\put(25,35){$\overset{(\frac{1}{2},3)}{\bullet}$}\put(27,35.5){\line(1,0){10}}\put(32.25,32.75){\line(0,1){7.25}}
\put(35,35){$\overset{(1,\frac{5}{2})}{\bullet}$}\put(37,35.5){\line(1,0){10}}\put(42.25,32.75){\line(0,1){7.25}}
\put(45,35){$\overset{(\frac{3}{2},2)}{\bullet}$}\put(47,35.5){\line(1,0){10}}\put(52.25,32.75){\line(0,1){7.25}}
\put(55,35){$\overset{(2,\frac{3}{2})}{\bullet}$}\put(57,35.5){\line(1,0){10}}\put(62.25,32.75){\line(0,1){7.25}}
\put(65,35){$\overset{(\frac{5}{2},1)}{\bullet}$}\put(67,35.5){\line(1,0){10}}\put(72.25,32.75){\line(0,1){7.25}}
\put(75,35){$\overset{(3,\frac{1}{2})}{\bullet}$}\put(77,35.5){\line(1,0){10}}\put(82.25,32.75){\line(0,1){7.25}}
\put(85,35){$\overset{(\frac{7}{2},0)}{\bullet}$}
\put(10,40){$\overset{(0,4)}{\bullet}$}\put(12,40.5){\line(1,0){10}}
\put(20,40){$\overset{(\frac{1}{2},\frac{7}{2})}{\bullet}$}\put(22,40.5){\line(1,0){10}}
\put(30,40){$\overset{(1,3)}{\bullet}$}\put(32,40.5){\line(1,0){10}}
\put(40,40){$\overset{(\frac{3}{2},\frac{5}{2})}{\bullet}$}\put(42,40.5){\line(1,0){10}}
\put(50,40){$\overset{(2,2)}{\bullet}$}\put(52,40.5){\line(1,0){10}}
\put(60,40){$\overset{(\frac{5}{2},\frac{3}{2})}{\bullet}$}\put(62,40.5){\line(1,0){10}}
\put(70,40){$\overset{(3,1)}{\bullet}$}\put(72,40.5){\line(1,0){10}}
\put(80,40){$\overset{(\frac{7}{2},\frac{1}{2})}{\bullet}$}\put(82,40.5){\line(1,0){10}}
\put(90,40){$\overset{(4,0)}{\bullet}$}
\put(11.5,45){$\vdots$}
\put(21.5,45){$\vdots$}
\put(31.5,45){$\vdots$}
\put(41.5,45){$\vdots$}
\put(51.5,45){$\vdots$}
\put(61.5,45){$\vdots$}
\put(71.5,45){$\vdots$}
\put(81.5,45){$\vdots$}
\put(91.5,45){$\vdots$}
\put(10,0.5){\line(1,0){42}}\put(50,0.5){\vector(1,0){42}}
\put(16.5,32){$\vdots$}
\put(16.5,29){$\vdots$}
\put(16.5,26){$\vdots$}
\put(16.5,23){$\vdots$}
\put(16.5,20){$\vdots$}
\put(16.5,17){$\vdots$}
\put(16.5,14){$\vdots$}
\put(16.5,11){$\vdots$}
\put(16.5,9){$\vdots$}
\put(16.5,6){$\vdots$}
\put(16.5,3){$\vdots$}
\put(16.5,1.5){$\cdot$}
\put(16.5,0){$\cdot$}
\put(14.5,-3){$-\frac{7}{2}$}
\put(21.5,27){$\vdots$}
\put(21.5,24){$\vdots$}
\put(21.5,21){$\vdots$}
\put(21.5,18){$\vdots$}
\put(21.5,15){$\vdots$}
\put(21.5,13){$\vdots$}
\put(21.5,9){$\vdots$}
\put(21.5,6){$\vdots$}
\put(21.5,3){$\vdots$}
\put(21.5,1.5){$\cdot$}
\put(21.5,0){$\cdot$}
\put(19.5,-3){$-3$}
\put(26.5,22){$\vdots$}
\put(26.5,19){$\vdots$}
\put(26.5,16){$\vdots$}
\put(26.5,13){$\vdots$}
\put(26.5,10){$\vdots$}
\put(26.5,7){$\vdots$}
\put(26.5,4){$\vdots$}
\put(26.5,1){$\vdots$}
\put(24.5,-3){$-\frac{5}{2}$}
\put(31.5,17){$\vdots$}
\put(31.5,14){$\vdots$}
\put(31.5,11){$\vdots$}
\put(31.5,8){$\vdots$}
\put(31.5,5){$\vdots$}
\put(31.5,2){$\vdots$}
\put(31.5,0.5){$\cdot$}
\put(29.5,-3){$-2$}
\put(36.5,12){$\vdots$}
\put(36.5,9){$\vdots$}
\put(36.5,6){$\vdots$}
\put(36.5,3){$\vdots$}
\put(36.5,1.5){$\cdot$}
\put(36.5,0){$\cdot$}
\put(34.5,-3){$-\frac{3}{2}$}
\put(41.5,7){$\vdots$}
\put(41.5,4){$\vdots$}
\put(41.5,1){$\vdots$}
\put(39.5,-3){$-1$}
\put(46.5,2){$\vdots$}
\put(46.5,0.5){$\cdot$}
\put(44.5,-3){$-\frac{1}{2}$}
\put(51.5,-3){$0$}
\put(56.5,2){$\vdots$}
\put(56.5,0.5){$\cdot$}
\put(56.5,-3){$\frac{1}{2}$}
\put(61.5,7){$\vdots$}
\put(61.5,4){$\vdots$}
\put(61.5,1){$\vdots$}
\put(61.5,-3){$1$}
\put(66.5,12){$\vdots$}
\put(66.5,9){$\vdots$}
\put(66.5,6){$\vdots$}
\put(66.5,3){$\vdots$}
\put(66.5,1.5){$\cdot$}
\put(66.5,0){$\cdot$}
\put(66.5,-3){$\frac{3}{2}$}
\put(71.5,17){$\vdots$}
\put(71.5,14){$\vdots$}
\put(71.5,11){$\vdots$}
\put(71.5,8){$\vdots$}
\put(71.5,5){$\vdots$}
\put(71.5,2){$\vdots$}
\put(71.5,0.5){$\cdot$}
\put(71.5,-3){$2$}
\put(76.5,22){$\vdots$}
\put(76.5,19){$\vdots$}
\put(76.5,16){$\vdots$}
\put(76.5,13){$\vdots$}
\put(76.5,10){$\vdots$}
\put(76.5,7){$\vdots$}
\put(76.5,4){$\vdots$}
\put(76.5,1){$\vdots$}
\put(76.5,-3){$\frac{5}{2}$}
\put(81.5,27){$\vdots$}
\put(81.5,24){$\vdots$}
\put(81.5,21){$\vdots$}
\put(81.5,18){$\vdots$}
\put(81.5,15){$\vdots$}
\put(81.5,13){$\vdots$}
\put(81.5,9){$\vdots$}
\put(81.5,6){$\vdots$}
\put(81.5,3){$\vdots$}
\put(81.5,1.5){$\cdot$}
\put(81.5,0){$\cdot$}
\put(81.5,-3){$3$}
\put(86.5,32){$\vdots$}
\put(86.5,29){$\vdots$}
\put(86.5,26){$\vdots$}
\put(86.5,23){$\vdots$}
\put(86.5,20){$\vdots$}
\put(86.5,17){$\vdots$}
\put(86.5,14){$\vdots$}
\put(86.5,11){$\vdots$}
\put(86.5,9){$\vdots$}
\put(86.5,6){$\vdots$}
\put(86.5,3){$\vdots$}
\put(86.5,1.5){$\cdot$}
\put(86.5,0){$\cdot$}
\put(86.5,-3){$\frac{7}{2}$}
\put(53.8,1.7){$\cdot$}\put(54.3,2.2){$\cdot$}\put(54.8,2.7){$\cdot$}\put(55.3,3.3){$\cdot$}\put(55.8,3.8){$\cdot$}
\put(56.3,4.3){$\cdot$}
\put(58.8,6.8){$\cdot$}\put(59.3,7.3){$\cdot$}\put(59.8,7.8){$\cdot$}\put(60.3,8.3){$\cdot$}\put(60.8,8.8){$\cdot$}
\put(61.3,9.3){$\cdot$}
\put(63.8,11.8){$\cdot$}\put(64.3,12.3){$\cdot$}\put(64.8,12.8){$\cdot$}\put(65.3,13.3){$\cdot$}\put(65.8,13.8){$\cdot$}
\put(66.3,14.3){$\cdot$}
\put(68.8,16.8){$\cdot$}\put(69.3,17.3){$\cdot$}\put(69.8,17.8){$\cdot$}\put(70.3,18.3){$\cdot$}\put(70.8,18.8){$\cdot$}
\put(71.3,19.3){$\cdot$}
\put(33.8,21.8){$\cdot$}\put(34.3,22.3){$\cdot$}\put(34.8,22.8){$\cdot$}\put(35.3,23.3){$\cdot$}\put(35.8,23.8){$\cdot$}
\put(36.3,24.3){$\cdot$}
\put(38.8,26.8){$\cdot$}\put(39.3,27.3){$\cdot$}\put(39.8,27.8){$\cdot$}\put(40.3,28.3){$\cdot$}\put(40.8,28.8){$\cdot$}
\put(41.3,29.3){$\cdot$}
\put(43.8,31.8){$\cdot$}\put(44.3,32.3){$\cdot$}\put(44.8,32.8){$\cdot$}\put(45.3,33.3){$\cdot$}\put(45.8,33.8){$\cdot$}
\put(46.3,34.3){$\cdot$}
\put(48.8,36.8){$\cdot$}\put(49.3,37.3){$\cdot$}\put(49.8,37.8){$\cdot$}\put(50.3,38.3){$\cdot$}\put(50.8,38.8){$\cdot$}
\put(51.3,39.3){$\cdot$}
\put(47.3,4.4){$\cdot$}\put(47.8,3.9){$\cdot$}\put(48.3,3.4){$\cdot$}\put(48.8,2.9){$\cdot$}\put(49.3,2.4){$\cdot$}
\put(49.8,1.9){$\cdot$}
\put(42.3,9.4){$\cdot$}\put(42.8,8.9){$\cdot$}\put(43.3,8.4){$\cdot$}\put(43.8,7.9){$\cdot$}\put(44.3,7.4){$\cdot$}
\put(44.8,6.9){$\cdot$}
\put(37.3,14.4){$\cdot$}\put(37.8,13.9){$\cdot$}\put(38.3,13.4){$\cdot$}\put(38.8,12.9){$\cdot$}\put(39.3,12.4){$\cdot$}
\put(39.8,11.9){$\cdot$}
\put(32.3,19.4){$\cdot$}\put(32.8,18.9){$\cdot$}\put(33.3,18.4){$\cdot$}\put(33.8,17.9){$\cdot$}\put(34.3,17.4){$\cdot$}
\put(34.8,16.9){$\cdot$}
\put(67.3,24.4){$\cdot$}\put(67.8,23.9){$\cdot$}\put(68.3,23.4){$\cdot$}\put(68.8,22.9){$\cdot$}\put(69.3,22.4){$\cdot$}
\put(69.8,21.9){$\cdot$}
\put(62.3,29.4){$\cdot$}\put(62.8,28.9){$\cdot$}\put(63.3,28.4){$\cdot$}\put(63.8,27.9){$\cdot$}\put(64.3,27.4){$\cdot$}
\put(64.8,26.9){$\cdot$}
\put(57.3,34.4){$\cdot$}\put(57.8,33.9){$\cdot$}\put(58.3,33.4){$\cdot$}\put(58.8,32.9){$\cdot$}\put(59.3,32.4){$\cdot$}
\put(59.8,31.9){$\cdot$}
\put(52.3,39.4){$\cdot$}\put(52.8,38.9){$\cdot$}\put(53.3,38.4){$\cdot$}\put(53.8,37.9){$\cdot$}\put(54.3,37.4){$\cdot$}
\put(54.8,36.9){$\cdot$}
\end{picture}
\end{center}
\vspace{0.3cm}
\begin{center}\begin{minipage}{30pc}{\small {\bf Fig.\,1:} Eigenvector subspaces $\sH_E\simeq\Sym_{(k,r)}$ of the energy operator $H$. The each subspace $\sH_E$ (level of matter spectrum) is a space of irreducible representation $\boldsymbol{\tau}_{k/2,r/2}$ belonging to a system of interlocking representations of the Lorentz group. The first cell of spinorial chessboard of second order is marked by dotted lines.}\end{minipage}\end{center}
\end{figure}

Further, introducing operators of the form (`rising' and `lowering' operators of the group $\SL(2,\C)$)
\begin{equation}\label{SL26}
\left.\begin{array}{cc}
\sX_+=\sX_1+i\sX_2, & \sX_-=\sX_1-i\sX_2,\\[0.1cm]
\sY_+=\sY_1+i\sY_2, & \sY_-=\sY_1-i\sY_2,
\end{array}\right\}
\end{equation}
we see that
\[
\ld\sX_3,\sX_+\rd=\sX_+,\quad\ld\sX_3,\sX_-\rd=-\sX_-,\quad\ld\sX_+,\sX_-\rd=2\sX_3,
\]
\[
\ld\sY_3,\sY_+\rd=\sY_+,\quad\ld\sY_3,\sY_-\rd=-\sY_-,\quad\ld\sY_+,\sY_-\rd=2\sY_3.
\]
In virtue of commutativity of the relations (\ref{Com2}) a space of an irreducible finite-dimensional representation of the group $\SL(2,\C)$ can be spanned on the totality of
$(2l+1)(2\dot{l}+1)$ basis ket-vectors $|l,m;\dot{l},\dot{m}\rangle$ and basis bra-vectors
$\langle l,m;\dot{l},\dot{m}|$, where $l,m,\dot{l},\dot{m}$ are integer
or half-integer numbers, $-l\leq m\leq l$, $-\dot{l}\leq
\dot{m}\leq \dot{l}$. Therefore,
\begin{eqnarray}
&&\sX_-|l,m;\dot{l},\dot{m}\rangle= \sqrt{(l+m)(l-m+1)}|l,m-1;\dot{l},\dot{m}\rangle
\;\;(m>-l),\nonumber\\
&&\sX_+|l,m;\dot{l},\dot{m}\rangle= \sqrt{(l-m)(l+m+1)}|l,m+1;\dot{l},\dot{m}\rangle
\;\;(m<l),\nonumber\\
&&\sX_3|l,m;\dot{l},\dot{m}\rangle=
m|l,m;\dot{l},\dot{m}\rangle,\nonumber\\
&&\langle l,m;\dot{l},\dot{m}|\sY_-=
\langle l,m;\dot{l},\dot{m}-1|\sqrt{(\dot{l}+\dot{m})(\dot{l}-\dot{m}+1)}\;\;(\dot{m}>-\dot{l}),\nonumber\\
&&\langle l,m;\dot{l},\dot{m}|\sY_+=
\langle l,m;\dot{l},\dot{m}+1|\sqrt{(\dot{l}-\dot{m})(\dot{l}+\dot{m}+1)}\;\;(\dot{m}<\dot{l}),\nonumber\\
&&\langle l,m;\dot{l},\dot{m}|\sY_3= \langle l,m;\dot{l},\dot{m}|\dot{m}.\label{Waerden}
\end{eqnarray}

Thus, a complex envelope of the group algebra $\mathfrak{sl}(2,\C)$, generating complex momentum, leads to a \textit{duality} which is mirrored in the appearance of the two spaces: a space of ket-vectors $|l,m;\dot{l},\dot{m}\rangle$ and a dual space of bra-vectors $\langle l,m;\dot{l},\dot{m}|$.
\subsection{Spinor structure}
It is well-known \cite{GMS,RF} that spintensor representations of the
group $\SL(2;\C)\simeq\spin_+(1,3)$ form a ground of
all  finite-dimensional representations of the Lorentz group. We will consider a relationship of these representations with the complex Clifford
algebras. From the each complex Clifford algebra
$\C_n=\C\otimes\cl_{p,q}\; (n=p+q)$ we obtain a spinspace
$\dS_{2^{n/2}}$ which is a complexification of the minimal left
ideal of the algebra $\cl_{p,q}$: $\dS_{2^{n/2}}=\C\otimes
I_{p,q}=\C\otimes\cl_{p,q} f_{pq}$, where $f_{pq}$ is a primitive
idempotent of the algebra $\cl_{p,q}$. Further, a spinspace, related
with the biquaternion algebra $\C_2$, has the form $\dS_2=\C\otimes
I_{2,0}=\C\otimes\cl_{2,0}f_{20}$ or $\dS_2=\C\otimes
I_{1,1}=\C\otimes\cl_{1,1}f_{11}(\C\otimes I_{0,2}=
\C\otimes\cl_{0,2}f_{02})$. Therefore, the tensor product of $k$
algebras $\C_2$ induces the tensor product of $k$ spinspaces
$\dS_2$: $\dS_2\otimes\dS_2\otimes\cdots\otimes\dS_2=\dS_{2^k}$.
Vectors of the spinspace $\dS_{2^k}$ (or elements of the minimal
left ideal of $\C_{2k}$) are spintensors of the following form:
\begin{equation}\label{6.16}
\boldsymbol{s}^{\alpha_1\alpha_2\cdots\alpha_k}=\sum
\boldsymbol{s}^{\alpha_1}\otimes
\boldsymbol{s}^{\alpha_2}\otimes\cdots\otimes
\boldsymbol{s}^{\alpha_k},
\end{equation}
where summation is produced on all the index collections
$(\alpha_1\ldots\alpha_k)$, $\alpha_i=1,2$.

Further, let $\overset{\ast}{\C}_2$ be the biquaternion algebra with the
coefficients which are complex conjugate to the coefficients of
$\C_2$. The tensor product
$\overset{\ast}{\C}_2\otimes\overset{\ast}{\C}_2\otimes\cdots\otimes
\overset{\ast}{\C}_2\simeq\overset{\ast}{\C}_{2r}$ of the $r$
algebras $\overset{\ast}{\C}_2$ induces tensor product of the
$r$ spinspaces $\dot{\dS}_2$:
$\dot{\dS}_2\otimes\dot{\dS}_2\otimes\cdots\otimes\dot{\dS}_2=\dot{\dS}_{2^r}$. Vectors of the spinspace $\dot{\dS}_{2^r}$ have the form
\begin{equation}\label{6.18}
\boldsymbol{s}^{\dot{\alpha}_1\dot{\alpha}_2\cdots\dot{\alpha}_r}=\sum
\boldsymbol{s}^{\dot{\alpha}_1}\otimes
\boldsymbol{s}^{\dot{\alpha}_2}\otimes\cdots\otimes
\boldsymbol{s}^{\dot{\alpha}_r}.
\end{equation}

In general case we have a tensor product of the $k$ algebras $\C_2$
and the $r$ algebras $\overset{\ast}{\C}_2$:
\begin{equation}\label{TenAlg}
\underbrace{\C_2\otimes\C_2\otimes\cdots\otimes\C_2}_{k\;\text{times}}\bigotimes
\underbrace{\overset{\ast}{\C}_2\otimes\overset{\ast}{\C}_2\otimes\cdots\otimes
\overset{\ast}{\C}_2}_{r\;\text{times}}\simeq
\C_{2k}\otimes\overset{\ast}{\C}_{2r},
\end{equation}
which induces a spinspace
\begin{equation}\label{SpinSpace}
\underbrace{\dS_2\otimes\dS_2\otimes\cdots\otimes\dS_2}_{k\;\text{times}}\bigotimes
\underbrace{\dot{\dS}_2\otimes\dot{\dS}_2\otimes\cdots\otimes\dot{\dS}_2}_{r\;\text{times}}
=\dS_{2^{k+r}}
\end{equation}
with the vectors
\[
\boldsymbol{S}=\boldsymbol{s}^{\alpha_1\alpha_2\ldots\alpha_k\dot{\alpha}_1\dot{\alpha}_2\ldots
\dot{\alpha}_r}=\sum \boldsymbol{s}^{\alpha_1}\otimes
\boldsymbol{s}^{\alpha_2}\otimes\cdots\otimes
\boldsymbol{s}^{\alpha_k}\otimes
\boldsymbol{s}^{\dot{\alpha}_1}\otimes
\boldsymbol{s}^{\dot{\alpha}_2}\otimes\cdots\otimes
\boldsymbol{s}^{\dot{\alpha}_r}.
\]
Further, any pair of substitutions
\[
\alpha=\begin{pmatrix} 1 & 2 & \ldots & k\\
\alpha_1 & \alpha_2 & \ldots & \alpha_k\end{pmatrix},\quad \beta=\begin{pmatrix} 1 & 2 & \ldots & r\\
\dot{\alpha}_1 & \dot{\alpha}_2 & \ldots & \dot{\alpha}_r\end{pmatrix}
\]
defines a transformation $(\alpha,\beta)$ mapping $\boldsymbol{S}$ to the following polynomial:
\[
P_{\alpha\beta}\boldsymbol{S}=\boldsymbol{s}^{\alpha\left(\alpha_1\right)\alpha\left(\alpha_2\right)\ldots
\alpha\left(\alpha_k\right)\beta\left(\dot{\alpha}_1\right)\beta\left(\dot{\alpha}_2\right)\ldots
\beta\left(\dot{\alpha}_r\right)}.
\]
The spintensor $\boldsymbol{S}$ is called a \emph{symmetric spintensor} if at any $\alpha$, $\beta$ the equality
\[
P_{\alpha\beta}\boldsymbol{S}=\boldsymbol{S}
\]
holds. The space $\Sym_{(k,r)}$ of symmetric spintensors has the dimensionality
\begin{equation}\label{Degree}
\dim\Sym_{(k,r)}=(k+1)(r+1).
\end{equation}
The dimensionality of $\Sym_{(k,r)}$ is called a \emph{degree of the representation} $\boldsymbol{\tau}_{l\dot{l}}$ of the group $\SL(2,\C)$. It is easy to see that for the group $\SL(2,\C)$ there are representations of \textbf{\emph{any degree}} (in contrast to $\SU(3)$, $\SU(6)$ and other groups of internal symmetries, see \cite{Var15}). For the each $A\in\SL(2,\C)$ we define a linear transformation of the spintensor $\boldsymbol{s}$ via the formula
\begin{equation}\label{SPT}
\boldsymbol{s}^{\alpha_1\alpha_2\ldots\alpha_k\dot{\alpha}_1\dot{\alpha}_2\ldots
\dot{\alpha}_r}\longrightarrow\sum_{\left(\beta\right)\left(\dot{\beta}\right)}A^{\alpha_1\beta_1}A^{\alpha_2\beta_2}
\cdots A^{\alpha_k\beta_k}\overline{A}^{\dot{\alpha}_1\dot{\beta}_1}\overline{A}^{\dot{\alpha}_2\dot{\beta}_2}
\cdots\overline{A}^{\dot{\alpha}_r\dot{\beta}_r}\boldsymbol{s}^{\beta_1\beta_2\ldots\beta_k\dot{\beta}_1\dot{\beta}_2\ldots
\dot{\beta}_r},
\end{equation}
where the symbols $\left(\beta\right)$ and $\left(\dot{\beta}\right)$ mean $\beta_1$, $\beta_2$, $\ldots$, $\beta_k$ and $\dot{\beta}_1$, $\dot{\beta}_2$, $\ldots$, $\dot{\beta}_r$. Spintensor representations of $\SL(2,\C)$, defined by (\ref{SPT}), act in the spinspace $\dS_{2^{k+r}}$ of dimension $2^{k+r}$. As a rule, these representations are reducible into symmetric and antisymmetric parts. We will separate subspaces $\Sym_{(k,r)}\subset\dS_{2^{k+r}}$ of symmetric spintensors. Representations of $\SL(2,\C)$ in $\Sym_{(k,r)}$ form a \textit{full} system of irreducible representations of this group. These representations of $\SL(2,\C)$ we denote as $\boldsymbol{\tau}_{\frac{k}{2},\frac{r}{2}}=\boldsymbol{\tau}_{l\dot{l}}$. The each \emph{irreducible} finite dimensional representation of $\SL(2,\C)$ is equivalent to one from $\boldsymbol{\tau}_{k/2,r/2}$. Further, tensor products (\ref{TenAlg}) and (\ref{SpinSpace}) generate
\begin{equation}\label{TenRep}
\underbrace{\boldsymbol{\tau}_{\frac{1}{2},0}\otimes\boldsymbol{\tau}_{\frac{1}{2},0}\otimes\cdots\otimes
\boldsymbol{\tau}_{\frac{1}{2},0}}_{k\;\text{times}}\bigotimes
\underbrace{\boldsymbol{\tau}_{0,\frac{1}{2}}\otimes\boldsymbol{\tau}_{0,\frac{1}{2}}\otimes\cdots\otimes
\boldsymbol{\tau}_{0,\frac{1}{2}}}_{r\;\text{times}}\simeq\boldsymbol{\tau}_{\frac{k}{2},0}\otimes
\boldsymbol{\tau}_{0,\frac{r}{2}}=\boldsymbol{\tau}_{\frac{k}{2},\frac{r}{2}}
\end{equation}
and
\begin{equation}\label{TenSym}
\underbrace{\Sym_{(1,0)}\otimes\Sym_{(1,0)}\otimes\cdots\otimes\Sym_{(1,0)}}_{k\;\text{times}}\bigotimes
\underbrace{\Sym_{(0,1)}\otimes\Sym_{(0,1)}\otimes\cdots\otimes\Sym_{(0,1)}}_{r\;\text{times}}=\Sym_{(k,r)}.
\end{equation}

When the matrices $A$ are unitary and unimodular we come to the subgroup $\SU(2)$ of $\SL(2,\C)$. Irreducible representations of $\SU(2)$ are equivalent to one from the mappings $A\rightarrow\boldsymbol{\tau}_{k/2,0}(A)$ with $A\in\SU(2)$, they are denoted as $\boldsymbol{\tau}_{k/2}$. The representation of $\SU(2)$, obtained at the contraction $A\rightarrow\boldsymbol{\tau}_{k/2,r/2}(A)$ onto $A\in\SU(2)$, is not irreducible. In fact, it is a direct product of $\boldsymbol{\tau}_{k/2}$ by $\boldsymbol{\tau}_{r/2}$, therefore, in virtue of the Clebsh-Gordan decomposition we have here a sum of representations
\[
\boldsymbol{\tau}_{\frac{k+r}{2}},\quad\boldsymbol{\tau}_{\frac{k+r}{2}-1},\quad\ldots,\quad
\boldsymbol{\tau}_{\left|\frac{k-r}{2}\right|}.
\]

\subsection{Description of charge. Pseudoautomorphism of spinor structure}
In the present form of quantum theory complex fields correspond to charged particles \cite{WWW52}.
As a spin, charge is an intrinsic notion belonging to the level of spinor structure (spintensor substrate). Pre-image of charge conjugation $C$ on this level is a pseudoautomorphism $\cA\rightarrow\overline{\cA}$ of spinor structure \cite{Var01a,Var04,Var14}. Tensor substrate (\ref{TenAlg}) corresponds to a \textit{complex spinor structure}.
In accordance with \cite{Var01a}, Clifford algebras over the field $\F=\C$ correspond to \textbf{\emph{charged particles}}, such as electron, proton and so on. Coming to a \textit{real spinor structure}, which, obviously, is a substructure of the substrate (\ref{TenAlg}), we see that transformation $C$ (pseudoautomorphism
$\cA\rightarrow\overline{\cA}$) for the algebras $\cl_{p,q}$ over the field $\F=\R$ and the ring $\K\simeq\BH$ (the types $p-q\equiv
4,6\pmod{8}$) is reduced to \textbf{\emph{particle-antiparticle interchange}} $C^\prime$ (see \cite{Var01a,Var04}). As is known, neutral particles are described within real representations of the Lorentz group. There are two classes of neutral particles: 1) particles which have antiparticles, such as neutrons, neutrino\footnote{However, it should be noted that the question whether neutrinos are Dirac or Majorana particles (truly neutral fermions) is still open (the last hypothesis being preferred by particle physicists).} and so on; 2) particles which coincide with their antiparticles (for example, photons, $\pi^0$-mesons and so on). The first class is described by the algebras
$\cl_{p,q}$ over the field $\F=\R$ with the rings $\K\simeq\BH$ and
$\K\simeq\BH\oplus\BH$ (the types $p-q\equiv 4,6\pmod{8}$ and
$p-q\equiv 5\pmod{8}$), and second class (\textbf{\emph{truly neutral particles}}) is described by the algebras $\cl_{p,q}$ over the field $\F=\R$ with the rings $\K\simeq\R$ and
$\K\simeq\R\oplus\R$ (the types $p-q\equiv 0,2\pmod{8}$ and
$p-q\equiv 1\pmod{8}$) (for more details see
\cite{Var01a,Var04,Var05,Var11,Var14}).

\subsection{Definition of elementary particle}
Hilbert spaces $\bsH^S_{2s+1}\otimes\bsH_\infty$ (spin multiplets) and $\bsH^Q\otimes\bsH_\infty$ (charge multiplets), considered in \cite{Var15}, lead naturally to the following generalization of the abstract Hilbert space. Let
\begin{equation}\label{SQHil}
\bsH^S\otimes\bsH^Q\otimes\bsH_\infty
\end{equation}
be a tensor product of $\bsH_\infty$ and a \emph{spin-charge space} $\bsH^S\otimes\bsH^Q$. State vectors of the space (\ref{SQHil}) describe actualized particle states\footnote{In accordance with Heisenberg-Fock conception \cite{Heisen,Fock}, reality has a two-level structure: \textit{potential reality} and \textit{actual reality}. Heisenberg claimed that any quantum object (for example, elementary particle) belongs to both sides of reality: at first, to potential reality as a superposition, and to actual reality after reduction of superposition (for more details see \cite{Heisen,Fock,Var15b}).} of the spin $s=|l-\dot{l}|$ and charge $Q$ with the mass $m$. Moreover, state vectors are grouped into spin lines: spin-0 line, spin-1/2 line, spin-1 line and so on (bosonic and fermionic lines). The each state vector of $\bsH^S\otimes\bsH^Q\otimes\bsH_\infty$ presents in itself an irreducible representation $\boldsymbol{\tau}_{l\dot{l}}$ of the group $\spin_+(1,3)$ in the space $\Sym_{(k,r)}\simeq\sH_E$ (an eigenvector subspace of the energy operator $H$). Charge $Q$ takes three values $-1$, $0$, $+1$, where the values $-1$, $+1$ correspond to actualized states of charged particles, and the value $0$ corresponds to neutral (or truly neutral) particle states. On the level of underlying spinor structure charged particles are described by the complex representations of $\spin_+(1,3)$, for which the pseudoatomorphism $\cA\rightarrow\overline{\cA}$ is not trivial ($\F=\C$) and an action of $\cA\rightarrow\overline{\cA}$ replaces complex representations (charge state $-1$) by complex conjugate representations (charge state $+1$). The neutral particles (charge state $0$) are described by the real representations of $\spin_+(1,3)$, for which the transformation $\cA\rightarrow\overline{\cA}$ is also not trivial ($\F=\R$, $\K\simeq\BH$), that is, we have here particle-antiparticle interchange. In turn, truly neutral particles are described by the real representations of $\spin_+(1,3)$ for which the action of the pseudoautomorphism $\cA\rightarrow\overline{\cA}$ is trivial ($\F=\R$, $\K\simeq\R$). With the aim to distinguish this case from the neutral particles (state 0) we denote this charge state as $\overline{0}$. Therefore, the spinor structure with the help of $\cA\rightarrow\overline{\cA}$ allows us to separate real representations for neutral (charge state 0) and truly neutral (charge state $\overline{0}$) particles.
Consideration of charge allows us to give the following definition of elementary particle. \textbf{\textit{Particle is a superposition of state vectors in nonseparable Hilbert space  $\bsH^S\otimes\bsH^Q\otimes\bsH_\infty$}}\footnote{A superposition of the state vectors in $\bsH^S\otimes\bsH^Q\otimes\bsH_\infty$ can be considered as a unitary  (infinite-dimensional) representation of some group (a group of fundamental symmetry). For example, in accordance with \cite{Wig39}, a quantum system, described by irreducible unitary representation of the Poincar\'{e} group, is an elementary particle (Wigner interpretation). Of course, the compact groups $\SU(N)$ (groups of secondary dynamical symmetries) can not be considered as groups of fundamental symmetries, since they do not have infinite-dimensional representations. As a fundamental symmetry, except Lorentz (Poincar\'{e}) group, we can take conformal group $\SO_0(2,4)$, and also a Rumer-Fet group $\SU(2,2)\otimes\SU(2)\otimes\SU(2)$ \cite{Fet}, where $\SU(2,2)$ is a universal covering of $\SO_0(2,4)$. All these groups contain Lorentz group as a subgroup.}.
State vectors of the space $\bsH^S\otimes\bsH^Q\otimes\bsH_\infty$ have the form
\begin{equation}\label{VectH}
\left|\boldsymbol{\Psi}\right\rangle=
\left.\left|\boldsymbol{\tau}_{l\dot{l}},\,\Sym_{(k,r)},\,\cl_{p,q},\,\dS_{(p+q)/2},\,C^{a,b,c,d,e,f,g},\,
\right.\ldots\right\rangle,
\end{equation}
where $\boldsymbol{\tau}_{l\dot{l}}$ is a representation of the
proper orthocronous Lorentz group, $\Sym_{(k,r)}$ is a
representation space of $\boldsymbol{\tau}_{l\dot{l}}$ with the degree (\ref{Degree}), $\cl_{p,q}$
is a Clifford algebra associated with
$\boldsymbol{\tau}_{l\dot{l}}$, $\dS_{(p+q)/2}$ is a spinspace
associated with $\cl_{p,q}$, $C^{a,b,c,d,e,f,g}$ is a $CPT$ group
defined within $\cl_{p,q}$. It is obvious that the main object,
defining the vectors of the type (\ref{VectH}), is the
representation $\boldsymbol{\tau}_{l\dot{l}}$. Objects
$\cl_{p,q}$, $\dS_{(p+q)/2}$ and $C^{a,b,c,d,e,f,g}$ belong to spinor structure. In
other words, the representation $\boldsymbol{\tau}_{l\dot{l}}$
corresponds to the each vector of $\bsH^S\otimes\bsH^Q\otimes\bsH_\infty$, and a set of all possible
representations $\boldsymbol{\tau}_{l\dot{l}}$ generates the all
spin-charge Hilbert space $\bsH^S\otimes\bsH^Q\otimes\bsH_\infty$ with the vectors (\ref{VectH}).
\subsection{Relativistic wave equations}
In 1945, Bhabha \cite{Bha45} introduced relativistic wave equations
\begin{equation}\label{Bha}
i\Gamma_\mu\frac{\partial\psi}{\partial x_\mu}+m\psi=0,\quad\mu=0,1,2,3
\end{equation}
that describe systems with many masses and spins. The $\Gamma_\mu$ satisfy different commutation rules depending on the spin of the particle described ($\Gamma_\mu$ are square matrices of different degrees in each case). Bhabha claimed that the fundamental equations of the elementary particles must be first-order equations of the form (\ref{Bha}) and that all properties of the particles must be derivable from these without use of any further subsidiary conditions. It should be noted that Majorana \cite{Maj32} introduced relativistic wave equations for arbitrary spins of the same type in 1932 (including infinite-component equations), see excellent review \cite{Esp12}. In 1948, Gel'fand and Yaglom \cite{GY48} developed a general theory of such (Majorana-Bhabha) equations including infinite-component equations of Majorana type \cite{Maj32}.

The Bhabha equations (\ref{Bha}) are defined in the Minkowski space-time $\R^{1,3}$. These equations present an attempt to describe elementary particles in terms of space-time, that is, within Schr\"{o}dinger picture. With the aim to obtain an analogue of (\ref{Bha}) in the underlying spinor structure (or, within Heisenberg picture) we use a mapping into bivector space $\R^6$ (for more details see \cite{Var03,Var07}).  After the mapping of (\ref{Bha}) onto $\R^6$ we obtain the following system:
\[
\sum^3_{j=1}\sL^{l,0}_j
\frac{\partial\boldsymbol{\psi}}{\partial a_j}+
i\sum^3_{j=1}\sL^{l,0}_j
\frac{\partial\boldsymbol{\psi}}{\partial a^\ast_j}+
m^{(s)}\boldsymbol{\psi}=0,
\]
\begin{equation}\label{BS1}
\sum^3_{j=1}\overset{\ast}{\sL}{}^{0,\dot{l}}_j\frac{\dot{\boldsymbol{\psi}}}
{\partial\widetilde{a}_j}-
i\sum^3_{j=1}\overset{\ast}{\sL}{}^{0,\dot{l}}_j\frac{\dot{\boldsymbol{\psi}}}
{\partial\widetilde{a}^\ast_j}+m^{(s)}
\dot{\boldsymbol{\psi}}=0,
\end{equation}
where $a_1=\fg_1$, $a_2=\fg_2$, $a_3=\fg_3$, $ia_1=\fg_4$,
$ia_2=\fg_5$, $ia_3=\fg_6=$, $a^\ast_1=-i\fg_4$, $a^\ast_2=-i\fg_5$,
$a^\ast_3=-i\fg_6$, and $\widetilde{a}_j$, $\widetilde{a}^\ast_j$
are the parameters corresponding to the dual basis, $\fg_i\in\spin_+(1,3)$. These equations describe a field of type $(l,0)\oplus(0,\dot{l})$ and mass $m$ (these fields belong to the edges of the representation cone shown on the Fig.\,1). Non-null elements of the matrices $\sL^{l,0}_j$ and $\overset{\ast}{\sL}{}^{0,\dot{l}}_j$ have the form \cite{Var03}
\begin{equation}\label{L3}
{\renewcommand{\arraystretch}{1.1}
\sL^{l,0}_3:\quad\left\{\begin{array}{ccc} c^{k^\prime k}_{l-1,l,m}&=&
c^{k^\prime k}_{l-1,l}\sqrt{l^2-m^2},\\
c^{k^\prime k}_{l,l,m}&=&c^{k^\prime k}_{ll}m,\\
c^{k^\prime k}_{l+1,l,m}&=& c^{k^\prime
k}_{l+1,l}\sqrt{(l+1)^2-m^2}.
\end{array}\right.}
\end{equation}
\begin{equation}\label{L1}
{\renewcommand{\arraystretch}{1.25}
\sL^{l,0}_1:\quad\left\{\begin{array}{ccc} a^{k^\prime k}_{l-1,l,m-1,m}&=&
-\dfrac{c_{l-1,l}}{2}\sqrt{(l+m)(l+m-1)},\\
a^{k^\prime k}_{l,l,m-1,m}&=&
\dfrac{c_{ll}}{2}\sqrt{(l+m)(l-m+1)},\\
a^{k^\prime k}_{l+1,l,m-1,m}&=&
\dfrac{c_{l+1,l}}{2}\sqrt{(l-m+1)(l-m+2)},\\
a^{k^\prime k}_{l-1,l,m+1,m}&=&
\dfrac{c_{l-1,l}}{2}\sqrt{(l-m)(l-m-1)},\\
a^{k^\prime k}_{l,l,m+1,m}&=&
\dfrac{c_{ll}}{2}\sqrt{(l+m+1)(l-m)},\\
a^{k^\prime k}_{l+1,l,m+1,m}&=&
-\dfrac{c_{l+1,l}}{2}\sqrt{(l+m+1)(l+m+2)}.
\end{array}\right.}
\end{equation}
\begin{equation}\label{L2}
{\renewcommand{\arraystretch}{1.25}
\sL^{l,0}_2:\quad\left\{\begin{array}{ccc} b^{k^\prime k}_{l-1,l,m-1,m}&=&
-\dfrac{ic_{l-1,l}}{2}\sqrt{(l+m)(l+m-1)},\\
b^{k^\prime k}_{l,l,m-1,m}&=&
\dfrac{ic_{ll}}{2}\sqrt{(l+m)(l-m+1)},\\
b^{k^\prime k}_{l+1,l,m-1,m}&=&
\dfrac{ic_{l+1,l}}{2}\sqrt{(l-m+1)(l-m+2)},\\
b^{k^\prime k}_{l-1,l,m+1,m}&=&
-\dfrac{ic_{l-1,l}}{2}\sqrt{(l-m)(l-m-1)},\\
b^{k^\prime k}_{l,l,m+1,m}&=&
-\dfrac{ic_{ll}}{2}\sqrt{(l+m+1)(l-m)},\\
b^{k^\prime k}_{l+1,l,m+1,m}&=& \dfrac{ic_{l+1,l}}{2}\sqrt{(l+m+1)(l+m+2)}.
\end{array}\right.}
\end{equation}
\begin{equation}\label{L1'}
{\renewcommand{\arraystretch}{1.25}
\overset{\ast}{\sL}{}^{0,\dot{l}}_1:\quad\left\{\begin{array}{ccc}
d^{\dot{k}^\prime\dot{k}}_{\dot{l}-1,\dot{l},\dot{m}-1,\dot{m}}&=&
-\dfrac{c_{\dot{l}-1,\dot{l}}}{2}
\sqrt{(\dot{l}+\dot{m})(\dot{l}-\dot{m}-1)},\\
d^{\dot{k}^\prime\dot{k}}_{\dot{l},\dot{l},\dot{m}-1,\dot{m}}&=&
\dfrac{c_{\dot{l}\dot{l}}}{2}
\sqrt{(\dot{l}+\dot{m})(\dot{l}-\dot{m}+1)},\\
d^{\dot{k}^\prime\dot{k}}_{\dot{l}+1,\dot{l},\dot{m}-1,\dot{m}}&=&
\dfrac{c_{\dot{l}+1,\dot{l}}}{2}
\sqrt{(\dot{l}-\dot{m}+1)(\dot{l}-\dot{m}+2)},\\
d^{\dot{k}^\prime\dot{k}}_{\dot{l}-1,\dot{l},\dot{m}+1,\dot{m}}&=&
\dfrac{c_{\dot{l}-1,\dot{l}}}{2}
\sqrt{(\dot{l}-\dot{m})(\dot{l}-\dot{m}-1)},\\
d^{\dot{k}^\prime\dot{k}}_{\dot{l},\dot{l},\dot{m}+1,\dot{m}}&=&
\dfrac{c_{\dot{l}\dot{l}}}{2}
\sqrt{(\dot{l}+\dot{m}+1)(\dot{l}-\dot{m})},\\
d^{\dot{k}^\prime\dot{k}}_{\dot{l}+1,\dot{l},\dot{m}+1,\dot{m}}&=&
-\dfrac{c_{\dot{l}+1,\dot{l}}}{2}
\sqrt{(\dot{l}+\dot{m}+1)(\dot{l}+\dot{m}+2)}.
\end{array}\right.}
\end{equation}
\begin{equation}\label{L2'}
{\renewcommand{\arraystretch}{1.25}
\overset{\ast}{\sL}{}^{0,\dot{l}}_2:\quad\left\{\begin{array}{ccc}
e^{\dot{k}^\prime\dot{k}}_{\dot{l}-1,\dot{l},\dot{m}-1,\dot{m}}&=&
-\dfrac{ic_{\dot{l}-1,\dot{l}}}{2}
\sqrt{(\dot{l}+\dot{m})(\dot{l}-\dot{m}-1)},\\
e^{\dot{k}^\prime\dot{k}}_{\dot{l},\dot{l},\dot{m}-1,\dot{m}}&=&
\dfrac{ic_{\dot{l}\dot{l}}}{2}
\sqrt{(\dot{l}+\dot{m})(\dot{l}-\dot{m}+1)},\\
e^{\dot{k}^\prime\dot{k}}_{\dot{l}+1,\dot{l},\dot{m}-1,\dot{m}}&=&
\dfrac{ic_{\dot{l}+1,\dot{l}}}{2}
\sqrt{(\dot{l}-\dot{m}+1)(\dot{l}-\dot{m}+2)},\\
e^{\dot{k}^\prime\dot{k}}_{\dot{l}-1,\dot{l},\dot{m}+1,\dot{m}}&=&
\dfrac{-ic_{\dot{l}-1,\dot{l}}}{2}
\sqrt{(\dot{l}-\dot{m})(\dot{l}-\dot{m}-1)},\\
e^{\dot{k}^\prime\dot{k}}_{\dot{l},\dot{l},\dot{m}+1,\dot{m}}&=&
\dfrac{-ic_{\dot{l}\dot{l}}}{2}
\sqrt{(\dot{l}+\dot{m}+1)(\dot{l}-\dot{m})},\\
e^{\dot{k}^\prime\dot{k}}_{\dot{l}+1,\dot{l},\dot{m}+1,\dot{m}}&=&
-\dfrac{ic_{\dot{l}+1,\dot{l}}}{2}
\sqrt{(\dot{l}+\dot{m}+1)(\dot{l}+\dot{m}+2)}.
\end{array}\right.}
\end{equation}
\begin{equation}\label{L3'}
{\renewcommand{\arraystretch}{1.25}
\overset{\ast}{\sL}{}^{0,\dot{l}}_3:\quad\left\{\begin{array}{ccc}
f^{\dot{k}^\prime\dot{k}}_{\dot{l}-1,\dot{l},\dot{m}}&=&
c^{\dot{k}^\prime\dot{k}}_{\dot{l}-1,\dot{l}}
\sqrt{\dot{l}^2-\dot{m}^2},\\
f^{\dot{k}^\prime\dot{k}}_{\dot{l}\dot{l},\dot{m}}&=&
c^{\dot{k}^\prime\dot{k}}_{\dot{l}\dot{l}}\dot{m},\\
f^{\dot{k}^\prime\dot{k}}_{\dot{l}+1,\dot{l},\dot{m}}&=&
c^{\dot{k}^\prime\dot{k}}_{\dot{l}+1,\dot{l}}
\sqrt{(\dot{l}+1)^2-\dot{m}^2}.
\end{array}\right.}
\end{equation}
Solutions of (\ref{BS1}) are defined via series in hyperspherical functions \cite{Var03,Var04e,Var06}
\begin{multline}
\fM^{-\frac{1}{2}+i\rho,l_0}_{mn}(\fg)=e^{-m(\epsilon+i\varphi)-n(\varepsilon+i\psi)}
\sqrt{\frac{\Gamma(l_0+m+1)\Gamma(i\rho-n+\frac{1}{2})}
{\Gamma(l_0-m+1)\Gamma(i\rho+n+\frac{1}{2})}}\times\\
\times\cos^{2l_0}\frac{\theta}{2}\cosh^{-1+2i\rho}\frac{\tau}{2}
\sum^{l_0}_{t=-l_0}
i^{m-t}\tan^{m-t}\frac{\theta}{2}\tanh^{t-n}\frac{\tau}{2}\times\\
\times\hypergeom{2}{1}{m-l_0,-t-l_0}{m-t+1}{-\tan^2\frac{\theta}{2}}
\hypergeom{2}{1}{t-i\rho+\frac{1}{2},-n-i\rho+\frac{1}{2}}{t-n+1}{\tanh^2\frac{\tau}{2}},
\quad m\geq n. \nonumber
\end{multline}

Further, equations for arbitrary spin chains
\[
\boldsymbol{\tau}_{l\dot{l}},\;\boldsymbol{\tau}_{l+\frac{1}{2},\dot{l}-\frac{1}{2}},\;
\boldsymbol{\tau}_{l+1,\dot{l}-1},\;\boldsymbol{\tau}_{l+\frac{3}{2},\dot{l}-\frac{3}{2}},\;\ldots,\;
\boldsymbol{\tau}_{\dot{l}l},
\]
have the form
\[
\sum^3_{j=1}\Lambda^{l\dot{l}}_j\frac{\partial\psi}{\partial a_j}-
i\sum^3_{j=1}\Lambda^{l\dot{l}}_j\frac{\partial\psi}{\partial a^\ast_j}+m^{(s)}\dot{\psi}=0,
\]
\[
\sum^3_{j=1}\Lambda^{l+\frac{1}{2},\dot{l}-\frac{1}{2}}_j\frac{\partial\psi}{\partial a_j}-
i\sum^3_{j=1}\Lambda^{l+\frac{1}{2}\dot{l}-\frac{1}{2}}_j\frac{\partial\psi}{\partial a^\ast_j}+m^{(s)}\dot{\psi}=0,
\]
\[
\sum^3_{j=1}\Lambda^{l+1,\dot{l}-1}_j\frac{\partial\psi}{\partial a_j}-
i\sum^3_{j=1}\Lambda^{l+1\dot{l}-1}_j\frac{\partial\psi}{\partial a^\ast_j}+m^{(s)}\dot{\psi}=0,
\]
\[
\sum^3_{j=1}\Lambda^{l+\frac{3}{2},\dot{l}-\frac{3}{2}}_j\frac{\partial\psi}{\partial a_j}-
i\sum^3_{j=1}\Lambda^{l+\frac{3}{2}\dot{l}-\frac{3}{2}}_j\frac{\partial\psi}{\partial a^\ast_j}+m^{(s)}\dot{\psi}=0,
\]
\[
\ldots\ldots\ldots\ldots\ldots\ldots\ldots\ldots\ldots\ldots\ldots\ldots
\]
\begin{equation}\label{BS2}
\sum^3_{j=1}\Lambda^{\dot{l}l}_j\frac{\partial\dot{\psi}}{\partial\widetilde{a}_j}+
i\sum^3_{j=1}\Lambda^{\dot{l}l}_j\frac{\partial\dot{\psi}}{\partial\widetilde{a}^\ast_j}+m^{(s)}\psi=0,
\end{equation}
where the spin $s=l-\dot{l}$ is changed as
\[
l-\dot{l},\;l-\dot{l}+1,\;l-\dot{l}+2,\;l-\dot{l}+3,\,\ldots,\;\dot{l}-l.
\]
The system (\ref{BS2}) describes particles with different masses and spins in $\R^6$. The fields of types $(l,0)$, $(0,\dot{l})$ and $(l,0)\oplus(0,\dot{l})$ are particular cases of these fields. The state mass $m^{(s)}$, corresponding to the energy level $\sH_E\simeq\Sym_{(k,r)}$, is defined by the formula\footnote{Mass spectrum and its dependence of the Lorentz group representations, characterizing by the pair $(l,\dot{l})$, was given in \cite{Var15}. Usually one consider the mass spectrum dependence of spin $s$ only (Lorentz group representation is fixed). As it is known, in this case non-physical mass spectrum arises, $m_i\sim 1/s_i$, that is, for very large $s_i$, masses are very small (the first mass formula of this type was given by Majorana \cite{Maj32} and similar mass formulas were  considered by Gel'fand and Yaglom \cite{GY48}, see detailed historical consideration in \cite{Esp12}). In the paper \cite{Var15} it has been shown that for representations $(l,\dot{l})$ of the Lorentz group the mass is proportional to $(l+1/2)(\dot{l}+1/2)$. Hence it immediately follows that particles (more precisely, particle states) with the \textit{same} spin but \textit{distinct} masses are described by \textit{different} representations of Lorentz group.}
\begin{equation}\label{MGY}
m^{(s)}=\mu^0\left(l+\frac{1}{2}\right)\left(\dot{l}+\frac{1}{2}\right),
\end{equation}
where $s=|l-\dot{l}|$. An explicit form of the elements of $\Lambda^{l\dot{l}}_j$ is derived via calculation of the commutators
\[
\ld\ld\sL^{l\dot{l}}_3,\sX^{l\dot{l}}_-\rd,\sX^{l\dot{l}}_+\rd =
2\sL^{l\dot{l}}_3,\quad
\ld\ld\sL^{l\dot{l}}_3,\sY^{l\dot{l}}_-\rd,\sY^{l\dot{l}}_+\rd=
2\sL^{l\dot{l}}_3,
\]
\begin{equation}\label{Commut}
\ld\sA^{l\dot{l}}_2,\sL^{l\dot{l}}_3\rd=\sL^{l\dot{l}}_1,\quad
\ld\sA^{l\dot{l}}_1,\sL^{l\dot{l}}_3\rd=-\sL^{l\dot{l}}_2
\end{equation}
with respect to ket- and bra-vectors of the helicity basis (for more details see \cite{Var07}), here $\sA^{l\dot{l}}_i=\sA^l_i\otimes\boldsymbol{1}_{2\dot{l}+1}-
\boldsymbol{1}_{2l+1}\otimes\sA^{\dot{l}}_i$. Non-null elements of the matrices $\Lambda^{l\dot{l}}_j$ have the form \cite{Var07}
\begin{equation}\label{L3T}
{\renewcommand{\arraystretch}{1.5}
\sL^{l\dot{l}}_3:\quad\left\{\begin{array}{ccc} c^{k^\prime
k;\dot{k}^\prime\dot{k}}_{l-1,l,m;\dot{l}-1,\dot{l},\dot{m}}
&=&c^{k^\prime k;\dot{k}^\prime\dot{k}}_{l-1,l;\dot{l}-1,\dot{l}}
\sqrt{(l^2-m^2)(\dot{l}^2-\dot{m}^2)},\\
c^{k^\prime
k;\dot{k}^\prime\dot{k}}_{ll,m;\dot{l}\dot{l},\dot{m}}&=&
c^{k^\prime k;\dot{k}^\prime\dot{k}}_{ll;\dot{l}\dot{l}}m\dot{m},\\
c^{k^\prime
k;\dot{k}^\prime\dot{k}}_{l+1,l,m;\dot{l}+1,\dot{l},\dot{m}}
&=&c^{k^\prime k;\dot{k}^\prime\dot{k}}_{l+1,l;\dot{l}+1,\dot{l}}
\sqrt{((l+1)^2-m^2)((\dot{l}+1)^2-\dot{m}^2)}.
\end{array}\right.}
\end{equation}
$\Lambda^{l\dot{l}}_1$:
\begin{eqnarray}
a^{k^\prime
k;\dot{k}^\prime\dot{k}}_{l-1,l,m-1,m;\dot{l}-1,\dot{l},\dot{m}
\dot{m}}&=& -\frac{1}{2}c^{k^\prime
k;\dot{k}^\prime\dot{k}}_{l-1,l;\dot{l}-1,\dot{l}}
\sqrt{(l+m)(l+m-1)(\dot{l}^2-\dot{m}^2)},\nonumber\\
a^{k^\prime
k;\dot{k}^\prime\dot{k}}_{ll,m-1,m;\dot{l}\dot{l},\dot{m}\dot{m}}
&=& \frac{1}{2}c^{k^\prime
k;\dot{k}^\prime\dot{k}}_{ll;\dot{l}\dot{l}}
\dot{m}\sqrt{(l+m)(l-m+1)},\nonumber\\
a^{k^\prime
k;\dot{k}^\prime\dot{k}}_{l+1,l,m-1,m;\dot{l}+1,\dot{l},\dot{m}
\dot{m}}&=& \frac{1}{2}c^{k^\prime
k;\dot{k}^\prime\dot{k}}_{l+1,l;\dot{l}+1,\dot{l}}
\sqrt{(l-m+1)(l-m+2)((\dot{l}+1)^2-\dot{m}^2)},\nonumber
\end{eqnarray}
\begin{eqnarray}
a^{k^\prime
k;\dot{k}^\prime\dot{k}}_{l-1,l,m+1,m;\dot{l}-1,\dot{l},\dot{m}
\dot{m}}&=& \frac{1}{2}c^{k^\prime
k;\dot{k}^\prime\dot{k}}_{l-1,l;\dot{l}-1,\dot{l}}
\sqrt{(l-m)(l-m-1)(\dot{l}^2-\dot{m}^2)},\nonumber\\
a^{k^\prime
k;\dot{k}^\prime\dot{k}}_{ll,m+1,m;\dot{l}\dot{l},\dot{m}
\dot{m}}&=& \frac{1}{2}c^{k^\prime
k;\dot{k}^\prime\dot{k}}_{ll;\dot{l}\dot{l}}
\dot{m}\sqrt{(l+m+1)(l-m)},\nonumber\\
a^{k^\prime
k;\dot{k}^\prime\dot{k}}_{l+1,l,m+1,m;\dot{l}+1,\dot{l},\dot{m}
\dot{m}}&=& -\frac{1}{2}c^{k^\prime
k;\dot{k}^\prime\dot{k}}_{l+1,l;\dot{l}+1,\dot{l}}
\sqrt{(l+m+1)(l+m+2)((\dot{l}+1)^2-\dot{m}^2)},\nonumber
\end{eqnarray}
\begin{eqnarray}
a^{k^\prime
k;\dot{k}^\prime\dot{k}}_{l-1,l,m,m;\dot{l}-1,\dot{l},\dot{m}-1,
\dot{m}}&=& \frac{1}{2}c^{k^\prime
k;\dot{k}^\prime\dot{k}}_{l-1,l;\dot{l}-1,\dot{l}}
\sqrt{(l^2-m^2)(\dot{l}+\dot{m})(\dot{l}+\dot{m}-1)},\nonumber\\
a^{k^\prime
k;\dot{k}^\prime\dot{k}}_{ll,m,m;\dot{l}\dot{l},\dot{m}-1,\dot{m}}
&=& -\frac{1}{2}c^{k^\prime
k;\dot{k}^\prime\dot{k}}_{ll;\dot{l}\dot{l}}
m\sqrt{(\dot{l}+\dot{m})(\dot{l}-\dot{m}+1)},\nonumber\\
a^{k^\prime
k;\dot{k}^\prime\dot{k}}_{l+1,l,m,m;\dot{l}+1,\dot{l},\dot{m}-1,
\dot{m}}&=& -\frac{1}{2}c^{k^\prime
k;\dot{k}^\prime\dot{k}}_{l+1,l;\dot{l}+1,\dot{l}}
\sqrt{((l+1)^2-m^2)(\dot{l}-\dot{m}+1)(\dot{l}-\dot{m}+2)},\nonumber
\end{eqnarray}
\begin{eqnarray}
a^{k^\prime
k;\dot{k}^\prime\dot{k}}_{l-1,l,m,m;\dot{l}-1,\dot{l},\dot{m}+1,
\dot{m}}&=& -\frac{1}{2}c^{k^\prime
k;\dot{k}^\prime\dot{k}}_{l-1,l;\dot{l}-1,\dot{l}}
\sqrt{(l^2-m^2)(\dot{l}-\dot{m})(\dot{l}-\dot{m}-1)},\nonumber\\
a^{k^\prime
k;\dot{k}^\prime\dot{k}}_{ll,m,m;\dot{l}\dot{l},\dot{m}+1,
\dot{m}}&=& -\frac{1}{2}c^{k^\prime
k;\dot{k}^\prime\dot{k}}_{ll;\dot{l}\dot{l}}
m\sqrt{(\dot{l}+\dot{m}+1)(\dot{l}-\dot{m})},\nonumber\\
a^{k^\prime
k;\dot{k}^\prime\dot{k}}_{l+1,l,m,m;\dot{l}+1,\dot{l},\dot{m}+1,
\dot{m}}&=& \frac{1}{2}c^{k^\prime
k;\dot{k}^\prime\dot{k}}_{l+1,l;\dot{l}+1,\dot{l}}
\sqrt{((l+1)^2-m^2)(\dot{l}+\dot{m}+1)(\dot{l}+\dot{m}+2)}.\label{L1T}
\end{eqnarray}
$\sL^{l\dot{l}}_2$:
\begin{eqnarray}
b^{k^\prime
k;\dot{k}^\prime\dot{k}}_{l-1,l,m-1,m;\dot{l}-1,\dot{l},\dot{m}
\dot{m}}&=& -\frac{i}{2}c^{k^\prime
k;\dot{k}^\prime\dot{k}}_{l-1,l;\dot{l}-1,\dot{l}}
\sqrt{(l+m)(l+m-1)(\dot{l}^2-\dot{m}^2)},\nonumber\\
b^{k^\prime
k;\dot{k}^\prime\dot{k}}_{ll,m-1,m;\dot{l}\dot{l},\dot{m}\dot{m}}
&=& \frac{i}{2}c^{k^\prime
k;\dot{k}^\prime\dot{k}}_{ll;\dot{l}\dot{l}}
\dot{m}\sqrt{(l+m)(l-m+1)},\nonumber\\
b^{k^\prime
k;\dot{k}^\prime\dot{k}}_{l+1,l,m-1,m;\dot{l}+1,\dot{l},\dot{m}
\dot{m}}&=& \frac{i}{2}c^{k^\prime
k;\dot{k}^\prime\dot{k}}_{l+1,l;\dot{l}+1,\dot{l}}
\sqrt{(l-m+1)(l-m+2)((\dot{l}+1)^2-\dot{m}^2)},\nonumber
\end{eqnarray}
\begin{eqnarray}
b^{k^\prime
k;\dot{k}^\prime\dot{k}}_{l-1,l,m+1,m;\dot{l}-1,\dot{l},\dot{m}
\dot{m}}&=& -\frac{i}{2}c^{k^\prime
k;\dot{k}^\prime\dot{k}}_{l-1,l;\dot{l}-1,\dot{l}}
\sqrt{(l-m)(l-m-1)(\dot{l}^2-\dot{m}^2)},\nonumber\\
b^{k^\prime
k;\dot{k}^\prime\dot{k}}_{ll,m+1,m;\dot{l}\dot{l},\dot{m}
\dot{m}}&=& -\frac{i}{2}c^{k^\prime
k;\dot{k}^\prime\dot{k}}_{ll;\dot{l}\dot{l}}
\dot{m}\sqrt{(l+m+1)(l-m)},\nonumber\\
b^{k^\prime
k;\dot{k}^\prime\dot{k}}_{l+1,l,m+1,m;\dot{l}+1,\dot{l},\dot{m}
\dot{m}}&=& \frac{i}{2}c^{k^\prime
k;\dot{k}^\prime\dot{k}}_{l+1,l;\dot{l}+1,\dot{l}}
\sqrt{(l+m+1)(l+m+2)((\dot{l}+1)^2-\dot{m}^2)},\nonumber
\end{eqnarray}
\begin{eqnarray}
b^{k^\prime
k;\dot{k}^\prime\dot{k}}_{l-1,l,m,m;\dot{l}-1,\dot{l},\dot{m}-1,
\dot{m}}&=& \frac{i}{2}c^{k^\prime
k;\dot{k}^\prime\dot{k}}_{l-1,l;\dot{l}-1,\dot{l}}
\sqrt{(l^2-m^2)(\dot{l}+\dot{m})(\dot{l}+\dot{m}-1)},\nonumber\\
b^{k^\prime
k;\dot{k}^\prime\dot{k}}_{ll,m,m;\dot{l}\dot{l},\dot{m}-1,\dot{m}}
&=& -\frac{i}{2}c^{k^\prime
k;\dot{k}^\prime\dot{k}}_{ll;\dot{l}\dot{l}}
m\sqrt{(\dot{l}+\dot{m})(\dot{l}-\dot{m}+1)},\nonumber\\
b^{k^\prime
k;\dot{k}^\prime\dot{k}}_{l+1,l,m,m;\dot{l}+1,\dot{l},\dot{m}-1,
\dot{m}}&=& -\frac{i}{2}c^{k^\prime
k;\dot{k}^\prime\dot{k}}_{l+1,l;\dot{l}+1,\dot{l}}
\sqrt{((l+1)^2-m^2)(\dot{l}-\dot{m}+1)(\dot{l}-\dot{m}+2)},\nonumber
\end{eqnarray}
\begin{eqnarray}
b^{k^\prime
k;\dot{k}^\prime\dot{k}}_{l-1,l,m,m;\dot{l}-1,\dot{l},\dot{m}+1,
\dot{m}}&=& \frac{i}{2}c^{k^\prime
k;\dot{k}^\prime\dot{k}}_{l-1,l;\dot{l}-1,\dot{l}}
\sqrt{(l^2-m^2)(\dot{l}-\dot{m})(\dot{l}-\dot{m}-1)},\nonumber\\
b^{k^\prime
k;\dot{k}^\prime\dot{k}}_{ll,m,m;\dot{l}\dot{l},\dot{m}+1,
\dot{m}}&=& \frac{i}{2}c^{k^\prime
k;\dot{k}^\prime\dot{k}}_{ll;\dot{l}\dot{l}}
m\sqrt{(\dot{l}+\dot{m}+1)(\dot{l}-\dot{m})},\nonumber\\
b^{k^\prime
k;\dot{k}^\prime\dot{k}}_{l+1,l,m,m;\dot{l}+1,\dot{l},\dot{m}+1,
\dot{m}}&=& -\frac{i}{2}c^{k^\prime
k;\dot{k}^\prime\dot{k}}_{l+1,l;\dot{l}+1,\dot{l}}
\sqrt{((l+1)^2-m^2)(\dot{l}+\dot{m}+1)(\dot{l}+\dot{m}+2)}.\label{L2T}
\end{eqnarray}
For the dual basis we have the following non-null elements of $\overset{\ast}{\sL}{}^{l\dot{l}}_j$:
\begin{equation}\label{L3T'}
{\renewcommand{\arraystretch}{1.5}
\overset{\ast}{\sL}{}^{l\dot{l}}_3:\quad\left\{\begin{array}{ccc}
\overset{\ast}{c}{}^{k^\prime k;\dot{k}^\prime\dot{k}}_{l-1,l,m;
\dot{l}-1,\dot{l},\dot{m}} &=&\overset{\ast}{c}{}^{k^\prime
k;\dot{k}^\prime\dot{k}}_{l-1,l;\dot{l}-1,\dot{l}}
\sqrt{(l^2-m^2)(\dot{l}^2-\dot{m}^2)},\\
\overset{\ast}{c}{}^{k^\prime k;\dot{k}^\prime\dot{k}}_{ll,m;
\dot{l}\dot{l},\dot{m}}&=& \overset{\ast}{c}{}^{k^\prime
k;\dot{k}^\prime\dot{k}}_{ll;
\dot{l}\dot{l}}m\dot{m},\\
\overset{\ast}{c}{}^{k^\prime k;\dot{k}^\prime\dot{k}}_{l+1,l,m;
\dot{l}+1,\dot{l},\dot{m}} &=&\overset{\ast}{c}{}^{k^\prime
k;\dot{k}^\prime\dot{k}}_{l+1,l; \dot{l}+1,\dot{l}}
\sqrt{((l+1)^2-m^2)((\dot{l}+1)^2-\dot{m}^2)}.
\end{array}\right.}
\end{equation}
$\overset{\ast}{\sL}{}^{l\dot{l}}_1$:
\begin{eqnarray}
\overset{\ast}{a}{}^{k^\prime k;\dot{k}^\prime\dot{k}}_{l-1,l,m-1,m;
\dot{l}-1,\dot{l},\dot{m}\dot{m}}&=&
\frac{1}{2}\overset{\ast}{c}{}^{k^\prime
k;\dot{k}^\prime\dot{k}}_{l-1,l;
\dot{l}-1,\dot{l}}\sqrt{(l+m)(l+m-1)(\dot{l}^2-\dot{m}^2)},\nonumber\\
\overset{\ast}{a}{}^{k^\prime k;\dot{k}^\prime\dot{k}}_{ll,m-1,m;
\dot{l}\dot{l},\dot{m}\dot{m}}&=&
-\frac{1}{2}\overset{\ast}{c}{}^{k^\prime
k;\dot{k}^\prime\dot{k}}_{ll;
\dot{l}\dot{l}}\dot{m}\sqrt{(l+m)(l-m+1)},\nonumber\\
\overset{\ast}{a}{}^{k^\prime k;\dot{k}^\prime\dot{k}}_{l+1,l,m-1,m;
\dot{l}+1,\dot{l},\dot{m}\dot{m}}&=&
-\frac{1}{2}\overset{\ast}{c}{}^{k^\prime
k;\dot{k}^\prime\dot{k}}_{l+1,l;
\dot{l}+1,\dot{l}}\sqrt{(l-m+1)(l-m+2)((\dot{l}+1)^2-\dot{m}^2)},\nonumber
\end{eqnarray}
\begin{eqnarray}
\overset{\ast}{a}{}^{k^\prime k;\dot{k}^\prime\dot{k}}_{l-1,l,m+1,m;
\dot{l}-1,\dot{l},\dot{m}\dot{m}}&=&
-\frac{1}{2}\overset{\ast}{c}{}^{k^\prime
k;\dot{k}^\prime\dot{k}}_{l-1,l;
\dot{l}-1,\dot{l}}\sqrt{(l-m)(l-m-1)(\dot{l}^2-\dot{m}^2)},\nonumber\\
\overset{\ast}{a}{}^{k^\prime k;\dot{k}^\prime\dot{k}}_{ll,m+1,m;
\dot{l}\dot{l},\dot{m}\dot{m}}&=&
-\frac{1}{2}\overset{\ast}{c}{}^{k^\prime
k;\dot{k}^\prime\dot{k}}_{ll;
\dot{l}\dot{l}}\dot{m}\sqrt{(l+m+1)(l-m)},\nonumber\\
\overset{\ast}{a}{}^{k^\prime k;\dot{k}^\prime\dot{k}}_{l+1,l,m+1,m;
\dot{l}+1,\dot{l},\dot{m}\dot{m}}&=&
\frac{1}{2}\overset{\ast}{c}{}^{k^\prime
k;\dot{k}^\prime\dot{k}}_{l+1,l;
\dot{l}+1,\dot{l}}\sqrt{(l+m+1)(l+m+2)((\dot{l}+1)^2-\dot{m}^2)},\nonumber
\end{eqnarray}
\begin{eqnarray}
\overset{\ast}{a}{}^{k^\prime k;\dot{k}^\prime\dot{k}}_{l-1,l,m,m;
\dot{l}-1,\dot{l},\dot{m}-1,\dot{m}}&=&
-\frac{1}{2}\overset{\ast}{c}{}^{k^\prime
k;\dot{k}^\prime\dot{k}}_{l-1,l;
\dot{l}-1,\dot{l}}\sqrt{(l^2-m^2)(\dot{l}+\dot{m})(\dot{l}+\dot{m}-1)},\nonumber\\
\overset{\ast}{a}{}^{k^\prime k;\dot{k}^\prime\dot{k}}_{ll,m,m;
\dot{l}\dot{l},\dot{m}-1,\dot{m}}&=&
\frac{1}{2}\overset{\ast}{c}{}^{k^\prime
k;\dot{k}^\prime\dot{k}}_{ll;
\dot{l}\dot{l}}m\sqrt{(\dot{l}+\dot{m})(\dot{l}-\dot{m}+1)},\nonumber\\
\overset{\ast}{a}{}^{k^\prime
k;\dot{k}^\prime\dot{k}}_{l+1,l,m,m;\dot{l}+1,
\dot{l},\dot{m}-1,\dot{m}}&=&
\frac{1}{2}\overset{\ast}{c}{}^{k^\prime
k;\dot{k}^\prime\dot{k}}_{l+1,l; \dot{l}+1,\dot{l}}
\sqrt{((l+1)^2-m^2)(\dot{l}-\dot{m}+1)(\dot{l}-\dot{m}+2)},\nonumber
\end{eqnarray}
\begin{eqnarray}
\overset{\ast}{a}{}^{k^\prime
k;\dot{k}^\prime\dot{k}}_{l-1,l,m,m;\dot{l}-1,
\dot{l},\dot{m}+1,\dot{m}}&=&
\frac{1}{2}\overset{\ast}{c}{}^{k^\prime
k;\dot{k}^\prime\dot{k}}_{l-1,l; \dot{l}-1,\dot{l}}
\sqrt{(l^2-m^2)(\dot{l}-\dot{m})(\dot{l}-\dot{m}-1)},\nonumber\\
\overset{\ast}{a}{}^{k^\prime k;\dot{k}^\prime\dot{k}}_{ll,m,m;
\dot{l}\dot{l},\dot{m}+1,\dot{m}}&=&
\frac{1}{2}\overset{\ast}{c}{}^{k^\prime
k;\dot{k}^\prime\dot{k}}_{ll;
\dot{l}\dot{l}}m\sqrt{(\dot{l}+\dot{m}+1)(\dot{l}-\dot{m})},\nonumber\\
\overset{\ast}{a}{}^{k^\prime k;\dot{k}^\prime\dot{k}}_{l+1,l,m,m;
\dot{l}+1,\dot{l},\dot{m}+1,\dot{m}}&=&
-\frac{1}{2}\overset{\ast}{c}{}^{k^\prime
k;\dot{k}^\prime\dot{k}}_{l+1,l; \dot{l}+1,\dot{l}}
\sqrt{((l+1)^2-m^2)(\dot{l}+\dot{m}+1)(\dot{l}+\dot{m}+2)}.\label{L1T'}
\end{eqnarray}
$\overset{\ast}{\sL}{}^{l\dot{l}}_2$:
\begin{eqnarray}
\overset{\ast}{b}{}^{k^\prime k;\dot{k}^\prime\dot{k}}_{l-1,l,m-1,m;
\dot{l}-1,\dot{l},\dot{m}\dot{m}}&=&
\frac{i}{2}\overset{\ast}{c}{}^{k^\prime
k;\dot{k}^\prime\dot{k}}_{l-1,l;
\dot{l}-1,\dot{l}}\sqrt{(l+m)(l+m-1)(\dot{l}^2-\dot{m}^2)},\nonumber\\
\overset{\ast}{b}{}^{k^\prime k;\dot{k}^\prime\dot{k}}_{ll,m-1,m;
\dot{l}\dot{l},\dot{m}\dot{m}}&=&
-\frac{i}{2}\overset{\ast}{c}{}^{k^\prime
k;\dot{k}^\prime\dot{k}}_{ll;
\dot{l}\dot{l}}\dot{m}\sqrt{(l+m)(l-m+1)},\nonumber\\
\overset{\ast}{b}{}^{k^\prime k;\dot{k}^\prime\dot{k}}_{l+1,l,m-1,m;
\dot{l}+1,\dot{l},\dot{m}\dot{m}}&=&
-\frac{i}{2}\overset{\ast}{c}{}^{k^\prime
k;\dot{k}^\prime\dot{k}}_{l+1,l;
\dot{l}+1,\dot{l}}\sqrt{(l-m+1)(l-m+2)((\dot{l}+1)^2-\dot{m}^2)},\nonumber
\end{eqnarray}
\begin{eqnarray}
\overset{\ast}{b}{}^{k^\prime k;\dot{k}^\prime\dot{k}}_{l-1,l,m+1,m;
\dot{l}-1,\dot{l},\dot{m}\dot{m}}&=&
\frac{i}{2}\overset{\ast}{c}{}^{k^\prime
k;\dot{k}^\prime\dot{k}}_{l-1,l;
\dot{l}-1,\dot{l}}\sqrt{(l-m)(l-m-1)(\dot{l}^2-\dot{m}^2)},\nonumber\\
\overset{\ast}{b}{}^{k^\prime k;\dot{k}^\prime\dot{k}}_{ll,m+1,m;
\dot{l}\dot{l},\dot{m}\dot{m}}&=&
\frac{i}{2}\overset{\ast}{c}{}^{k^\prime
k;\dot{k}^\prime\dot{k}}_{ll;
\dot{l}\dot{l}}\dot{m}\sqrt{(l+m+1)(l-m)},\nonumber\\
\overset{\ast}{b}{}^{k^\prime k;\dot{k}^\prime\dot{k}}_{l+1,l,m+1,m;
\dot{l}+1,\dot{l},\dot{m}\dot{m}}&=&
-\frac{i}{2}\overset{\ast}{c}{}^{k^\prime
k;\dot{k}^\prime\dot{k}}_{l+1,l;
\dot{l}+1,\dot{l}}\sqrt{(l+m+1)(l+m+2)((\dot{l}+1)^2-\dot{m}^2)},\nonumber
\end{eqnarray}
\begin{eqnarray}
\overset{\ast}{b}{}^{k^\prime k;\dot{k}^\prime\dot{k}}_{l-1,l,m,m;
\dot{l}-1,\dot{l},\dot{m}-1,\dot{m}}&=&
-\frac{i}{2}\overset{\ast}{c}{}^{k^\prime
k;\dot{k}^\prime\dot{k}}_{l-1,l; \dot{l}-1,\dot{l}}
\sqrt{(l^2-m^2)(\dot{l}+\dot{m})(\dot{l}+\dot{m}-1)},\nonumber\\
\overset{\ast}{b}{}^{k^\prime k;\dot{k}^\prime\dot{k}}_{ll,m,m;
\dot{l}\dot{l},\dot{m}-1,\dot{m}}&=&
\frac{i}{2}\overset{\ast}{c}{}^{k^\prime
k;\dot{k}^\prime\dot{k}}_{ll;
\dot{l}\dot{l}}m\sqrt{(\dot{l}+\dot{m})(\dot{l}-\dot{m}+1)},\nonumber\\
\overset{\ast}{b}{}^{k^\prime k;\dot{k}^\prime\dot{k}}_{l+1,l,m,m;
\dot{l}+1,\dot{l},\dot{m}-1,\dot{m}}&=&
\frac{i}{2}\overset{\ast}{c}{}^{k^\prime
k;\dot{k}^\prime\dot{k}}_{l+1,l; \dot{l}+1,\dot{l}}
\sqrt{((l+1)^2-m^2)(\dot{l}-\dot{m}+1)(\dot{l}-\dot{m}+2)},\nonumber
\end{eqnarray}
\begin{eqnarray}
\overset{\ast}{b}{}^{k^\prime k;\dot{k}^\prime\dot{k}}_{l-1,l,m,m;
\dot{l}-1,\dot{l},\dot{m}+1,\dot{m}}&=&
-\frac{i}{2}\overset{\ast}{c}{}^{k^\prime
k;\dot{k}^\prime\dot{k}}_{l-1,l; \dot{l}-1,\dot{l}}
\sqrt{(l^2-m^2)(\dot{l}-\dot{m})(\dot{l}-\dot{m}-1)},\nonumber\\
\overset{\ast}{b}{}^{k^\prime
k;\dot{k}^\prime\dot{k}}_{ll,m,m;\dot{l}
\dot{l},\dot{m}+1,\dot{m}}&=&
-\frac{i}{2}\overset{\ast}{c}{}^{k^\prime
k;\dot{k}^\prime\dot{k}}_{ll;
\dot{l}\dot{l}}m\sqrt{(\dot{l}+\dot{m}+1)(\dot{l}-\dot{m})},\nonumber\\
\overset{\ast}{b}{}^{k^\prime k;\dot{k}^\prime\dot{k}}_{l+1,l,m,m;
\dot{l}+1,\dot{l},\dot{m}+1,\dot{m}}&=&
\frac{i}{2}\overset{\ast}{c}{}^{k^\prime
k;\dot{k}^\prime\dot{k}}_{l+1,l; \dot{l}+1,\dot{l}}
\sqrt{((l+1)^2-m^2)(\dot{l}+\dot{m}+1)(\dot{l}+\dot{m}+2)}.\label{L2T'}
\end{eqnarray}
Solutions of (\ref{BS2}) are defined via series in generalized hyperspherical functions \cite{Var07}
\[
\fM^{l\dot{l}}_{mn;\dot{m}\dot{n}}(\varphi,\epsilon,\theta,\tau,0,0)=e^{-n(\epsilon+i\varphi)-\dot{n}(\epsilon-i\varphi)}
\fZ^{l\dot{l}}_{mn;\dot{m}\dot{n}}(\theta,\tau).
\]
\section{Dirac and Maxwell equations}
In this section we consider two levels of matter spectrum which correspond to Dirac and Maxwell equations. The first level describes electron state and its expression in the form of Dirac equation is well-known. The second level describes photon state which leads to Dirac-like form of Maxwell equations. Such intriguing similarity of these equations has a simple explanation: both levels of matter have a common nature (they are differed only by tensor dimension). This similarity is expressed in spinor form.
The spinor form of Dirac equations was first given by
Van der Waerden \cite{Wa29}, he showed that Dirac's theory can be
expressed completely in this form. In turn, a
spinor formulation of Maxwell equations was studied by Laporte and
Uhlenbeck \cite{LU31}. In 1936, Rumer \cite{Rum36} showed that
spinor forms of Dirac and Maxwell equations look very
similar.
Further, Majorana \cite{ERMB} and Oppenheimer
\cite{Opp31} proposed to consider the Maxwell theory of
electromagnetism as the wave mechanics of the photon. They
introduced a wave function of the form $\psi=\bE-i\bB$ satisfying
the massless Dirac-like equations\footnote{In contrast to the
Gupta-Bleuler method, where the non-observable
four-potential $A_\mu$ is quantized, the main advantage of the
Majorana-Oppenheimer formulation of electrodynamics lies in the
fact that it deals directly with observable quantities, such as
the electric and magnetic fields.}. Maxwell equations in the Dirac
form considered during long time by many authors
(see \cite{Var05b} and references therein).
The interest to the Majorana-Oppenheimer formulation
of electrodynamics has grown in recent years \cite{BKOR10,Esp98}.
\subsection{Dirac equation}
The spin chain (fundamental doublet)
\begin{equation}\label{EChain}
\unitlength=1mm
\begin{picture}(20,13)
\put(0,5){$\overset{(0,\frac{1}{2})}{\bullet}$}
\put(20,5){$\overset{(\frac{1}{2},0)}{\bullet}$}
\put(3,6){\line(1,0){20}}
\put(-0.5,0){$-\frac{1}{2}$}
\put(22.5,0){$\frac{1}{2}$}
\put(6,0){$\cdots$}
\put(11.5,0){$\cdots$}
\put(17.5,0){$\cdots$}
\end{picture}
\end{equation}
on the Fig.\,1, defined within the representation $\boldsymbol{\tau}_{0,1/2}\oplus\boldsymbol{\tau}_{1/2,0}$, leads to a linear superposition of the two spin states $1/2$ and $-1/2$, that describes electron and corresponds to Dirac equation
\begin{equation}\label{Dirac}
\gamma_\mu\frac{\partial\psi}{\partial x_\mu}+m_e\psi=0.
\end{equation}
In accordance with general definition (\ref{TenAlg}), with the electron spin chain (\ref{EChain}) we have a chain of algebras
\[
\C_2\longleftrightarrow\overset{\ast}{\C}_2
\]
and associated spinspace
\[
\dS_2\oplus\dot{\dS}_2,
\]
where $\dS_2$ is a space of (2-component) spinors, and $\dot{\dS}_2$ is a space of cospinors. Hence it follows that the wave function of (\ref{Dirac}) is a \textit{bispinor} $\psi=\{\xi,\dot{\eta}\}$, where $\xi\in\dS_2$, $\dot{\eta}\in\dot{\dS}_2$, and $\psi$ can be written as
\begin{equation}\label{Bispinor}
\psi=\xi^1\boldsymbol{\varepsilon}_1+\xi^2\boldsymbol{\varepsilon}_2+\eta_{\dot{1}}\boldsymbol{\varepsilon}^{\dot{1}}+
\eta_{\dot{2}}\boldsymbol{\varepsilon}^{\dot{2}},
\end{equation}
or
\[
\psi=\begin{array}{||c||}
\xi^1\\
\xi^2\\
\eta_{\dot{1}}\\
\eta_{\dot{2}}
\end{array}.
\]
Let us consider a superposition defined by the fundamental doublet (\ref{EChain}), that corresponds to electron\footnote{It should be noted an interesting analogy between electron and qubit. As is known, \textit{qubit} is a nonseparable (entangled) state defined by a superposition $\left|\boldsymbol{\psi}\right\rangle=a\left|\boldsymbol{0}\right\rangle+b\left|\boldsymbol{1}\right\rangle,$
where $a,b\in\C$. Qubit is a minimally possible (elementary) state vector. Any state vector can be represented by a set of such elementary vectors. For that reason qubit is an original `building block' for all other state vectors of any dimension. A quantum state of $N$ qubits can be described by the vector of the Hilbert space of dimension $2^N$. It is obvious that this space coincides with the spinspace $\dS_{2^N}$. We can choose as an orthonormal basis for this space the states in which each qubit has a definite value, either $\left|\boldsymbol{0}\right\rangle$ or $\left|\boldsymbol{1}\right\rangle$. These can be labeled by binary strings such as
$\left|\boldsymbol{0}\boldsymbol{1}\boldsymbol{1}\boldsymbol{1}\boldsymbol{0}\boldsymbol{0}\boldsymbol{1}
\boldsymbol{0}\,\cdots\,\boldsymbol{1}\boldsymbol{0}\boldsymbol{0}\boldsymbol{1}\right\rangle$. A general normalized vector can be expressed in this basis as $\sum^{2^N-1}_{x=0}a_x\left|x\right\rangle$, where $a_x$ are complex numbers satisfying $\sum_x|a_x|^2=1$. Here we have a deep analogy between qubits and 2-component spinors. Just like the qubits, 2-component spinors are `building blocks' of the underlying spinor structure (via the tensor products of $\C_2$ and $\overset{\ast}{\C}_2$, see (\ref{TenAlg}) and (\ref{SpinSpace})). Moreover, vectors of the Hilbert space $\bsH^S\otimes\bsH^Q\otimes\bsH_\infty$, which represent actualized particle states (`binary strings'), are constructed via the same way \cite{Var15}.}. So, electron is a superposition $U$ of state vectors in nonseparable Hilbert space $\bsH^S_2\otimes\bsH^-\otimes\bsH_\infty$. In this case we have two spin states: the state 1/2, described by a representation $\boldsymbol{\tau}_{1/2,0}$ on the spin-1/2 line, and the state -1/2, described by a representation $\boldsymbol{\tau}_{0,1/2}$ on the dual spin-1/2 line. Representations $\boldsymbol{\tau}_{1/2,0}$ and $\boldsymbol{\tau}_{0,1/2}$ act in the spaces $\Sym_{(1,0)}$ and $\Sym_{(0,1)}$, respectively. Therefore, there are two state vectors: ket-vector  $\left|\Psi\right\rangle=\left.\left|\boldsymbol{\tau}_{1/2,0},\,\Sym_{(1,0)},\,\C_2,\,\dS_{2}
\right.\right\rangle$ and bra-vector $\langle\dot{\dS}_2,\,\overset{\ast}{\C}_2,\,\Sym_{(0,1)},\,
\boldsymbol{\tau}_{0,1/2}|=\langle\dot{\Psi}|$. Spinor structure is defined by a direct sum $\C_2\oplus\overset{\ast}{\C}_2$.
After the mapping of (\ref{Dirac}) into bivector space $\R^6$, we obtain equations
(particular case of the system (\ref{BS2}) for the spin chain $\boldsymbol{\tau}_{1/2,0}\leftrightarrow\boldsymbol{\tau}_{0,1/2}$)
\[
\sum^3_{j=1}\overset{\ast}{\Lambda}{}^{0,\frac{1}{2}}_j\frac{\partial\dot{\psi}}
{\partial\widetilde{a}_j}+i\sum^3_{j=1}\overset{\ast}{\Lambda}{}^{0,\frac{1}{2}}_j
\frac{\partial\dot{\psi}}{\partial\widetilde{a}^\ast_j}+
m_e\psi=0,
\]
\begin{equation}\label{BDirac}
\sum^3_{j=1}\Lambda^{\frac{1}{2},0}_j\frac{\partial\psi}{\partial a_j}-
i\sum^3_{j=1}\Lambda^{\frac{1}{2},0}_j\frac{\partial\psi}{\partial a^\ast_j}+
m_e\dot{\psi}=0,
\end{equation}
where
\[
\sL^{\frac{1}{2},0}_1=\frac{1}{2}c_{\frac{1}{2}\frac{1}{2}}\begin{array}{||cc||}
0 & 1\\
1 & 0
\end{array},\quad
\sL^{\frac{1}{2},0}_2=\frac{1}{2}c_{\frac{1}{2}\frac{1}{2}}\begin{array}{||cc||}
0 & -i\\
i & 0
\end{array},\quad
\sL^{\frac{1}{2},0}_3=\frac{1}{2}c_{\frac{1}{2}\frac{1}{2}}\begin{array}{||cc||}
1 & 0\\
0 & -1
\end{array},
\]
\[
\overset{\ast}{\sL}{}^{0,\frac{1}{2}}_1=\frac{1}{2}\dot{c}_{\frac{1}{2}\frac{1}{2}}\begin{array}{||cc||}
0 & 1\\
1 & 0
\end{array},\quad
\overset{\ast}{\sL}{}^{0,\frac{1}{2}}_2=\frac{1}{2}\dot{c}_{\frac{1}{2}\frac{1}{2}}\begin{array}{||cc||}
0 & -i\\
i & 0
\end{array},\quad
\overset{\ast}{\sL}{}^{0,\frac{1}{2}}_3=\frac{1}{2}\dot{c}_{\frac{1}{2}\frac{1}{2}}\begin{array}{||cc||}
1 & 0\\
0 & -1
\end{array}.
\]
It is easy to see that these matrices coincide with the Pauli
matrices $\sigma_i$ when $c_{\frac{1}{2}\frac{1}{2}}=2$. At the reduction of the superposition we have $U\rightarrow\boldsymbol{\tau}_{1/2,0}$ (or $U\rightarrow\boldsymbol{\tau}_{0,1/2}$) and $\bsH^S_2\otimes\bsH^-\otimes\bsH_\infty\rightarrow\sH_E\simeq\Sym_{(1,0)}$ (or $\bsH^S_2\otimes\bsH^-\otimes\bsH_\infty\rightarrow\sH_E\simeq\Sym_{(0,1)}$). Eigenvector subspace $\sH_E$ of the energy operator $H$ presents a mass level with the mass value defined by the formula (\ref{MGY}). Let $s=l=1/2$ and let $m_e=\mu^0\left(l+\frac{1}{2}\right)=\mu^0$ be an electron mass. From the mass formula (\ref{MGY}) it follows directly that the electron mass is a minimal rest mass $\mu^0$.

The wave function in (\ref{BDirac}) has the form
\[
\boldsymbol{\psi}(a_1,a_2,a_3,a^\ast_1,a^\ast_2,a^\ast_3)=\sum_{l,m,\dot{l},\dot{m}}\psi_{lm;\dot{l}\dot{m}}
(a_1,a_2,a_3,a^\ast_1,a^\ast_2,a^\ast_3)\left| lm;\dot{l}\dot{m}\right\rangle,
\]
where $l=1/2$, $\dot{l}=1/2$. Or, more explicitly,
\[
\boldsymbol{\psi}=\psi_{\frac{1}{2}\frac{1}{2};\dot{\frac{1}{2}}\dot{\frac{1}{2}}}
\left| \tfrac{1}{2}\tfrac{1}{2};\dot{\tfrac{1}{2}}\dot{\tfrac{1}{2}}\right\rangle+
\psi_{\frac{1}{2}-\frac{1}{2};\dot{\frac{1}{2}}\dot{\frac{1}{2}}}
\left| \tfrac{1}{2}-\tfrac{1}{2};\dot{\tfrac{1}{2}}\dot{\tfrac{1}{2}}\right\rangle+
\psi_{\frac{1}{2}\frac{1}{2};\dot{\frac{1}{2}}-\dot{\frac{1}{2}}}
\left| \tfrac{1}{2}\tfrac{1}{2};\dot{\tfrac{1}{2}}-\dot{\tfrac{1}{2}}\right\rangle+
\psi_{\frac{1}{2}-\frac{1}{2};\dot{\frac{1}{2}}-\dot{\frac{1}{2}}}
\left| \tfrac{1}{2}-\tfrac{1}{2};\dot{\tfrac{1}{2}}-\dot{\tfrac{1}{2}}\right\rangle.
\]
Comparing with (\ref{Bispinor}), we see that
\[
\xi^1\boldsymbol{\varepsilon}_1\;\sim\;\psi_{\frac{1}{2}\frac{1}{2};\dot{\frac{1}{2}}\dot{\frac{1}{2}}}
\left| \tfrac{1}{2}\tfrac{1}{2};\dot{\tfrac{1}{2}}\dot{\tfrac{1}{2}}\right\rangle,\quad
\xi^2\boldsymbol{\varepsilon}_2\;\sim\;\psi_{\frac{1}{2}-\frac{1}{2};\dot{\frac{1}{2}}\dot{\frac{1}{2}}}
\left| \tfrac{1}{2}-\tfrac{1}{2};\dot{\tfrac{1}{2}}\dot{\tfrac{1}{2}}\right\rangle,
\]
\[
\eta_{\dot{1}}\boldsymbol{\varepsilon}^{\dot{1}}\;\sim\;\psi_{\frac{1}{2}\frac{1}{2};\dot{\frac{1}{2}}-\dot{\frac{1}{2}}}
\left| \tfrac{1}{2}\tfrac{1}{2};\dot{\tfrac{1}{2}}-\dot{\tfrac{1}{2}}\right\rangle,\quad
\eta_{\dot{1}}\boldsymbol{\varepsilon}^{\dot{1}}\;\sim\;\psi_{\frac{1}{2}-\frac{1}{2};\dot{\frac{1}{2}}-\dot{\frac{1}{2}}}
\left| \tfrac{1}{2}-\tfrac{1}{2};\dot{\tfrac{1}{2}}-\dot{\tfrac{1}{2}}\right\rangle.
\]
\subsection{Maxwell equations}
The spin-1 chain (spin triplet)
\begin{equation}\label{PhChain}
\unitlength=1mm
\begin{picture}(40,13)
\put(0,5){$\overset{(0,1)}{\bullet}$}
\put(20,5){$\overset{(\frac{1}{2},\frac{1}{2})}{\bullet}$}
\put(40,5){$\overset{(1,0)}{\bullet}$}
\put(3,6){\line(1,0){20}}
\put(23,6){\line(1,0){20}}
\put(-0.5,0){$-1$}
\put(22.5,0){$0$}
\put(42.5,0){$1$}
\put(6,0){$\cdots$}
\put(11.5,0){$\cdots$}
\put(17.5,0){$\cdots$}
\put(26,0){$\cdots$}
\put(31.5,0){$\cdots$}
\put(37.5,0){$\cdots$}
\end{picture}
\end{equation}
on the Fig.\,1, defined within the representation $\boldsymbol{\tau}_{0,1}\oplus\boldsymbol{\tau}_{\frac{1}{2}\frac{1}{2}}\oplus\boldsymbol{\tau}_{1,0}$, leads to Maxwell equations.

As is known, Maxwell equations can be written in a Dirac-like form
\begin{equation}\label{Maxwell}
\left(\frac{i\hbar}{c}\frac{\partial}{\partial t}\begin{array}{||cc||}
0 & I\\
I & 0
\end{array}-i\hbar\frac{\partial}{\partial\bx}\begin{array}{||cc||}
0 & -\alpha_i\\
\alpha_i & 0
\end{array}\right)\begin{array}{||c||}
\psi(x)\\
\psi^\ast(x)
\end{array}=0,
\end{equation}
where `bispinor' (Majorana-Oppenheimer bispinor) is
\begin{equation}\label{MOspinor}
\begin{array}{||c||}
\psi(x)\\
\psi^\ast(x)
\end{array}=\begin{array}{||c||}
\bE-i\bB\\
\bE+i\bB
\end{array}=\begin{array}{||c||}
E_1-iB_1\\
E_2-iB_2\\
E_3-iE_3\\
E_1+iB_1\\
E_2+iB_2\\
E_3+iB_3
\end{array}
\end{equation}
and
\[
\alpha_1=\begin{array}{||ccc||}
0 & 0 & 0\\
0 & 0 & i\\
0 & -i& 0
\end{array},\quad\alpha_2=\begin{array}{||ccc||}
0 & 0 & -i\\
0 & 0 & 0\\
i & 0 & 0
\end{array},\quad\alpha_3=\begin{array}{||ccc||}
0 & i & 0\\
-i& 0 & 0\\
0 & 0 & 0
\end{array}.
\]
From the equation (\ref{Maxwell}) it follows that
\begin{eqnarray}
&&\left(\frac{i\hbar}{c}\frac{\partial}{\partial t}-
i\hbar\alpha_i\frac{\partial}{\partial\bx}\right)\psi(x)=0,\label{ME1}\\
&&\left(\frac{i\hbar}{c}\frac{\partial}{\partial t}+
i\hbar\alpha_i\frac{\partial}{\partial\bx}\right)\psi^\ast(x)=0.\label{ME2}
\end{eqnarray}
The latter equations with allowance for transversality conditions
($\bp\cdot\psi=0$, $\bp\cdot\psi^\ast=0$) coincide with the Maxwell
equations. Indeed, taking into account that
$(\bp\cdot\alpha)\psi=\hbar\nabla\times\psi$, we obtain
\begin{eqnarray}
\frac{i\hbar}{c}\frac{\partial\psi}{\partial
t}&=&-\hbar\nabla\times\psi,
\label{Tr1}\\
-i\hbar\nabla\cdot\psi&=&0.\label{Tr1'}
\end{eqnarray}
Whence
\begin{eqnarray}
\nabla\times(\bE-i\bB)&=&-\frac{i}{c}\frac{\partial(\bE-i\bB)}{\partial
t},
\nonumber\\
\nabla\cdot(\bE-i\bB)&=&0\nonumber
\end{eqnarray}
(the constant $\hbar$ is cancelled). Separating the real and
imaginary parts, we obtain Maxwell equations
\begin{eqnarray}
\nabla\times\bE&=&-\frac{1}{c}\frac{\partial\bB}{\partial t},\nonumber\\
\nabla\times\bB&=&\frac{1}{c}\frac{\partial\bE}{\partial t},\nonumber\\
\nabla\cdot\bE&=&0,\nonumber\\
\nabla\cdot\bB&=&0.\nonumber
\end{eqnarray}
It is easy to verify that we come again to Maxwell equations
starting from the equations
\begin{eqnarray}
\left(\frac{i\hbar}{c}\frac{\partial}{\partial t}+
i\hbar\alpha_i\frac{\partial}{\partial\bx}\right)\psi^\ast(x)&=&0,\label{Tr2}\\
-i\hbar\nabla\cdot\psi^\ast(x)&=&0.\label{Tr2'}
\end{eqnarray}
In spite of the fact that equations (\ref{ME1}) and (\ref{ME2}) lead to the same Maxwell equations, the physical interpretation of
these equations is different (see \cite{Ger98}). Namely, the
equations (\ref{ME1}) and (\ref{ME2}) are equations with negative
and positive helicity, respectively. Thus, the equation (\ref{Maxwell}) can be considered as a wave equation for the photon.

Similarity of Dirac and Maxwell equations can be shown with the most evidence in spinor form. Indeed, massless Dirac equations in spinor form look like \cite{Rum36}
\begin{equation}\label{SpinorD}
\begin{array}{ccc}
\partial_{1\dot{1}}\xi^1+\partial_{1\dot{2}}\xi^2&=&0,\\
\partial_{2\dot{1}}\xi^1+\partial_{2\dot{2}}\xi^2&=&0,\\
\partial^{1\dot{1}}\eta_{\dot{1}}+
\partial^{2\dot{1}}\eta_{\dot{2}}&=&0,\\
\partial^{1\dot{2}}\eta_{\dot{1}}+
\partial^{2\dot{1}}\eta_{\dot{2}}&=&0.
\end{array}
\end{equation}
In turn, spinor form of Maxwell equations in vacuum is
\begin{equation}\label{SpinorM}
\begin{array}{ccc}
\partial_{1\dot{1}}f_{11}+\partial_{1\dot{2}}f_{12}&=&0,\\
\partial_{2\dot{1}}f_{11}+\partial_{2\dot{2}}f_{12}&=&0,\\
\partial^{1\dot{1}}f^{\dot{1}\dot{1}}+
\partial^{2\dot{1}}f^{\dot{1}\dot{2}}&=&0,\\
\partial^{1\dot{2}}f^{\dot{1}\dot{1}}+
\partial^{2\dot{1}}f^{\dot{1}\dot{2}}&=&0.
\end{array}
\end{equation}
Comparison of (\ref{SpinorD}) and (\ref{SpinorM}) shows that these equations are differed by tensor dimension only. Namely, spinors $\xi^\lambda$ and
$\eta_{\dot{\mu}}$ are transformed within
$\boldsymbol{\tau}_{\frac{1}{2},0}$ and
$\boldsymbol{\tau}_{0,\frac{1}{2}}$ representations, whereas the
spintensors $f^{\lambda\mu}$ and $f^{\dot{\lambda}\dot{\mu}}$ are
transformed within $\boldsymbol{\tau}_{1,0}$ and
$\boldsymbol{\tau}_{0,1}$ representations of the Lorentz group.

Furher, with the spin-1 chain (\ref{PhChain}) we have a chain of algebras
\begin{equation}\label{AChain}
\C_2\otimes\C_2\longleftrightarrow\C_2\otimes\overset{\ast}{\C}_2\longleftrightarrow\overset{\ast}{\C}_2\otimes
\overset{\ast}{\C}_2
\end{equation}
and associated spinspace
\begin{equation}\label{SChain}
\dS_2\otimes\dS_2\bigoplus\dS_2\otimes\dot{\dS}_2\bigoplus\dot{\dS}_2\otimes\dot{\dS}_2.
\end{equation}
The representation $\boldsymbol{\tau}_{1,0}\oplus\boldsymbol{\tau}_{\frac{1}{2}\frac{1}{2}}\oplus\boldsymbol{\tau}_{0,1}$ acts in a 10-dimensional symmetric space
\begin{equation}\label{SymChain}
\Sym_{(2,0)}\oplus\Sym_{(1,1)}\oplus\Sym_{(0,2)}.
\end{equation}
When we consider a massless spin-1 particle (photon) the 4-dimensional representation $\boldsymbol{\tau}_{\frac{1}{2}\frac{1}{2}}$, related with the spin state 0, should be omitted\footnote{In 1957, Wigner \cite{Wig57} pointed out physical reasons for absence of the spin state 0 for the massless particles.}. Therefore, (\ref{AChain}), (\ref{SChain}) and (\ref{SymChain}) are reduced to
\[
\C_2\otimes\C_2\longleftrightarrow\overset{\ast}{\C}_2\otimes\overset{\ast}{\C}_2,
\]
\[
\dS_2\otimes\dS_2\bigoplus\dot{\dS}_2\otimes\dot{\dS}_2,
\]
\[
\Sym_{(2,0)}\oplus\Sym_{(0,2)}.
\]
Hence it follows that a photon is described by the massless field of type $(1,0)\oplus(0,1)$.

In the bivector space $\R^6$ the massless field $(1,0)\oplus(0,1)$ satisfies the following equations:
\[
\sum^3_{j=1}\Lambda^{1,0}_j\frac{\partial\boldsymbol{\psi}}{\partial a_j}-
i\sum^3_{j=1}\Lambda^{1,0}_j\frac{\partial\boldsymbol{\psi}}{\partial a^\ast_j}=0,
\]
\begin{equation}\label{BMaxwell}
\sum^3_{j=1}\Lambda^{0,1}_j\frac{\partial\dot{\boldsymbol{\psi}}}{\partial\widetilde{a}_j}+
i\sum^3_{j=1}\Lambda^{0,1}_j\frac{\partial\dot{\boldsymbol{\psi}}}{\partial\widetilde{a}^\ast_j}=0,
\end{equation}
where
\[
\Lambda^{1,0}_1=\frac{\sqrt{2}}{2}c_{11}\begin{array}{||ccc||} 0 & 1 & 0\\
1 & 0 & 1\\
0 & 1 & 0
\end{array},\quad
\Lambda^{1,0}_2=\frac{\sqrt{2}}{2}c_{11}\begin{array}{||ccc||} 0 & -i & 0\\
i & 0 & -i\\
0 & i & 0
\end{array},\quad
\Lambda^{1,0}_3=c_{11}\begin{array}{||ccc||} 1 & 0 & 0\\
0 & 0 & 0\\
0 & 0 & -1
\end{array}.
\]
System (\ref{BMaxwell}) describes photon as a superposition $U$ of state vectors in nonseparable Hilbert space $\bsH^S_2\otimes\bsH^{\overline{0}}\otimes\bsH_\infty$. $\bsH^S_2\otimes\bsH^{\overline{0}}\otimes\bsH_\infty$ is a \textit{truly neutral subspace} of $\bsH^S\otimes\bsH^Q\otimes\bsH_\infty$. In this case there are two spin states: the state 1, described by the representation $\boldsymbol{\tau}_{1,0}$ on the spin-1 line, and the state -1, described by the representation $\boldsymbol{\tau}_{0,1}$ on the dual spin-1 line. Representations $\boldsymbol{\tau}_{1,0}$ and $\boldsymbol{\tau}_{0,1}$ act in the spaces $\Sym_{(2,0)}$ and $\Sym_{(0,2)}$, respectively. Therefore, there exist two state vectors: ket-vector $\left|\Psi\right\rangle=\left.\left|\boldsymbol{\tau}_{1,0},\,\Sym_{(2,0)},\,\C_2\otimes\C_2,\,\dS_{2}\otimes\dS_2
\right.\right\rangle$ (left polarization) and bra-vector $\langle\dot{\dS}_2\otimes\dot{\dS}_2,\,\overset{\ast}{\C}_2\otimes\overset{\ast}{\C}_2,\,\Sym_{(0,2)},\,
\boldsymbol{\tau}_{0,1}|=\langle\dot{\Psi}|$ (right polarization). At the reduction of $U$ we have $U\rightarrow\boldsymbol{\tau}_{1,0}$ (or $U\rightarrow\boldsymbol{\tau}_{0,1}$) and $\bsH^S_2\otimes\bsH^{\overline{0}}\otimes\bsH_\infty\rightarrow\sH_E\simeq\Sym_{(2,0)}$ (or $\bsH^S_2\otimes\bsH^{\overline{0}}\otimes\bsH_\infty\rightarrow\sH_E\simeq\Sym_{(0,2)}$).

The wave function in (\ref{BMaxwell}) has the form
\begin{multline}
\boldsymbol{\psi}=\psi_{11;\dot{1}\dot{1}}\left| 11;\dot{1}\dot{1}\right\rangle+
\psi_{10;\dot{1}\dot{1}}\left| 10;\dot{1}\dot{1}\right\rangle+
\psi_{1-1;\dot{1}\dot{1}}\left| 1-1;\dot{1}\dot{1}\right\rangle+\\
\psi_{11;\dot{1}-\dot{1}}\left| 11;\dot{1}-\dot{1}\right\rangle+
\psi_{11;\dot{1}\dot{0}}\left| 11;\dot{1}\dot{0}\right\rangle+
\psi_{1-1;\dot{1}-\dot{1}}\left| 1-1;\dot{1}-\dot{1}\right\rangle.\nonumber
\end{multline}
In the case of photon this function can be identified with the Majorana-Oppenheimer bispinor (\ref{MOspinor}):
\[
\boldsymbol{\psi}=\begin{array}{||c||}
\psi_{11;\dot{1}\dot{1}}\\
\psi_{10;\dot{1}\dot{1}}\\
\psi_{1-1;\dot{1}\dot{1}}\\
\psi_{11;\dot{1}-\dot{1}}\\
\psi_{11;\dot{1}\dot{0}}\\
\psi_{1-1;\dot{1}-\dot{1}}
\end{array}=\begin{array}{||c||}
E_1-iB_1\\
E_2-iB_2\\
E_3-iE_3\\
E_1+iB_1\\
E_2+iB_2\\
E_3+iB_3
\end{array}.
\]

\section{The structure of RWE}
In this section we will study structure of indecomposable equations (\ref{BS2}) related by a general interlocking scheme\footnote{Structure of decomposable equations of the type (\ref{BS2}), that correspond to unstable particles, will be studied in a separate work.}. This interlocking scheme corresponds to a some nonseparable state (superposition) in the space $\bsH^S\otimes\bsH^Q\otimes\bsH_\infty$. The each state vector of the given superposition describes one from the possible (at the reduction) actualized states of some particle of the spin $s=|l-\dot{l}|$. In virtue of the commutation relations (\ref{Commut}) the operator $\boldsymbol{\Lambda}=(\Lambda^{l\dot{l}}_1,\Lambda^{l\dot{l}}_2,\Lambda^{l\dot{l}}_3)$ has a common system of eigenfunctions with the energy operator $H$ (and operators $\sX_l$, $\sY_l$ of the complex momentum) for the each value $s=l-\dot{l}$. Moreover, hence it follows that $\boldsymbol{\Lambda}$ is a Hermitian operator. The following theorem gives a structure of the operator $\boldsymbol{\Lambda}$ (that is, a structure of the energy spectrum) in dependence on the structure of elementary divisors.
\begin{thm}
1) The fields of type $(l,0)\oplus(0,\dot{l})$.\\
Let $\boldsymbol{\Lambda}=(\Lambda^{l,0}_1,\Lambda^{l,0}_2,\Lambda^{l,0}_3)$ and $\overset{\ast}{\boldsymbol{\Lambda}}=(\Lambda^{0,\dot{l}}_1,\Lambda^{0,\dot{l}}_2,\Lambda^{0,\dot{l}}_3)$ be linear Hermitian operators, and all the roots of characteristic equation $\boldsymbol{\Delta}=0$ belong to the field $\F=\R$. In the case of $(l,0)\oplus(0,\dot{l})$ all elementary divisors of the operators $\boldsymbol{\Lambda}$ and $\overset{\ast}{\boldsymbol{\Lambda}}$ are simple, and Jordan form of the matrices of $\boldsymbol{\Lambda}$ and $\overset{\ast}{\boldsymbol{\Lambda}}$ is a diagonal matrix
\begin{equation}\label{SimpleJ}
\bJ(\Lambda)=-\bJ(\overset{\ast}{\Lambda})=\text{{\rm diag}}\left(\lambda_1,\lambda_2,\ldots,\lambda_{2l+1}\right),
\end{equation}
where $\lambda_1,\lambda_2,\ldots,\lambda_{2l+1}$ are eigenvalues of the operators $\boldsymbol{\Lambda}$ and $\overset{\ast}{\boldsymbol{\Lambda}}$.\\
2) The fields of type $(l,\dot{l})\oplus(\dot{l},l)$.\\
In this case linear Hermitian operators $\boldsymbol{\Lambda}=(\Lambda^{l\dot{l}}_1,\Lambda^{l\dot{l}}_2,\Lambda^{l\dot{l}}_3)$ and $\overset{\ast}{\boldsymbol{\Lambda}}=(\Lambda^{\dot{l}l}_1,\Lambda^{\dot{l}l}_2,\Lambda^{\dot{l}l}_3)$ have elementary divisors with multiple roots, and Jordan form of the matrices of $\boldsymbol{\Lambda}$ and $\overset{\ast}{\boldsymbol{\Lambda}}$ is a block-diagonal matrix
\begin{equation}\label{MultipleJ}
\bJ(\Lambda)=-\bJ(\overset{\ast}{\Lambda})=\text{{\rm diag}}\left(\bJ({}_1\Lambda), \bJ({}_2\Lambda),\ldots,\bJ({}_p\Lambda),\bJ({}_{-1}\Lambda),\bJ({}_{-2}\Lambda),\ldots,\bJ({}_{-p}\Lambda)\right),
\end{equation}
where
\[
\bJ({}_p\Lambda)=\begin{Vmatrix}
\lambda_p & 1 & 0 & \dots & 0\\
0 & \lambda_p & 1 & \dots & 0\\
.&&\hdotsfor[3]{3}\\
.&&&\hdotsfor[3]{2}\\
.&&&.&1\\
0 & 0 & 0 & \dots & \lambda_p
\end{Vmatrix}
\]
is a Jordan cell, corresponding to elementary divisor $(\lambda-\lambda_p)^{m_p}$, $2<m_p<m^p_{\text{max}}$, $\lambda_p$ is an eigenvalue of $\Lambda^{s,\frac{k}{2}}$ ($\Lambda^{\frac{k}{2},s}$), $m^p_{\text{max}}$ is a number of products among $(2s+1)(k+1)$ products of the $sk$-basis, which equal to eigenvalue $\lambda_p$.
\end{thm}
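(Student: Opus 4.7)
The plan is to analyze the explicit matrix entries of the spatial components $\Lambda^{l\dot{l}}_j$ given in (\ref{L3})--(\ref{L2'}) for Part~1 and (\ref{L3T})--(\ref{L2T'}) for Part~2. Since $\Lambda^{l\dot{l}}_1$ and $\Lambda^{l\dot{l}}_2$ are unitarily equivalent to $\Lambda^{l\dot{l}}_3$ via the $\SU(2)$ rotations generated through the commutation relations (\ref{Commut}), it is enough to determine the Jordan structure of $\Lambda^{l\dot{l}}_3$ in the helicity basis $|l,m;\dot{l},\dot{m}\rangle$; the conclusion then transfers to the full operator $\boldsymbol{\Lambda}$.

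For Part~1, I would restrict attention to $(l,0)\oplus(0,\dot{l})$. The interlocking chain here consists only of the two endpoints $(l,0)$ and $(0,\dot{l})$, which lie on the edges of the representation cone of Fig.~1 and are not mutually coupled by $\Lambda$ itself (only by the mass term in (\ref{BS1})). One therefore works separately on $\Sym_{(2l,0)}$, where the off-diagonal entries $c^{k'k}_{l\pm1,l,m}$ of (\ref{L3}) become vacuous for lack of a chain neighbor, leaving $\sL^{l,0}_3$ diagonal with entries $c^{k'k}_{ll}\,m$ for $m=-l,\ldots,l$. These $2l+1$ real numbers are pairwise distinct, so every elementary divisor is simple and the Jordan form is the diagonal matrix (\ref{SimpleJ}). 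The same argument, applied to (\ref{L3'}), handles $\overset{\ast}{\sL}^{0,\dot{l}}_3$.

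For Part~2, the chain $(l,\dot{l}),(l+\tfrac{1}{2},\dot{l}-\tfrac{1}{2}),\ldots,(\dot{l},l)$ has multiple nodes, and the off-diagonal entries $c^{k'k;\dot{k}'\dot{k}}_{l\pm1,l;\dot{l}\pm1,\dot{l}}$ of (\ref{L3T}) couple neighboring nodes. The diagonal of $\Lambda^{l\dot{l}}_3$ carries the values $c^{k'k;\dot{k}'\dot{k}}_{ll;\dot{l}\dot{l}}\,m\dot{m}$, and a given value $\lambda_p$ is attained by exactly $m^p_{\max}$ index-pairs $(m,\dot{m})$ with $m\dot{m}=\lambda_p$, which is a combinatorial count over the $(2s+1)(k+1)$ products of the $sk$-basis. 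The next step is to show that the geometric multiplicity of $\lambda_p$ is strictly less than $m^p_{\max}$: the chain couplings link the coincident diagonal entries through the raising/lowering structure of (\ref{L3T}), destroying the independence of would-be eigenvectors and forcing a Jordan cell of size $m_p$. Assembling the cells over all eigenvalues (and their negatives, which arise from the spin-reversal interchange $l\leftrightarrow\dot{l}$ between $(l,\dot{l})$ and $(\dot{l},l)$) produces the block-diagonal form (\ref{MultipleJ}).

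The main obstacle is Part~2: translating the combinatorial count of coincident diagonal entries into the \emph{Jordan} block sizes $m_p$. The technical heart is to exhibit, for each degenerate $\lambda_p$, an explicit Jordan chain $v_1,\ldots,v_{m_p}$ with $(\Lambda^{l\dot{l}}_3-\lambda_p I)\,v_{j+1}=v_j$, built from basis vectors $|l,m;\dot{l},\dot{m}\rangle$ sharing the product $m\dot{m}=\lambda_p$ and linked through the off-diagonal entries (\ref{L3T}). Establishing the sharp bound $m_p<m^p_{\max}$ will require a case analysis of which such pairs are in fact connected via chain neighbors, together with showing that the remaining ones contribute independent Jordan chains; this bookkeeping is combinatorial but not routine.
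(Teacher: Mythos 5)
Your Part~1 is sound and lands where the paper does, by a slightly cleaner route: the paper computes the characteristic polynomials of $\Lambda^{l,0}_1$, $\Lambda^{l,0}_2$, $\Lambda^{l,0}_3$ explicitly for $l=\tfrac{1}{2},1,\tfrac{3}{2},2,\tfrac{5}{2}$ and observes that in every case the roots are the $2l+1$ distinct numbers $-l,-l+1,\ldots,l$, whereas you invoke the unitary equivalence of the three components under $\SU(2)$ and read the distinct eigenvalues off the diagonal of $\Lambda^{l,0}_3$. Either way all elementary divisors come out simple and the Jordan form is the diagonal matrix of the statement.

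Part~2 is where the proposal breaks. The step you isolate as the technical heart --- exhibiting a Jordan chain $v_1,\ldots,v_{m_p}$ with $(\Lambda^{l\dot l}_3-\lambda_p I)v_{j+1}=v_j$ and showing that the geometric multiplicity of $\lambda_p$ is strictly below $m^p_{\text{max}}$ --- cannot be carried out. In the helicity basis the paper's own formulas give $\Lambda^{l\dot l}_3$ as a \emph{diagonal} matrix with entries proportional to $m\dot m$ (the explicit $\Lambda^{1\frac{1}{2}}_3$, $\Lambda^{\frac{3}{2}1}_3$, $\Lambda^{2\frac{3}{2}}_3$ displayed in the paper's proof are all diagonal); for a diagonal matrix the geometric multiplicity of every eigenvalue equals its algebraic multiplicity, so no Jordan chain of length $\geq 2$ exists. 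The same conclusion follows from the theorem's own hypothesis: a Hermitian operator is unitarily diagonalizable, hence has only simple elementary divisors no matter how degenerate its spectrum is. The interlocking coefficients $c_{l\pm1,l;\dot l\pm1,\dot l}$ that you propose to use as the couplings "destroying the independence of would-be eigenvectors" connect \emph{different} representations of the chain, not coincident diagonal entries within one block, and they are absent from the matrices the paper actually writes down. What the paper in fact does is pass directly from the factorization of the characteristic polynomial $\boldsymbol{\Delta}(\Lambda^{s\frac{k}{2}}_3)=\prod_p(\lambda-\lambda_p)^{m_p}$ to Jordan cells of size $m_p$, i.e.\ it identifies the multiplicity of a root of the characteristic polynomial with the degree of an elementary divisor without ever checking the rank of $\Lambda_3-\lambda_p I$. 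Your proposal correctly recognizes that this check is the missing link; but performing it honestly shows the degeneracy here is purely spectral (repeated eigenvalues of a diagonalizable operator), so the bound $m_p<m^p_{\text{max}}$ you plan to establish for genuine Jordan blocks has no instances to apply to, and the bookkeeping you defer cannot be completed as described.
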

\begin{proof} I) \textit{The fields of type $(l,0)\oplus(0,l)$}.
Let us consider a Jordan form $\bJ(\Lambda^{l\dot{l}}_j)$ of the matrices $\Lambda^{l\dot{l}}_j$ for the first simplest cases of the fields of type $(l,0)\oplus(0,l)$.

1) The field $(1/2,0)\oplus(0,1/2)$. This qubit-like field is formulated within the spin-1/2 chain (\ref{EChain}). An explicit form of $\Lambda^{\frac{1}{2}}_j$ we obtain from (\ref{L3})--(\ref{L2}) at $l=1/2$:
\[
\Lambda^{\frac{1}{2},0}_1=\frac{1}{2}c_{\frac{1}{2}\frac{1}{2}}\begin{array}{||cc||} 0 & 1\\
1 & 0
\end{array},\quad
\Lambda^{\frac{1}{2},0}_2=\frac{1}{2}c_{\frac{1}{2}\frac{1}{2}}\begin{array}{||cc||} 0 & -i\\
i & 0
\end{array},\quad
\Lambda^{\frac{1}{2},0}_3=\frac{1}{2}c_{\frac{1}{2}\frac{1}{2}}\begin{array}{||cc||} 1 & 0\\
0 & -1
\end{array}.
\]
Characteristic polynomials are (at $c_{\frac{1}{2}\frac{1}{2}}=1$)
\[
\boldsymbol{\Delta}_1(\lambda)=\left|\lambda\boldsymbol{1}_2-\Lambda^{\frac{1}{2},0}_1\right|=\begin{vmatrix}
\lambda & -\frac{1}{2}\\
-\frac{1}{2} & \lambda
\end{vmatrix}=\lambda^2-\frac{1}{4}=(\lambda-1/2)(\lambda+1/2),
\]
\[
\boldsymbol{\Delta}_2(\lambda)=\left|\lambda\boldsymbol{1}_2-\Lambda^{\frac{1}{2},0}_2\right|=\begin{vmatrix}
\lambda & \frac{i}{2}\\
-\frac{i}{2} & \lambda
\end{vmatrix}=\lambda^2-\frac{1}{4}=(\lambda-1/2)(\lambda+1/2),
\]
\[
\boldsymbol{\Delta}_3(\lambda)=\left|\lambda\boldsymbol{1}_2-\Lambda^{\frac{1}{2},0}_3\right|=\begin{vmatrix}
\lambda-\frac{1}{2} & 0\\
0 & \lambda+\frac{1}{2}
\end{vmatrix}=\lambda^2-\frac{1}{4}=(\lambda-1/2)(\lambda+1/2).
\]
Therefore,
\[
\bJ(\Lambda^{\frac{1}{2},0}_1)=\bJ(\Lambda^{\frac{1}{2},0}_2)=\bJ(\Lambda^{\frac{1}{2},0}_3)=
\begin{array}{||cc||}
\frac{1}{2} & 0\\
0 & -\frac{1}{2}
\end{array}.
\]

2) The field $(1,0)\oplus(0,1)$. This field is formulated within the spin-1 chain (\ref{PhChain}). An explicit form of $\Lambda^{1,0}_j$ we obtain from (\ref{L3})--(\ref{L2}) at $l=1$:
\[
\Lambda^{1,0}_1=\frac{\sqrt{2}}{2}c_{11}\begin{array}{||ccc||} 0 & 1 & 0\\
1 & 0 & 1\\
0 & 1 & 0
\end{array},\quad
\Lambda^{1,0}_2=\frac{\sqrt{2}}{2}c_{11}\begin{array}{||ccc||} 0 & -i & 0\\
i & 0 & -i\\
0 & i & 0
\end{array},\quad
\Lambda^{1,0}_3=c_{11}\begin{array}{||ccc||} 1 & 0 & 0\\
0 & 0 & 0\\
0 & 0 & -1
\end{array}.
\]
In this case characteristic polynomials are (at $c_{11}=\sqrt{2}$)
\[
\boldsymbol{\Delta}_1(\lambda)=\left|\lambda\boldsymbol{1}_3-\Lambda^{1,0}_1\right|=\begin{vmatrix}
\lambda & -1 & 0\\
-1 & \lambda & -1\\
0 & -1 & \lambda
\end{vmatrix}=\lambda(\lambda-\sqrt{2})(\lambda+\sqrt{2}),
\]
\[
\boldsymbol{\Delta}_2(\lambda)=\left|\lambda\boldsymbol{1}_3-\Lambda^{1,0}_2\right|=\begin{vmatrix}
\lambda & i & 0\\
-i & \lambda & i\\
0 & -i & \lambda
\end{vmatrix}=\lambda(\lambda-\sqrt{2})(\lambda+\sqrt{2}),
\]
\[
\boldsymbol{\Delta}_3(\lambda)=\left|\lambda\boldsymbol{1}_3-\Lambda^{1,0}_3\right|=\begin{vmatrix}
\lambda-\sqrt{2} & 0 & 0\\
0 & \lambda & 0\\
0 & 0 & \lambda+\sqrt{2}
\end{vmatrix}=\lambda(\lambda-\sqrt{2})(\lambda+\sqrt{2}).
\]
Therefore, after dividing on $\sqrt{2}$, we have
\[
\bJ(\Lambda^{1,0}_1)=\bJ(\Lambda^{1,0}_2)=\bJ(\Lambda^{1,0}_3)=
\begin{array}{||ccc||}
1 & 0 & 0\\
0 & 0 & 0\\
0 & 0 &-1
\end{array}.
\]

3) The field $(3/2,0)\oplus(0,3/2)$. In its turn, this field is defined within the following spin-3/2 chain (spin quadruplet):
\begin{equation}\label{3Chain}
\unitlength=1mm
\begin{picture}(60,13)
\put(0,5){$\overset{(0,\frac{3}{2})}{\bullet}$}
\put(20,5){$\overset{(\frac{1}{2},1)}{\bullet}$}
\put(40,5){$\overset{(1,\frac{1}{2})}{\bullet}$}
\put(60,5){$\overset{(\frac{3}{2},0)}{\bullet}$}
\put(3,6){\line(1,0){20}}
\put(23,6){\line(1,0){20}}
\put(43,6){\line(1,0){20}}
\put(-0.5,0){$-\frac{3}{2}$}
\put(19.5,0){$-\frac{1}{2}$}
\put(42.5,0){$\frac{1}{2}$}
\put(62.5,0){$\frac{3}{2}$}
\put(6,0){$\cdots$}
\put(11.5,0){$\cdots$}
\put(17.5,0){$\cdot$}
\put(26,0){$\cdots$}
\put(31.5,0){$\cdots$}
\put(37.5,0){$\cdots$}
\put(46,0){$\cdots$}
\put(51.5,0){$\cdots$}
\put(57.5,0){$\cdots$}
\end{picture}
\end{equation}
From (\ref{L3})--(\ref{L2}) it follows that matrices $\Lambda^{\frac{3}{2},0}_j$ have the form
\[
\Lambda^{\frac{3}{2},0}_1=c_{\frac{3}{2}\frac{3}{2}}\begin{array}{||cccc||} 0 &
\frac{\sqrt{3}}{2} & 0 & 0\\
\frac{\sqrt{3}}{2} & 0 & 1 & 0\\
0 & 1 & 0 & \frac{\sqrt{3}}{2}\\
0 & 0 & \frac{\sqrt{3}}{2} & 0
\end{array},\quad
\Lambda^{\frac{3}{2},0}_2=c_{\frac{3}{2}\frac{3}{2}}\begin{array}{||cccc||} 0 &
-i\frac{\sqrt{3}}{2} & 0 & 0\\
i\frac{\sqrt{3}}{2} & 0 & -i & 0\\
0 & i & 0 & -i\frac{\sqrt{3}}{2}\\
0 & 0 & i\frac{\sqrt{3}}{2} & 0
\end{array},
\]
\[
\Lambda^{\frac{3}{2},0}_3=\frac{1}{2}c_{\frac{3}{2}\frac{3}{2}}\begin{array}{||cccc||} 3 & 0 & 0 & 0\\
0 & 1 & 0 & 0\\
0 & 0 & -1 & 0\\
0 & 0 & 0 & -3
\end{array}.
\]
Characteristic polynomials are (at $c_{\frac{3}{2}\frac{3}{2}}=1$)
\[
\boldsymbol{\Delta}_1(\lambda)=\left|\lambda\boldsymbol{1}_4-\Lambda^{\frac{3}{2},0}_1\right|=\begin{vmatrix}
\lambda & -\frac{\sqrt{3}}{2} & 0 & 0\\
-\frac{\sqrt{3}}{2} & \lambda & -1 & 0\\
0 & -1 & \lambda & -\frac{\sqrt{3}}{2}\\
0 & 0 & -\frac{\sqrt{3}}{2} & \lambda
\end{vmatrix}=\lambda^4-\frac{5}{4}\lambda^2+\frac{9}{16},
\]
\[
\boldsymbol{\Delta}_2(\lambda)=\left|\lambda\boldsymbol{1}_4-\Lambda^{\frac{3}{2},0}_2\right|=\begin{vmatrix}
\lambda & i\frac{\sqrt{3}}{2} & 0 & 0\\
-i\frac{\sqrt{3}}{2} & \lambda & i & 0\\
0 & -i & \lambda & i\frac{\sqrt{3}}{2}\\
0 & 0 & -i\frac{\sqrt{3}}{2} & \lambda
\end{vmatrix}=\lambda^4-\frac{5}{4}\lambda^2+\frac{9}{16},
\]
\[
\boldsymbol{\Delta}_3(\lambda)=\left|\lambda\boldsymbol{1}_4-\Lambda^{\frac{3}{2},0}_3\right|=\begin{vmatrix}
\lambda-\frac{3}{2} & 0 & 0 & 0\\
0 & \lambda-\frac{1}{2} & 0 & 0\\
0 & 0 & \lambda+\frac{1}{2} & 0\\
0 & 0 & 0 & \lambda+\frac{3}{2}
\end{vmatrix}=(\lambda-\tfrac{3}{2})(\lambda-\tfrac{1}{2})(\lambda+\tfrac{1}{2})(\lambda+\tfrac{3}{2}).
\]
Multiplying by 16 the equation $\boldsymbol{\Delta}_1(\lambda)=0$ (or $\boldsymbol{\Delta}_2(\lambda)=0$), we obtain
\[
16\lambda^4-40\lambda^2+9=0.
\]
The roots of this biquadratic equation are $\lambda_1=3/2$, $\lambda_2=-3/2$, $\lambda_3=1/2$, $\lambda_4=-1/2$. Therefore,
\[
\bJ(\Lambda^{\frac{3}{2},0}_1)=\bJ(\Lambda^{\frac{3}{2},0}_2)=\bJ(\Lambda^{\frac{3}{2},0}_3)=
\begin{array}{||cccc||}
\frac{3}{2} & 0 & 0 & 0\\
0 & \frac{1}{2} & 0 & 0\\
0 & 0 & -\frac{1}{2} & 0\\
0 & 0 & 0 & -\frac{3}{2}
\end{array}.
\]

4) The field $(2,0)\oplus(0,2)$. This field is defined within the following spin-2 chain (spin 5-plet):
\begin{equation}\label{5Chain}
\unitlength=1mm
\begin{picture}(80,13)
\put(0,5){$\overset{(0,2)}{\bullet}$}
\put(20,5){$\overset{(\frac{1}{2},\frac{3}{2})}{\bullet}$}
\put(40,5){$\overset{(1,1)}{\bullet}$}
\put(60,5){$\overset{(\frac{3}{2},\frac{1}{2})}{\bullet}$}
\put(80,5){$\overset{(2,0)}{\bullet}$}
\put(3,6){\line(1,0){20}}
\put(23,6){\line(1,0){20}}
\put(43,6){\line(1,0){20}}
\put(63,6){\line(1,0){20}}
\put(-0.5,0){$-2$}
\put(19.5,0){$-1$}
\put(42.5,0){$0$}
\put(62.5,0){$1$}
\put(82.5,0){$2$}
\put(6,0){$\cdots$}
\put(11.5,0){$\cdots$}
\put(17.5,0){$\cdot$}
\put(26,0){$\cdots$}
\put(31.5,0){$\cdots$}
\put(37.5,0){$\cdots$}
\put(46,0){$\cdots$}
\put(51.5,0){$\cdots$}
\put(57.5,0){$\cdots$}
\put(66,0){$\cdots$}
\put(71.5,0){$\cdots$}
\put(77.5,0){$\cdots$}
\end{picture}
\end{equation}
In this case the matrices $\Lambda^{2,0}_j$ have the form
\[
\Lambda^{2,0}_1=c_{22}\begin{array}{||ccccc||} 0 & 1 & 0 & 0
& 0\\
1 & 0 & \frac{\sqrt{6}}{2} & 0 & 0\\
0 & \frac{\sqrt{6}}{2} & 0 & \frac{\sqrt{6}}{2} & 0\\
0 & 0 & \frac{\sqrt{6}}{2} & 0 & 1\\
0 & 0 & 0 & 1 & 0
\end{array},\quad
\Lambda^{2,0}_2=c_{22}\begin{array}{||ccccc||}0 & -i & 0 & 0
& 0\\
i & 0 & -i\frac{\sqrt{6}}{2} & 0 & 0\\
0 & i\frac{\sqrt{6}}{2} & 0 & -i\frac{\sqrt{6}}{2} & 0\\
0 & 0 & i\frac{\sqrt{6}}{2} & 0 & -i\\
0 & 0 & 0 & i & 0
\end{array},
\]
\[
\Lambda^{2,0}_3=c_{22}\begin{array}{||ccccc||} 2 & 0 & 0 & 0
& 0\\
0 & 1 & 0 & 0 & 0\\
0 & 0 & 0 & 0 & 0\\
0 & 0 & 0 & -1 & 0\\
0 & 0 & 0 & 0 & -2
\end{array}.
\]
At $c_{22}=1$ for the characteristic polynomials we have
\[
\boldsymbol{\Delta}_1(\lambda)=\left|\lambda\boldsymbol{1}_5-\Lambda^{2,0}_1\right|=\begin{vmatrix}
\lambda & -1 & 0 & 0 & 0\\
-1 & \lambda & -\frac{\sqrt{6}}{2} & 0 & 0\\
0 & -\frac{\sqrt{6}}{2} & \lambda & -\frac{\sqrt{6}}{2} & 0\\
0 & 0 & -\frac{\sqrt{6}}{2} & \lambda & -1\\
0 & 0 & 0 & -1 & \lambda
\end{vmatrix}=\lambda^5-5\lambda^3+4\lambda,
\]
\[
\boldsymbol{\Delta}_2(\lambda)=\left|\lambda\boldsymbol{1}_5-\Lambda^{2,0}_2\right|=\begin{vmatrix}
\lambda & i & 0 & 0 & 0\\
-i & \lambda & i\frac{\sqrt{6}}{2} & 0 & 0\\
0 & -i\frac{\sqrt{6}}{2} & \lambda & i\frac{\sqrt{6}}{2} & 0\\
0 & 0 & -i\frac{\sqrt{6}}{2} & \lambda & i\\
0 & 0 & 0 & -i & \lambda
\end{vmatrix}=\lambda^5-5\lambda^3+4\lambda,
\]
\[
\boldsymbol{\Delta}_3(\lambda)=\left|\lambda\boldsymbol{1}_5-\Lambda^{2,0}_3\right|=\begin{vmatrix}
\lambda-2 & 0 & 0 & 0 & 0\\
0 & \lambda-1 & 0 & 0 & 0\\
0 & 0 & \lambda & 0 & 0\\
0 & 0 & 0 & \lambda+1 & 0\\
0 & 0 & 0 & 0 & \lambda+2
\end{vmatrix}=(\lambda-2)(\lambda-1)\lambda(\lambda+1)(\lambda+2).
\]
In this case, the equation $\boldsymbol{\Delta}_1(\lambda)=0$ (or $\boldsymbol{\Delta}_2(\lambda)=0$) leads to
\[
\lambda^5-5\lambda^3+4\lambda=0.
\]
The roots of this equation are $\lambda_1=0$, $\lambda_2=-1$, $\lambda_3=1$, $\lambda_4=-2$, $\lambda_5=2$. Thus,
\[
\bJ(\Lambda^{2,0}_1)=\bJ(\Lambda^{2,0}_2)=\bJ(\Lambda^{2,0}_3)=
\begin{array}{||ccccc||}
2 & 0 & 0 & 0 & 0\\
0 & 1 & 0 & 0 & 0\\
0 & 0 & 0 & 0 & 0\\
0 & 0 & 0 & -1& 0\\
0 & 0 & 0 & 0 & -2
\end{array}.
\]

5) The field $(5/2,0)\oplus(0,5/2)$. In its turn, this field is formulated within the following spin-5/2 chain (spin 6-plet):
\begin{equation}\label{6Chain}
\unitlength=1mm
\begin{picture}(100,13)
\put(0,5){$\overset{(0,\frac{5}{2})}{\bullet}$}
\put(20,5){$\overset{(\frac{1}{2},2)}{\bullet}$}
\put(40,5){$\overset{(1,\frac{3}{2})}{\bullet}$}
\put(60,5){$\overset{(\frac{3}{2},1)}{\bullet}$}
\put(80,5){$\overset{(2,\frac{1}{2})}{\bullet}$}
\put(100,5){$\overset{(\frac{5}{2},0)}{\bullet}$}
\put(3,6){\line(1,0){20}}
\put(23,6){\line(1,0){20}}
\put(43,6){\line(1,0){20}}
\put(63,6){\line(1,0){20}}
\put(83,6){\line(1,0){20}}
\put(-0.5,0){$-\frac{5}{2}$}
\put(19.5,0){$-\frac{3}{2}$}
\put(39.5,0){$-\frac{1}{2}$}
\put(62.5,0){$\frac{1}{2}$}
\put(82.5,0){$\frac{3}{2}$}
\put(102.5,0){$\frac{5}{2}$}
\put(6,0){$\cdots$}
\put(11.5,0){$\cdots$}
\put(17.5,0){$\cdot$}
\put(26,0){$\cdots$}
\put(31.5,0){$\cdots$}
\put(37.5,0){$\cdot$}
\put(46,0){$\cdots$}
\put(51.5,0){$\cdots$}
\put(57.5,0){$\cdots$}
\put(66,0){$\cdots$}
\put(71.5,0){$\cdots$}
\put(77.5,0){$\cdots$}
\put(86,0){$\cdots$}
\put(91.5,0){$\cdots$}
\put(97.5,0){$\cdots$}
\end{picture}
\end{equation}
An explicit form of $\Lambda^{\frac{5}{2},0}_j$ we obtain from (\ref{L3})--(\ref{L2}) at $l=5/2$:
\[
\Lambda^{\frac{5}{2},0}_1=c_{\frac{5}{2}\frac{5}{2}}\begin{array}{||cccccc||} 0 &
\frac{\sqrt{5}}{2} & 0 & 0 & 0 & 0\\
\frac{\sqrt{5}}{2} & 0 & \sqrt{2} & 0 & 0 & 0\\
0 & \sqrt{2} & 0 & \frac{3}{2} & 0 & 0\\
0 & 0 & \frac{3}{2} & 0 & \sqrt{2} & 0\\
0 & 0 & 0 & \sqrt{2} & 0 & \frac{\sqrt{5}}{2}\\
0 & 0 & 0 & 0 & \frac{\sqrt{5}}{2} & 0
\end{array},\quad
\Lambda^{\frac{5}{2},0}_2=c_{\frac{5}{2}\frac{5}{2}}\begin{array}{||cccccc||} 0 &
-i\frac{\sqrt{5}}{2} & 0 & 0 & 0 & 0\\
i\frac{\sqrt{5}}{2} & 0 & -i\sqrt{2} & 0 & 0 & 0\\
0 & i\sqrt{2} & 0 & -i\frac{3}{2} & 0 & 0\\
0 & 0 & i\frac{3}{2} & 0 & -i\sqrt{2} & 0\\
0 & 0 & 0 & i\sqrt{2} & 0 & -i\frac{\sqrt{5}}{2}\\
0 & 0 & 0 & 0 & i\frac{\sqrt{5}}{2} & 0
\end{array},
\]
\[
\Lambda^{\frac{5}{2},0}_3=\frac{c_{\frac{5}{2}\frac{5}{2}}}{2}\begin{array}{||cccccc||} 5 & 0 & 0 &
0 & 0 & 0\\
0 & 3 & 0 & 0 & 0 & 0\\
0 & 0 & 1 & 0 & 0 & 0\\
0 & 0 & 0 & -1& 0 & 0\\
0 & 0 & 0 & 0 & -3& 0\\
0 & 0 & 0 & 0 & 0 & -5
\end{array}.
\]
Characteristic polynomials are (at $c_{\frac{5}{2}\frac{5}{2}}=1$)
\[
\boldsymbol{\Delta}_1(\lambda)=\left|\lambda\boldsymbol{1}_6-\Lambda^{\frac{5}{2},0}_1\right|=\begin{vmatrix}
\lambda & -\frac{\sqrt{5}}{2} & 0 & 0 & 0 & 0\\
-\frac{\sqrt{5}}{2} & \lambda & -\sqrt{2} & 0 & 0 & 0\\
0 & -\sqrt{2} & \lambda & -\frac{3}{2} & 0 & 0\\
0 & 0 & -\frac{3}{2} & \lambda & -\sqrt{2} & 0\\
0 & 0 & 0 & -\sqrt{2} & \lambda & -\frac{\sqrt{5}}{2}\\
0 & 0 & 0 & 0 & -\frac{\sqrt{5}}{2} & \lambda
\end{vmatrix}=\lambda^6-\frac{35}{4}\lambda^4+\frac{259}{16}\lambda^2-\frac{225}{64},
\]
\[
\boldsymbol{\Delta}_2(\lambda)=\left|\lambda\boldsymbol{1}_6-\Lambda^{\frac{5}{2},0}_2\right|=\begin{vmatrix}
\lambda & i\frac{\sqrt{5}}{2} & 0 & 0 & 0 & 0\\
-i\frac{\sqrt{5}}{2} & \lambda & i\sqrt{2} & 0 & 0 & 0\\
0 & -i\sqrt{2} & \lambda & i\frac{3}{2} & 0 & 0\\
0 & 0 & -i\frac{3}{2} & \lambda & i\sqrt{2} & 0\\
0 & 0 & 0 & -i\sqrt{2} & \lambda & i\frac{\sqrt{5}}{2}\\
0 & 0 & 0 & 0 & -i\frac{\sqrt{5}}{2} & \lambda
\end{vmatrix}=\lambda^6-\frac{35}{4}\lambda^4+\frac{259}{16}\lambda^2-\frac{225}{64},
\]
\begin{multline}
\boldsymbol{\Delta}_3(\lambda)=\left|\lambda\boldsymbol{1}_6-\Lambda^{\frac{5}{2},0}_3\right|=\begin{vmatrix}
\lambda-\frac{5}{2} & 0 & 0 & 0 & 0 & 0\\
0 & \lambda-\frac{3}{2} & 0 & 0 & 0 & 0\\
0 & 0 & \lambda-\frac{1}{2} & 0 & 0 & 0\\
0 & 0 & 0 & \lambda+\frac{1}{2} & 0 & 0\\
0 & 0 & 0 & 0 & \lambda+\frac{3}{2} & 0\\
0 & 0 & 0 & 0 & 0 & \lambda+\frac{5}{2}
\end{vmatrix}=\\
(\lambda-\tfrac{5}{2})(\lambda-\tfrac{3}{2})(\lambda-\tfrac{1}{2})(\lambda+\tfrac{1}{2})
(\lambda+\tfrac{3}{2})(\lambda+\tfrac{5}{2}).\nonumber
\end{multline}
Multiplying by 64 the equation $\boldsymbol{\Delta}_1(\lambda)=0$ (or $\boldsymbol{\Delta}_2(\lambda)=0$), we obtain
\[
64\lambda^6-560\lambda^4+1036\lambda^2-225=0.
\]
The roots of this equation are $\lambda_1=5/2$, $\lambda_2=3/2$, $\lambda_3=1/2$, $\lambda_4=-1/2$, $\lambda_5=-3/2$, $\lambda_6=-5/2$. Therefore,
\[
\bJ(\Lambda^{\frac{5}{2},0}_1)=\bJ(\Lambda^{\frac{5}{2},0}_2)=\bJ(\Lambda^{\frac{5}{2},0}_3)=
\begin{array}{||cccccc||}
\frac{5}{2} & 0 & 0 & 0 & 0 & 0\\
0 & \frac{3}{2} & 0 & 0 & 0 & 0\\
0 & 0 & \frac{1}{2} & 0 & 0 & 0\\
0 & 0 & 0 & -\frac{1}{2}& 0 & 0\\
0 & 0 & 0 & 0 & -\frac{3}{2}& 0\\
0 & 0 & 0 & 0 & 0 & -\frac{5}{2}
\end{array}.
\]

Further, moving up on the representation cone (Fig.\,1), we see that for the fields of type $(l,0)$ the characteristic equation
\[
\boldsymbol{\Delta}_1(\lambda)=\left|\lambda\boldsymbol{1}_{2l+1}-\Lambda^{l,0}_1\right|=0
\]
leads to
\begin{equation}\label{Aeq}
a_m\lambda^m+a_{m-2}\lambda^{m-2}+a_{m-4}\lambda^{m-4}+\ldots+a_0=0
\end{equation}
when $l$ is a half-integer number, $m=2l+1$ is an even number (dimensionality of the space $\Sym_{(k,0)}$), and to
\begin{equation}\label{Beq}
b_m\lambda^m+b_{m-2}\lambda^{m-2}+b_{m-4}\lambda^{m-4}+\ldots+b_1\lambda=0
\end{equation}
when $l$ is an integer number, $m=2l+1$ is an odd number. In both cases the roots of (\ref{Aeq}) and (\ref{Beq}) are $-l$, $-l+1$, $-l+2$, $\ldots$, $l$. Therefore, Jordan form of the field $(l,0)$ is defined by (\ref{SimpleJ}). It is obvious that the same Jordan form we have for the field $(0,\dot{l})$.

II) \textit{The fields of type $(l,\dot{l})$}.
Let us consider in detail the structure of the operators $\boldsymbol{\Lambda}=(\Lambda^{l\dot{l}}_1,\Lambda^{l\dot{l}}_2,\Lambda^{l\dot{l}}_3)$ and $\overset{\ast}{\boldsymbol{\Lambda}}=(\Lambda^{\dot{l}l}_1,\Lambda^{\dot{l}l}_2,\Lambda^{\dot{l}l}_3)$ for the fields of type $(l,\dot{l})$, $(\dot{l},l)$ and $(l,\dot{l})\oplus(\dot{l},l)$ (arbitrary spin chains, that is, the fields which occurring inside the representation cone on the Fig.\,1). For simplicity we consider structure of the operators $\boldsymbol{\Lambda}$ and $\overset{\ast}{\boldsymbol{\Lambda}}$ on the spin-1/2 and dual spin-1/2 lines (for other spin lines all the calculations are analogous). Inside the representation cone we see that spin-1/2 and dual spin-1/2 lines form two different spin chains (spin doublets): $\left(s,\tfrac{k}{2}\right)\longleftrightarrow\left(\tfrac{k}{2},s\right)$ and $\left(\tfrac{k}{2},s\right)\longleftrightarrow\left(s,\tfrac{k}{2}\right)$. Helicity basis for the state $\left(s,\tfrac{k}{2}\right)$ consists of the following ket-vectors:
\[
\left| ss;\tfrac{k}{2}\tfrac{k}{2}\right\rangle,\;\left| s,s-1;\tfrac{k}{2}\tfrac{k}{2}\right\rangle,\;\ldots,\;
\left| s,1;\tfrac{k}{2}\tfrac{k}{2}\right\rangle,\;\left| s,0;\tfrac{k}{2}\tfrac{k}{2}\right\rangle,\;\ldots
\left| s,-s;\tfrac{k}{2}\tfrac{k}{2}\right\rangle;
\]
\[
\left| ss;\tfrac{k}{2}\tfrac{k-2}{2}\right\rangle,\;\left| s,s-1;\tfrac{k}{2}\tfrac{k-2}{2}\right\rangle,\;\ldots,\;
\left| s,1;\tfrac{k}{2}\tfrac{k-2}{2}\right\rangle,\;\left| s,0;\tfrac{k}{2}\tfrac{k-2}{2}\right\rangle,\;\ldots
\left| s,-s;\tfrac{k}{2}\tfrac{k-2}{2}\right\rangle;
\]
\[
\ldots\ldots\ldots\ldots\ldots\ldots\ldots\ldots\ldots\ldots\ldots\ldots\ldots\ldots\ldots\ldots
\]
\[
\left| ss;\tfrac{k}{2}\tfrac{1}{2}\right\rangle,\;\left| s,s-1;\tfrac{k}{2}\tfrac{1}{2}\right\rangle,\;\ldots,\;
\left| s,1;\tfrac{k}{2}\tfrac{1}{2}\right\rangle,\;\left| s,0;\tfrac{k}{2}\tfrac{1}{2}\right\rangle,\;\ldots
\left| s,-s;\tfrac{k}{2}\tfrac{1}{2}\right\rangle;
\]
\[
\left| ss;\tfrac{k}{2},-\tfrac{1}{2}\right\rangle,\;\left| s,s-1;\tfrac{k}{2},-\tfrac{1}{2}\right\rangle,\;\ldots,\;
\left| s,1;\tfrac{k}{2},-\tfrac{1}{2}\right\rangle,\;\left| s,0;\tfrac{k}{2},-\tfrac{1}{2}\right\rangle,\;\ldots
\left| s,-s;\tfrac{k}{2},-\tfrac{1}{2}\right\rangle;
\]
\[
\ldots\ldots\ldots\ldots\ldots\ldots\ldots\ldots\ldots\ldots\ldots\ldots\ldots\ldots\ldots\ldots
\]
\begin{equation}\label{Ket_basis}
\left| ss;\tfrac{k}{2},-\tfrac{k}{2}\right\rangle,\;\left| s,s-1;\tfrac{k}{2},-\tfrac{k}{2}\right\rangle,\;\ldots,\;
\left| s,1;\tfrac{k}{2},-\tfrac{k}{2}\right\rangle,\;\left| s,0;\tfrac{k}{2},-\tfrac{k}{2}\right\rangle,\;\ldots
\left| s,-s;\tfrac{k}{2},-\tfrac{k}{2}\right\rangle.
\end{equation}
In its turn, helicity basis for the state $\left(\tfrac{k}{2},s\right)$ consists of the following bra-vectors:
\[
\left\langle\tfrac{k}{2}\tfrac{k}{2};ss\right|,\;\left\langle\tfrac{k}{2}\tfrac{k-2}{2};ss\right|,\;\ldots,\;
\left\langle\tfrac{k}{2},\tfrac{1}{2};ss\right|,\;\left\langle\tfrac{k}{2},-\tfrac{1}{2};ss\right|,\;\ldots,\;
\left\langle\tfrac{k}{2},-\tfrac{k}{2};ss\right|;
\]
\[
\left\langle\tfrac{k}{2}\tfrac{k}{2};s,s-1\right|,\;\left\langle\tfrac{k}{2}\tfrac{k-2}{2};s,s-1\right|,\;\ldots,\;
\left\langle\tfrac{k}{2},\tfrac{1}{2};s,s-1\right|,\;\left\langle\tfrac{k}{2},-\tfrac{1}{2};s,s-1\right|,\;\ldots,\;
\left\langle\tfrac{k}{2},-\tfrac{k}{2};s,s-1\right|;
\]
\[
\ldots\ldots\ldots\ldots\ldots\ldots\ldots\ldots\ldots\ldots\ldots\ldots\ldots\ldots\ldots\ldots
\]
\[
\left\langle\tfrac{k}{2}\tfrac{k}{2};s,0\right|,\;\left\langle\tfrac{k}{2}\tfrac{k-2}{2};s,0\right|,\;\ldots,\;
\left\langle\tfrac{k}{2},\tfrac{1}{2};s,0\right|,\;\left\langle\tfrac{k}{2},-\tfrac{1}{2};s,0\right|,\;\ldots,\;
\left\langle\tfrac{k}{2},-\tfrac{k}{2};s,0\right|;
\]
\[
\left\langle\tfrac{k}{2}\tfrac{k}{2};s,-1\right|,\;\left\langle\tfrac{k}{2}\tfrac{k-2}{2};s,-1\right|,\;\ldots,\;
\left\langle\tfrac{k}{2},\tfrac{1}{2};s,-1\right|,\;\left\langle\tfrac{k}{2},-\tfrac{1}{2};s,-1\right|,\;\ldots,\;
\left\langle\tfrac{k}{2},-\tfrac{k}{2};s,-1\right|;
\]
\[
\ldots\ldots\ldots\ldots\ldots\ldots\ldots\ldots\ldots\ldots\ldots\ldots\ldots\ldots\ldots\ldots
\]
\begin{equation}\label{Bra_basis}
\left\langle\tfrac{k}{2}\tfrac{k}{2};s,-s\right|,\;\left\langle\tfrac{k}{2}\tfrac{k-2}{2};s,-s\right|,\;\ldots,\;
\left\langle\tfrac{k}{2},\tfrac{1}{2};s,-s\right|,\;\left\langle\tfrac{k}{2},-\tfrac{1}{2};s,-s\right|,\;\ldots,\;
\left\langle\tfrac{k}{2},-\tfrac{k}{2};s,-s\right|.
\end{equation}
The matrix $\Lambda^{s\frac{k}{2}}_3$ in the ket-basis (\ref{Ket_basis}) has the form
\[
\Lambda^{s\frac{k}{2}}_3=\text{{\rm diag}}\left({}_1\Lambda^{s\frac{k}{2}}_3,\;{}_2\Lambda^{s\frac{k}{2}}_3,\;
{}_3\Lambda^{s\frac{k}{2}}_3,\;\ldots,\;{}_{\frac{k+1}{2}}\Lambda^{s\frac{k}{2}}_3,\;
-{}_{\frac{k+1}{2}}\Lambda^{s\frac{k}{2}}_3,\;\ldots,\;-{}_3\Lambda^{s\frac{k}{2}}_3,-{}_2\Lambda^{s\frac{k}{2}}_3,\;
-{}_1\Lambda^{s\frac{k}{2}}_3\right),
\]
where
\[
{}_1\Lambda^{s\frac{k}{2}}_3=\text{{\rm diag}}\left(\frac{sk}{2},\frac{(s-1)k}{2},\frac{(s-2)k}{2},\ldots,\frac{k}{2},0,-\frac{k}{2},\ldots,\frac{(-s+2)k}{2},
\frac{(-s+1)k}{2},-\frac{sk}{2}\right),
\]
\begin{multline}
{}_2\Lambda^{s\frac{k}{2}}_3=\text{{\rm diag}}\left(\frac{s(k-2)}{2},\frac{(s-1)(k-2)}{2},\frac{(s-2)(k-2)}{2},\ldots,\frac{k-2}{2},0,\right.\\
\left.-\frac{k-2}{2},\ldots,
\frac{(-s+2)(k-2)}{2},\frac{(-s+1)(k-2)}{2},-\frac{s(k-2)}{2}\right),\nonumber
\end{multline}
\begin{multline}
{}_3\Lambda^{s\frac{k}{2}}_3=\text{{\rm diag}}\left(\frac{s(k-4)}{2},\frac{(s-1)(k-4)}{2},\frac{(s-2)(k-4)}{2},\ldots,\frac{k-4}{2},0,\right.\\
\left.-\frac{k-4}{2},\ldots,
\frac{(-s+2)(k-4)}{2},\frac{(-s+1)(k-4)}{2},-\frac{s(k-4)}{2}\right),\nonumber
\end{multline}
\[
\ldots\ldots\ldots\ldots\ldots\ldots\ldots\ldots\ldots\ldots\ldots\ldots\ldots\ldots\ldots\ldots
\]
\[
{}_{\frac{k+1}{2}}\Lambda^{s\frac{k}{2}}_3=\text{{\rm diag}}\left(\frac{s}{2},\frac{s-1}{2},\frac{s-2}{2},\ldots,\frac{1}{2},0,-\frac{1}{2},\ldots,\frac{-s+2}{2},
\frac{-s+1}{2},-\frac{s}{2}\right).
\]
An explicit form of $\Lambda^{\frac{k}{2}s}_3$ in the bra-basis (\ref{Bra_basis}) is
\[
\Lambda^{\frac{k}{2}s}_3=\text{{\rm diag}}\left({}_1\Lambda^{\frac{k}{2}s}_3,\;{}_2\Lambda^{\frac{k}{2}s}_3,\;
{}_3\Lambda^{\frac{k}{2}s}_3,\;\ldots,\;{}_{s}\Lambda^{\frac{k}{2}s}_3,\;\bO_{k+1},
-{}_{s}\Lambda^{\frac{k}{2}s}_3,\;\ldots,\;-{}_3\Lambda^{\frac{k}{2}s}_3,-{}_2\Lambda^{\frac{k}{2}s}_3,\;
-{}_1\Lambda^{\frac{k}{2}s}_3\right),
\]
where
\[
{}_1\Lambda^{\frac{k}{2}s}_3=\text{{\rm diag}}\left(\frac{ks}{2},\frac{(k-2)s}{2},\frac{(k-4)s}{2},\ldots,\frac{s}{2},-\frac{s}{2},\ldots,\frac{(-k+4)s}{2},
\frac{(-k+2)s}{2},-\frac{ks}{2}\right),
\]
\begin{multline}
{}_2\Lambda^{\frac{k}{2}s}_3=\text{{\rm diag}}\left(\frac{k(s-1)}{2},\frac{(k-2)(s-1)}{2},\frac{(k-4)(s-1)}{2},\ldots,\frac{s-1}{2},\right.\\
\left.-\frac{s-1}{2},\ldots,\frac{(-k+4)(s-1)}{2},
\frac{(-k+2)(s-1)}{2},-\frac{k(s-1)}{2}\right),\nonumber
\end{multline}
\begin{multline}
{}_3\Lambda^{\frac{k}{2}s}_3=\text{{\rm diag}}\left(\frac{k(s-2)}{2},\frac{(k-2)(s-2)}{2},\frac{(k-4)(s-2)}{2},\ldots,\frac{s-2}{2},\right.\\
\left.-\frac{s-2}{2},\ldots,\frac{(-k+4)(s-2)}{2},
\frac{(-k+2)(s-2)}{2},-\frac{k(s-2)}{2}\right),\nonumber
\end{multline}
\[
\ldots\ldots\ldots\ldots\ldots\ldots\ldots\ldots\ldots\ldots\ldots\ldots\ldots\ldots\ldots\ldots
\]
\[
{}_s\Lambda^{\frac{k}{2}s}_3=\text{{\rm diag}}\left(\frac{k}{2},\frac{k-2}{2},\frac{k-4}{2},\ldots,\frac{1}{2},-\frac{1}{2},\ldots,\frac{-k+4}{2},
\frac{-k+2}{2},-\frac{k}{2}\right).
\]
Here $\bO_{k+1}$ is $(k+1)\times(k+1)$ zero matrix.
Characteristic polynomial of $\Lambda^{s\frac{k}{2}}_3$ is
\begin{multline}
\boldsymbol{\Delta}(\Lambda^{s\frac{k}{2}}_3)=(\lambda-sk/2)^{m_1}(\lambda-(1-s)k/2)^{m_2}(\lambda-(2-s)k/2)^{m_3}
\ldots(\lambda-1/2)^{m_p}\lambda^{2s}\times\\
\times(\lambda+sk/2)^{m_{-1}}(\lambda+(1-s)k/2)^{m_{-2}}(\lambda+(2-s)k/2)^{m_{-3}}\ldots(\lambda+1/2)^{m_{-p}},
\nonumber
\end{multline}
where
\[
2\leq m_1\leq m^1_{\text{{\rm max}}},
\]
\[
2\leq m_2\leq m^2_{\text{{\rm max}}},
\]
\[
\ldots\ldots\ldots\ldots\ldots
\]
\[
2\leq m_{-p}\leq m^{-p}_{\text{{\rm max}}},
\]
here $m^1_{\text{max}}$ is a number of products among $(2s+1)(k+1)$ products of the $sk$-basis, which equal to eigenvalue $sk/2$; $m^2_{\text{max}}$ is a number of products among $(2s+1)(k+1)$ products of the $sk$-basis, which equal to eigenvalue $(1-s)k/2$ and so on.

Jordan form of $\Lambda^{s\frac{k}{2}}_3$ is
\[
\bJ(\Lambda^{s\frac{k}{2}}_3)=\text{{\rm diag}}\left(\bJ({}_{s\frac{k}{2}}\Lambda^{s\frac{k}{2}}_3), \bJ({}_{\frac{(1-s)k}{2}}\Lambda^{s\frac{k}{2}}_3),\ldots,\bJ({}_{\frac{1}{2}}\Lambda^{s\frac{k}{2}}_3),
\bJ({}_{0}\Lambda^{s\frac{k}{2}}_3),\bJ({}_{-\frac{1}{2}}\Lambda^{s\frac{k}{2}}_3),\ldots,
\bJ({}_{-\frac{sk}{2}}\Lambda^{s\frac{k}{2}}_3)\right),
\]
where
\[
\bJ({}_{\frac{sk}{2}}\Lambda^{s\frac{k}{2}}_3)=\begin{Vmatrix}
\frac{sk}{2} & 1 & 0 & \dots & 0\\
0 & \frac{sk}{2} & 1 & \dots & 0\\
.&&\hdotsfor[3]{3}\\
.&&&\hdotsfor[3]{2}\\
.&&&.&1\\
0 & 0 & 0 & \dots & \frac{sk}{2}
\end{Vmatrix}
\]
is a Jordan cell corresponding to elementary divisor $(\lambda-sk/2)^{m_1}$ and so on.

Analogously, $\Lambda^{\frac{k}{2}s}_3$ has the same structure:
\begin{multline}
\boldsymbol{\Delta}(\Lambda^{\frac{k}{2}s}_3)=(\lambda-sk/2)^{m_1}(\lambda-(1-s)k/2)^{m_2}(\lambda-(2-s)k/2)^{m_3}
\ldots(\lambda-1/2)^{m_p}\lambda^{k+1}\times\\
\times(\lambda+sk/2)^{m_{-1}}(\lambda+(1-s)k/2)^{m_{-2}}(\lambda+(2-s)k/2)^{m_{-3}}\ldots(\lambda+1/2)^{m_{-p}},
\nonumber
\end{multline}
\[
2\leq m_1\leq m^1_{\text{{\rm max}}},
\]
\[
2\leq m_2\leq m^2_{\text{{\rm max}}},
\]
\[
\ldots\ldots\ldots\ldots\ldots
\]
\[
2\leq m_{-p}\leq m^{-p}_{\text{{\rm max}}},
\]
\[
\bJ(\Lambda^{\frac{k}{2}s}_3)=\text{{\rm diag}}\left(\bJ({}_{\frac{sk}{2}}\Lambda^{\frac{k}{2}s}_3), \bJ({}_{\frac{(1-s)k}{2}}\Lambda^{\frac{k}{2}s}_3),\ldots,\bJ({}_{\frac{1}{2}}\Lambda^{\frac{k}{2}s}_3),
\bJ({}_{0}\Lambda^{\frac{k}{2}s}_3),\bJ({}_{-\frac{1}{2}}\Lambda^{\frac{k}{2}s}_3),\ldots,
\bJ({}_{-\frac{sk}{2}}\Lambda^{\frac{k}{2}s}_3)\right),
\]
For the spin chain of type $\left(\tfrac{k}{2},s\right)\longleftrightarrow\left(s,\tfrac{k}{2}\right)$ all the calculations are analogous. At this point, ket-basis (\ref{Ket_basis}) and bra-basis (\ref{Bra_basis}) are replaced for the states $\left(\tfrac{k}{2},s\right)$ and $\left(s,\tfrac{k}{2}\right)$.

In conclusion of the proof let us consider several examples of the spin chains formed by the spin-1/2 and dual spin-1/2 lines.

1) Spin chain $\left(1,\tfrac{1}{2}\right)\longleftrightarrow\left(\tfrac{1}{2},1\right)$. This chain is a part of the spin quadruplet (\ref{3Chain}). The state $\left(1,\tfrac{1}{2}\right)$ presents a 6-dimensional representation in the space $\Sym_{(2,1)}$. Spinor structure for this state is defined by the product $\C_2\otimes\C_2\bigotimes\overset{\ast}{\C}_2$ with the spinspace $\dS_2\otimes\dS_2\bigotimes\dot{\dS}_2\simeq\dS_{2^3}$. Ket-vectors of the helicity basis are
\[
\left| 11;\tfrac{1}{2}\tfrac{1}{2}\right\rangle,\;\left| 10;\tfrac{1}{2}\tfrac{1}{2}\right\rangle,\;\left| 1,-1;\tfrac{1}{2}\tfrac{1}{2}\right\rangle,\;\left| 11;\tfrac{1}{2},-\tfrac{1}{2}\right\rangle,\;\left| 10;\tfrac{1}{2},-\tfrac{1}{2}\right\rangle,\;\left| 1,-1;\tfrac{1}{2},-\tfrac{1}{2}\right\rangle.
\]
Using this ket-basis, we obtain an explicit form of the matrix $\Lambda^{1\frac{1}{2}}_3$ from (\ref{L3T}) at $l=1$ and $\dot{l}=1/2$:
\[
\Lambda^{1\frac{1}{2}}_3=\frac{1}{2}c^{k^\prime
k;\dot{k}^\prime\dot{k}}_{11;\frac{1}{2}\frac{1}{2}}\begin{array}{||cccccc||}
1 & 0 & 0 & 0 & 0 & 0\\
0 & 0 & 0 & 0 & 0 & 0\\
0 & 0 & -1& 0 & 0 & 0\\
0 & 0 & 0 & -1& 0 & 0\\
0 & 0 & 0 & 0 & 0 & 0\\
0 & 0 & 0 & 0 & 0 & 1
\end{array}.
\]
Characteristic polynomial for $\Lambda^{1\frac{1}{2}}_3$ (at $c^{k^\prime
k;\dot{k}^\prime\dot{k}}_{11;\frac{1}{2}\frac{1}{2}}=1$) is
\[
\boldsymbol{\Delta}(\Lambda^{1\frac{1}{2}}_3)=(\lambda-1/2)^2\lambda^2(\lambda+1/2)^2
\]
and for the Jordan form we have
\[
\bJ(\Lambda^{1\frac{1}{2}}_3)=\text{{\rm diag}}\left(\bJ({}_{\frac{1}{2}}\Lambda^{1\frac{1}{2}}_3),
\bJ({}_0\Lambda^{1\frac{1}{2}}_3),\bJ({}_{-\frac{1}{2}}\Lambda^{1\frac{1}{2}}_3)\right),
\]
where
\[
\bJ({}_{\frac{1}{2}}\Lambda^{1\frac{1}{2}}_3)=\begin{array}{||cc||}
\frac{1}{2} & 1\\
0 & \frac{1}{2}
\end{array},\quad
\bJ({}_0\Lambda^{1\frac{1}{2}}_3)=\begin{array}{||cc||}
0 & 1\\
0 & 0
\end{array},\quad
\bJ({}_{-\frac{1}{2}}\Lambda^{1\frac{1}{2}}_3)=\begin{array}{||cc||}
-\frac{1}{2} & 1\\
0 & -\frac{1}{2}
\end{array}.
\]
System (\ref{BS2}) for the spin quadruplet (\ref{3Chain}) takes the following form:
\[
\sum^3_{j=1}\Lambda^{\frac{3}{2},0}_j\frac{\partial\psi}{\partial a_j}-
i\sum^3_{j=1}\Lambda^{\frac{3}{2},0}_j\frac{\partial\psi}{\partial a^\ast_j}+
m^{(3/2)}\dot{\psi}=0,
\]
\[
\sum^3_{j=1}\Lambda^{1,\frac{1}{2}}_j\frac{\partial\psi}{\partial a_j}-
i\sum^3_{j=1}\Lambda^{1,\frac{1}{2}}_j\frac{\partial\psi}{\partial a^\ast_j}+
m^{(1/2)}\dot{\psi}=0,
\]
\[
\sum^3_{j=1}\overset{\ast}{\Lambda}{}^{\frac{1}{2},1}_j\frac{\partial\dot{\psi}}
{\partial\widetilde{a}_j}+i\sum^3_{j=1}\overset{\ast}{\Lambda}{}^{\frac{1}{2},1}_j
\frac{\partial\dot{\psi}}{\partial\widetilde{a}^\ast_j}+
m^{(1/2)}\psi=0,
\]
\[
\sum^3_{j=1}\overset{\ast}{\Lambda}{}^{0,\frac{3}{2}}_j\frac{\partial\dot{\psi}}
{\partial\widetilde{a}_j}+i\sum^3_{j=1}\overset{\ast}{\Lambda}{}^{0,\frac{3}{2}}_j
\frac{\partial\dot{\psi}}{\partial\widetilde{a}^\ast_j}+
m^{(3/2)}\psi=0.
\]

2) Spin chain $\left(\tfrac{3}{2},1\right)\longleftrightarrow\left(1,\tfrac{3}{2}\right)$. This chain is a part of the spin 6-plet (\ref{6Chain}). The state $\left(\tfrac{3}{2},1\right)$ presents a 12-dimensional representation in the space $\Sym_{(3,2)}$. Spinor structure for this state is defined by the tensor product $\C_2\otimes\C_2\otimes\C_2\bigotimes\overset{\ast}{\C}_2\otimes\overset{\ast}{\C}_2$ with the spinspace $\dS_2\otimes\dS_2\otimes\dS_2\bigotimes\dot{\dS}_2\otimes\dot{\dS}_2\simeq\dS_{2^5}$. In this case, ket-vectors of the helicity basis are
\[
\left|\tfrac{3}{2}\tfrac{3}{2};11\right\rangle,\;\left|\tfrac{3}{2}\tfrac{1}{2};11\right\rangle,\;\left|
\tfrac{3}{2},-\tfrac{1}{2};11\right\rangle,\;\left|\tfrac{3}{2},-\tfrac{3}{2};11\right\rangle,\;\left| \tfrac{3}{2}\tfrac{3}{2};10\right\rangle,\;\left|\tfrac{3}{2}\tfrac{1}{2};10\right\rangle.
\]
\[
\left|\tfrac{3}{2},-\tfrac{1}{2};10\right\rangle,\;\left|\tfrac{3}{2},-\tfrac{3}{2};10\right\rangle,\;\left|
\tfrac{3}{2}\tfrac{3}{2};1,-1\right\rangle,\;\left|\tfrac{3}{2}\tfrac{1}{2};1,-1\right\rangle,\;\left| \tfrac{3}{2},-\tfrac{1}{2};1,-1\right\rangle,\;\left|\tfrac{3}{2},-\tfrac{3}{2};1,-1\right\rangle.
\]
Using this ket-basis and (\ref{L3T}) at $l=3/2$ and $\dot{l}=1$, we obtain
\[
\Lambda^{\frac{3}{2}1}_3=\frac{1}{2}c^{k^\prime
k;\dot{k}^\prime\dot{k}}_{\frac{3}{2}\frac{3}{2};11}\begin{array}{||ccc||}
{}_1\Lambda^{\frac{3}{2}1}_3 & \bO_4 & \bO_4\\
\bO_4 & \bO_4 & \bO_4\\
\bO_4 & \bO_4 &-{}_1\Lambda^{\frac{3}{2}1}_3
\end{array},
\]
where
\[
{}_1\Lambda^{\frac{3}{2}1}_3=\begin{array}{||cccc||}
3 & 0 & 0 & 0\\
0 & 1 & 0 & 0\\
0 & 0 &-1 & 0\\
0 & 0 & 0 &-3
\end{array}
\]
and $\bO_4$ is $4\times 4$ zero matrix. Characteristic polynomial is
\[
\boldsymbol{\Delta}(\Lambda^{\frac{3}{2}1}_3)=(\lambda-3/2)^2(\lambda-1/2)^2\lambda^4(\lambda+1/2)^2(\lambda+3/2)^2.
\]
and for the Jordan form we obtain
\[
\bJ(\Lambda^{\frac{3}{2}1}_3)=\text{{\rm diag}}\left(\bJ({}_{\frac{3}{2}}\Lambda^{\frac{3}{2}1}_3),\bJ({}_{\frac{1}{2}}\Lambda^{\frac{3}{2}1}_3)
\bJ({}_0\Lambda^{\frac{3}{2}1}_3),\bJ({}_{-\frac{1}{2}}\Lambda^{\frac{3}{2}1}_3),
\bJ({}_{-\frac{3}{2}}\Lambda^{\frac{3}{2}1}_3),\right),
\]
where
\[
\bJ({}_{\frac{3}{2}}\Lambda^{\frac{3}{2}1}_3)=\begin{array}{||cc||}
\frac{3}{2} & 1\\
0 & \frac{3}{2}
\end{array},\quad
\bJ({}_{\frac{1}{2}}\Lambda^{\frac{3}{2}1}_3)=\begin{array}{||cc||}
\frac{1}{2} & 1\\
0 & \frac{1}{2}
\end{array},\quad
\bJ({}_0\Lambda^{\frac{3}{2}1}_3)=\begin{array}{||cccc||}
0 & 1 & 0 & 0\\
0 & 0 & 1 & 0\\
0 & 0 & 0 & 1\\
0 & 0 & 0 & 0
\end{array}.
\]
In its turn, system (\ref{BS2}) in the case of 6-plet (\ref{6Chain}) takes the following form:
\[
\sum^3_{j=1}\Lambda^{\frac{5}{2},0}_j\frac{\partial\psi}{\partial a_j}-
i\sum^3_{j=1}\Lambda^{\frac{5}{2},0}_j\frac{\partial\psi}{\partial a^\ast_j}+
m^{(5/2)}\dot{\psi}=0,
\]
\[
\sum^3_{j=1}\Lambda^{2,\frac{1}{2}}_j\frac{\partial\psi}{\partial a_j}-
i\sum^3_{j=1}\Lambda^{2,\frac{1}{2}}_j\frac{\partial\psi}{\partial a^\ast_j}+
m^{(3/2)}\dot{\psi}=0,
\]
\[
\sum^3_{j=1}\Lambda^{\frac{3}{2},1}_j\frac{\partial\psi}{\partial a_j}-
i\sum^3_{j=1}\Lambda^{\frac{3}{2},1}_j\frac{\partial\psi}{\partial a^\ast_j}+
m^{(1/2)}\dot{\psi}=0,
\]
\[
\sum^3_{j=1}\overset{\ast}{\Lambda}{}^{1,\frac{3}{2}}_j\frac{\partial\dot{\psi}}
{\partial\widetilde{a}_j}+i\sum^3_{j=1}\overset{\ast}{\Lambda}{}^{1,\frac{3}{2}}_j
\frac{\partial\dot{\psi}}{\partial\widetilde{a}^\ast_j}+
m^{(1/2)}\psi=0,
\]
\[
\sum^3_{j=1}\overset{\ast}{\Lambda}{}^{\frac{1}{2},2}_j\frac{\partial\dot{\psi}}
{\partial\widetilde{a}_j}+i\sum^3_{j=1}\overset{\ast}{\Lambda}{}^{\frac{1}{2},2}_j
\frac{\partial\dot{\psi}}{\partial\widetilde{a}^\ast_j}+
m^{(3/2)}\psi=0,
\]
\[
\sum^3_{j=1}\overset{\ast}{\Lambda}{}^{0,\frac{5}{2}}_j\frac{\partial\dot{\psi}}
{\partial\widetilde{a}_j}+i\sum^3_{j=1}\overset{\ast}{\Lambda}{}^{0,\frac{5}{2}}_j
\frac{\partial\dot{\psi}}{\partial\widetilde{a}^\ast_j}+
m^{(5/2)}\psi=0.
\]
3) Spin chain $\left(2,\tfrac{3}{2}\right)\longleftrightarrow\left(\tfrac{3}{2},2\right)$. The state $\left(2,\tfrac{3}{2}\right)$ presents a 20-dimensional representation of the group $\SL(2,\C)$ in the space $\Sym_{(4,3)}$. This representation finishes the first cell of spinorial chessboard (see Fig.\,1). Spinor structure for this state is defined by the tensor product $\C_2\otimes\C_2\otimes\C_2\otimes\C_2\bigotimes\overset{\ast}{\C}_2\otimes\overset{\ast}{\C}_2\otimes\overset{\ast}{\C}_2$ with the spinspace $\dS_2\otimes\dS_2\otimes\dS_2\otimes\dS_2\bigotimes\dot{\dS}_2\otimes\dot{\dS}_2\otimes\dot{\dS}_2\simeq\dS_{2^7}$. In this case, ket-vectors of the helicity basis are
\[
\left|2,2;\tfrac{3}{2},\tfrac{3}{2}\right\rangle,\;\left|2,1;\tfrac{3}{2},\tfrac{3}{2}\right\rangle,\;
\left|2,0;\tfrac{3}{2},\tfrac{3}{2}\right\rangle,\;\left|2,-1;\tfrac{3}{2},\tfrac{3}{2}\right\rangle,\;
\left|2,-2;\tfrac{3}{2},\tfrac{3}{2}\right\rangle,
\]
\[
\left|2,2;\tfrac{3}{2},\tfrac{1}{2}\right\rangle,\;\left|2,1;\tfrac{3}{2},\tfrac{1}{2}\right\rangle,\;
\left|2,0;\tfrac{3}{2},\tfrac{1}{2}\right\rangle,\;\left|2,-1;\tfrac{3}{2},\tfrac{1}{2}\right\rangle,\;
\left|2,-2;\tfrac{3}{2},\tfrac{1}{2}\right\rangle,
\]
\[
\left|2,2;\tfrac{3}{2},-\tfrac{1}{2}\right\rangle,\;\left|2,1;\tfrac{3}{2},-\tfrac{1}{2}\right\rangle,\;
\left|2,0;\tfrac{3}{2},-\tfrac{1}{2}\right\rangle,\;\left|2,-1;\tfrac{3}{2},-\tfrac{1}{2}\right\rangle,\;
\left|2,-2;\tfrac{3}{2},-\tfrac{1}{2}\right\rangle,
\]
\[
\left|2,2;\tfrac{3}{2},-\tfrac{3}{2}\right\rangle,\;\left|2,1;\tfrac{3}{2},-\tfrac{3}{2}\right\rangle,\;
\left|2,0;\tfrac{3}{2},-\tfrac{3}{2}\right\rangle,\;\left|2,-1;\tfrac{3}{2},-\tfrac{3}{2}\right\rangle,\;
\left|2,-2;\tfrac{3}{2},-\tfrac{3}{2}\right\rangle.
\]
Using this ket-basis and (\ref{L3T}) at $l=2$ and $\dot{l}=3/2$, we obtain
\[
\Lambda^{2\frac{3}{2}}_3=c^{k^\prime
k;\dot{k}^\prime\dot{k}}_{22;\frac{3}{2}\frac{3}{2}}\begin{array}{||cccc||}
{}_1\Lambda^{2\frac{3}{2}}_3 & \bO_5 & \bO_5 & \bO_5\\
\bO_5 & {}_2\Lambda^{2\frac{3}{2}}_3 & \bO_5 & \bO_5\\
\bO_5 & \bO_5 & -{}_2\Lambda^{2\frac{3}{2}}_3 & \bO_5\\
\bO_5 & \bO_5 & \bO_5 & -{}_1\Lambda^{2\frac{3}{2}}_3
\end{array},
\]
where
\[
{}_1\Lambda^{2\frac{3}{2}}_3=\begin{array}{||ccccc||} 3 & 0 & 0 & 0
& 0\\
0 & \frac{3}{2} & 0 & 0 & 0\\
0 & 0 & 0 & 0 & 0\\
0 & 0 & 0 & -\frac{3}{2} & 0\\
0 & 0 & 0 & 0 & -3
\end{array},\quad
{}_2\Lambda^{2\frac{3}{2}}_3=\begin{array}{||ccccc||} 1 & 0 & 0 & 0
& 0\\
0 & \frac{1}{2} & 0 & 0 & 0\\
0 & 0 & 0 & 0 & 0\\
0 & 0 & 0 & -\frac{1}{2} & 0\\
0 & 0 & 0 & 0 & -1
\end{array}.
\]
Characteristic polynomial at $c^{k^\prime
k;\dot{k}^\prime\dot{k}}_{22;\frac{3}{2}\frac{3}{2}}=1$ is
\[
\boldsymbol{\Delta}(\Lambda^{2\frac{3}{2}}_3)=\lambda^4(\lambda-3)^2(\lambda-3/2)^2(\lambda-1)^2(\lambda-1/2)^2
(\lambda+1/2)^2(\lambda+1)^2(\lambda+3/2)^2(\lambda+3)^2.
\]
4) Spin chain $\left(\tfrac{7}{2},3\right)\longleftrightarrow\left(3,\tfrac{7}{2}\right)$. This chain belongs to a second cell of the spinorial chessboard (see Fig.\,2). The state $\left(\tfrac{7}{2},3\right)$ is described within representation $\boldsymbol{\tau}_{7/2,3}$ of the degree 56, that corresponds to the energy level $\sH_E\simeq\Sym_{(8,7)}$. Spintensor substrate for this level is
\[
\underbrace{\C_2\otimes\C_2\otimes\cdots\otimes\C_2}_{8\;\text{times}}\bigotimes
\underbrace{\overset{\ast}{\C}_2\otimes\overset{\ast}{\C}_2\otimes\cdots\otimes
\overset{\ast}{\C}_2}_{7\;\text{times}}
\]
with the spinspace
\[
\underbrace{\dS_2\otimes\dS_2\otimes\cdots\otimes\dS_2}_{8\;\text{times}}\bigotimes
\underbrace{\dot{\dS}_2\otimes\dot{\dS}_2\otimes\cdots\otimes\dot{\dS}_2}_{7\;\text{times}}
\simeq\dS_{2^{15}}.
\]
Helicity basis in $\sH_E\simeq\Sym_{(8,7)}$ consists of the following 56 ket-vectors:
\[
\left|\tfrac{7}{2},\tfrac{7}{2};3,3\right\rangle,\;\left|\tfrac{7}{2},\tfrac{5}{2};3,3\right\rangle,\;
\ldots,\;\left|\tfrac{7}{2},\tfrac{1}{2};3,3\right\rangle,\;
\left|\tfrac{7}{2},-\tfrac{1}{2};3,3\right\rangle,\;\ldots,\;
\left|\tfrac{7}{2},-\tfrac{7}{2};3,3\right\rangle;
\]
\[
\left|\tfrac{7}{2},\tfrac{7}{2};3,2\right\rangle,\;\left|\tfrac{7}{2},\tfrac{5}{2};3,2\right\rangle,\;
\ldots,\;\left|\tfrac{7}{2},\tfrac{1}{2};3,2\right\rangle,\;
\left|\tfrac{7}{2},-\tfrac{1}{2};3,2\right\rangle,\;\ldots,\;
\left|\tfrac{7}{2},-\tfrac{7}{2};3,2\right\rangle;
\]
\[
\ldots\ldots\ldots\ldots\ldots\ldots\ldots\ldots\ldots\ldots\ldots\ldots\ldots\ldots\ldots\ldots
\]
\[
\left|\tfrac{7}{2},\tfrac{7}{2};3,-3\right\rangle,\;\left|\tfrac{7}{2},\tfrac{5}{2};3,-3\right\rangle,\;
\ldots,\;\left|\tfrac{7}{2},\tfrac{1}{2};3,-3\right\rangle,\;
\left|\tfrac{7}{2},-\tfrac{1}{2};3,-3\right\rangle,\;\ldots,\;
\left|\tfrac{7}{2},-\tfrac{7}{2};3,-3\right\rangle.
\]
Using this ket-basis, we obtain from (\ref{L3T}) at $l=7/2$ and $\dot{l}=3$ the following matrix:
\[
\Lambda^{\frac{7}{2}3}_3=c^{k^\prime
k;\dot{k}^\prime\dot{k}}_{\frac{7}{2}\frac{7}{2};33}\begin{array}{||ccccccc||}
{}_1\Lambda^{\frac{7}{2}3}_3 & \bO_8 & \bO_8 & \bO_8 & \bO_8 & \bO_8 & \bO_8\\
\bO_8 & {}_2\Lambda^{\frac{7}{2}3}_3 & \bO_8 & \bO_8 & \bO_8 & \bO_8 & \bO_8\\
\bO_8 & \bO_8 & {}_3\Lambda^{\frac{7}{2}3}_3 & \bO_8 & \bO_8 & \bO_8 & \bO_8\\
\bO_8 & \bO_8 & \bO_8 & \bO_8 & \bO_8 & \bO_8 & \bO_8\\
\bO_8 & \bO_8 & \bO_8 & \bO_8 & -{}_3\Lambda^{\frac{7}{2}3}_3 & \bO_8 & \bO_8\\
\bO_8 & \bO_8 & \bO_8 & \bO_8 & \bO_8 &
-{}_2\Lambda^{\frac{7}{2}3}_3 &
\bO_8\\
\bO_8 & \bO_8 & \bO_8 & \bO_8 & \bO_8 & \bO_8 &
-{}_1\Lambda^{\frac{7}{2}3}_3
\end{array},
\]
where
\[
{}_1\Lambda^{\frac{7}{2}3}_3=\text{{\rm diag}}\left(\frac{21}{2}, \frac{15}{2}, \frac{9}{2}, \frac{3}{2}, -\frac{3}{2}, -\frac{9}{2}, -\frac{15}{2}, -\frac{21}{2}\right),
\]
\[
{}_2\Lambda^{\frac{7}{2}3}_3=\text{{\rm diag}}\left(7, 5, 3, 1, -1, -3, -5, -7\right),
\]
\[
{}_3\Lambda^{\frac{7}{2}3}_3=\text{{\rm diag}}\left(\frac{7}{2}, \frac{5}{2}, \frac{3}{2}, \frac{1}{2}, -\frac{1}{2}, -\frac{3}{2}, -\frac{5}{2}, -\frac{7}{2}\right).
\]
Characteristic polynomial for $\Lambda^{\frac{7}{2}3}_3$ (at $c^{k^\prime
k;\dot{k}^\prime\dot{k}}_{\frac{7}{2}\frac{7}{2};33}=1$)  is
\[
\boldsymbol{\Delta}(\Lambda^{\frac{7}{2}3}_3)=\lambda^8(\lambda-3/2)^4(\lambda-21/2)^2\cdots(\lambda-1/2)^2
(\lambda+3/2)^4(\lambda+21/2)^2\cdots(\lambda+1/2)^2.
\]
\end{proof}
\section{Stability levels of the matter spectrum}
It is well known that a first stable energy level of the matter spectrum is an electron state with the mass $m_e$. The second stability level of the matter spectrum is defined by a proton state with the mass $m_p$. All the particles, occurring between these two levels (and beyond), are unstable formations. The first stability level is defined by the fundamental doublet (\ref{EChain}) (see also Fig.\,1). With the aim to find place (on the representation scheme) of the second stability level we will use the mass formula (\ref{MGY}). As is known, a mass ratio of these two levels is given by Houston estimation \cite{Hou27}:
\[
\frac{m_p}{m_e}=1836,57\pm 0,20.
\]
From the formula (\ref{MGY}) it follows that a degree of the representation $\boldsymbol{\tau}_{l\dot{l}}$ is approximately equal to 900, that corresponds to representations $\boldsymbol{\tau}_{15,\frac{29}{2}}$ and $\boldsymbol{\tau}_{\frac{29}{2},15}$ with the levels $\sH_E\simeq\Sym_{(31,30)}$ and $\sH_E\simeq\Sym_{(30,31)}$ of the energy operator $H$. On the Fig.\,2 these levels are placed within circles. In relation with the place of the second stability level (proton state) we have the following remark. This place in energy spectrum is definitely separated out from other energy levels. Namely, this level occurs near the boundary of the fractal of second order\footnote{Fractal structure of the energy spectrum is generated by the action of the Brauer-Wall group $BW_{\R}\simeq\dZ_8$. The action of $BW_{\R}\simeq\dZ_8$ relates different types of real Clifford algebras, which form a real substructure of spintensor substrate. A cyclic action of $BW_{\R}\simeq\dZ_8$ generates a fractal structure on the substrate (\ref{TenAlg}). This fractal structure is analogous to a Sierpi\'{n}ski carpet with the fractal (Besicovitch-Hausdorff) dimension $D=\ln 63/\ln 8\approx 1,9924$ (see \cite{Var15a}). Further, in virtue of the mapping $\cl_{p,q}\overset{\gamma}{\longrightarrow}\End_{\K}(\dS),\quad
u\longrightarrow\gamma(u),\quad
\gamma(u)\boldsymbol{s}=u\boldsymbol{s}$, where $\dS=\dS_{2^r}(\K)\simeq I_{p,q}=\cl_{p,q}f$ is a \textit{real spinspace},
$\boldsymbol{s}=\boldsymbol{s}^{\alpha_1\alpha_2\ldots\alpha_r}\in\dS_{2^r}$, fractal structure of the spintensor substrate is transferred on the representations of the group $\spin_+(1,3)\simeq\SL(2,\C)$.}. In its turn, the first stability level (electron state) occurs at the ground of the fractal of the first order (see Fig.\,2).
\begin{figure}[htbp]
\unitlength=1mm
\begin{center}
\begin{picture}(100,150)
\put(50,0){$\scr\overset{(0,0)}{\bullet}$}
\put(55,5){$\scr\overset{(\frac{1}{2},0)}{\bigcirc\hskip -5.9pt\bullet}$}
\put(45,5){$\scr\overset{(0,\frac{1}{2})}{\bigcirc\hskip -5.9pt\bullet}$}
\put(40,10){$\scr\overset{(0,1)}{\bullet}$}
\put(50,10){$\scr\overset{(\frac{1}{2},\frac{1}{2})}{\bullet}$}
\put(60,10){$\scr\overset{(1,0)}{\bullet}$}
\put(35,15){$\scr\overset{(0,\frac{3}{2})}{\bullet}$}
\put(45,15){$\scr\overset{(\frac{1}{2},1)}{\bullet}$}
\put(55,15){$\scr\overset{(1,\frac{1}{2})}{\bullet}$}
\put(65,15){$\scr\overset{(\frac{3}{2},0)}{\bullet}$}
\put(30,20){$\scr\overset{(0,2)}{\bullet}$}\put(10,20.5){\line(1,0){20}}\put(20,21.5){$\scr 1$}
\put(40,20){$\scr\overset{(\frac{1}{2},\frac{3}{2})}{\bullet}$}
\put(50,20){$\scr\overset{(1,1)}{\bullet}$}
\put(60,20){$\scr\overset{(\frac{3}{2},\frac{1}{2})}{\bullet}$}
\put(70,20){$\scr\overset{(2,0)}{\bullet}$}
\put(35,25){$\scr\overset{(\frac{1}{2},2)}{\bullet}$}
\put(45,25){$\scr\overset{(1,\frac{3}{2})}{\bullet}$}
\put(55,25){$\scr\overset{(\frac{3}{2},1)}{\bullet}$}
\put(65,25){$\scr\overset{(2,\frac{1}{2})}{\bullet}$}
\put(40,30){$\scr\overset{(1,2)}{\bullet}$}
\put(50,30){$\scr\overset{(\frac{3}{2},\frac{3}{2})}{\bullet}$}
\put(60,30){$\scr\overset{(2,1)}{\bullet}$}
\put(45,35){$\scr\overset{(\frac{3}{2},2)}{\bullet}$}
\put(55,35){$\scr\overset{(2,\frac{3}{2})}{\bullet}$}
\put(50,40){$\scr\overset{(2,2)}{\bullet}$}
\put(45,45){$\scr\overset{(2,\frac{5}{2})}{\bullet}$}
\put(55,45){$\scr\overset{(\frac{5}{2},2)}{\bullet}$}
\put(40,50){$\scr\overset{(2,3)}{\bullet}$}
\put(50,50){$\scr\overset{(\frac{5}{2},\frac{5}{2})}{\bullet}$}
\put(60,50){$\scr\overset{(3,2)}{\bullet}$}
\put(35,55){$\scr\overset{(2,\frac{7}{2})}{\bullet}$}
\put(45,55){$\scr\overset{(\frac{5}{2},3)}{\bullet}$}
\put(55,55){$\scr\overset{(3,\frac{5}{2})}{\bullet}$}
\put(65,55){$\scr\overset{(\frac{7}{2},2)}{\bullet}$}
\put(30,60){$\scr\overset{(2,4)}{\bullet}$}\put(10,60.5){\line(1,0){20}}\put(20,61.5){$\scr 2$}
\put(40,60){$\scr\overset{(\frac{5}{2},\frac{7}{2})}{\bullet}$}
\put(50,60){$\scr\overset{(3,3)}{\bullet}$}
\put(60,60){$\scr\overset{(\frac{7}{2},\frac{5}{2})}{\bullet}$}
\put(70,60){$\scr\overset{(4,2)}{\bullet}$}
\put(35,65){$\scr\overset{(\frac{5}{2},4)}{\bullet}$}
\put(45,65){$\scr\overset{(3,\frac{7}{2})}{\bullet}$}
\put(55,65){$\scr\overset{(\frac{7}{2},3)}{\bullet}$}
\put(65,65){$\scr\overset{(4,\frac{5}{2})}{\bullet}$}
\put(40,70){$\scr\overset{(3,4)}{\bullet}$}
\put(50,70){$\scr\overset{(\frac{7}{2},\frac{7}{2})}{\bullet}$}
\put(60,70){$\scr\overset{(4,3)}{\bullet}$}
\put(45,75){$\scr\overset{(\frac{7}{2},4)}{\bullet}$}
\put(55,75){$\scr\overset{(4,\frac{7}{2})}{\bullet}$}
\put(50,80){$\scr\overset{(4,4)}{\bullet}$}
\put(52,85){$\vdots$}
\put(52,90){$\vdots$}
\put(48.75,95){$\scr\overset{(14,14)}{\bullet}$}
\put(43.75,100){$\scr\overset{(14,\frac{29}{2})}{\bullet}$}
\put(53.75,100){$\scr\overset{(\frac{29}{2},14)}{\bullet}$}
\put(38.75,105){$\scr\overset{(14,15)}{\bullet}$}
\put(48.75,105){$\scr\overset{(\frac{29}{2},\frac{29}{2})}{\bullet}$}
\put(58.75,105){$\scr\overset{(15,14)}{\bullet}$}
\put(33.75,110){$\scr\overset{(14,\frac{31}{2})}{\bullet}$}
\put(43.75,110){$\scr\overset{(\frac{29}{2},15)}{\bigcirc\hskip -5.9pt\bullet}$}
\put(53.75,110){$\scr\overset{(15,\frac{29}{2})}{\bigcirc\hskip -5.9pt\bullet}$}
\put(63.75,110){$\scr\overset{(\frac{31}{2},14)}{\bullet}$}
\put(28.75,115){$\scr\overset{(14,16)}{\bullet}$}\put(10,115.5){\line(1,0){20}}\put(20,116.5){$\scr 8$}
\put(38.75,115){$\scr\overset{(\frac{29}{2},\frac{31}{2})}{\bullet}$}
\put(48.75,115){$\scr\overset{(15,15)}{\bullet}$}
\put(58.75,115){$\scr\overset{(\frac{31}{2},\frac{29}{2})}{\bullet}$}
\put(68.75,115){$\scr\overset{(16,14)}{\bullet}$}
\put(33.75,120){$\scr\overset{(\frac{29}{2},16)}{\bullet}$}
\put(43.75,120){$\scr\overset{(15,\frac{31}{2})}{\bullet}$}
\put(53.75,120){$\scr\overset{(\frac{31}{2},15)}{\bullet}$}
\put(63.75,120){$\scr\overset{(16,\frac{29}{2})}{\bullet}$}
\put(38.75,125){$\scr\overset{(15,16)}{\bullet}$}
\put(48.75,125){$\scr\overset{(\frac{31}{2},\frac{31}{2})}{\bullet}$}
\put(58.75,125){$\scr\overset{(16,15)}{\bullet}$}
\put(43.75,130){$\scr\overset{(\frac{31}{2},16)}{\bullet}$}
\put(53.75,130){$\scr\overset{(16,\frac{31}{2})}{\bullet}$}
\put(48.75,135){$\scr\overset{(16,16)}{\bullet}$}
\put(10,0.5){\line(1,0){42}}\put(50,0.5){\vector(1,0){42}}
\put(32,15){$\vdots$}
\put(32,11){$\vdots$}
\put(32,7){$\vdots$}
\put(32,3){$\vdots$}
\put(32,1){$\cdot$}
\put(29.5,-3){$\scr -2$}
\put(32,55){$\vdots$}
\put(32,51){$\vdots$}
\put(32,47){$\vdots$}
\put(32,43){$\vdots$}
\put(32,39){$\vdots$}
\put(32,35){$\vdots$}
\put(32,31){$\vdots$}
\put(32,27){$\vdots$}
\put(32,23){$\vdots$}
\put(32,110){$\vdots$}
\put(32,106){$\vdots$}
\put(32,102){$\vdots$}
\put(32,98){$\vdots$}
\put(32,94){$\vdots$}
\put(32,90){$\vdots$}
\put(32,86){$\vdots$}
\put(32,82){$\vdots$}
\put(32,78){$\vdots$}
\put(32,74){$\vdots$}
\put(32,70){$\vdots$}
\put(32,66){$\vdots$}
\put(32,120){$\vdots$}
\put(32,124){$\vdots$}
\put(32,128){$\vdots$}
\put(32,132){$\vdots$}
\put(32,136){$\vdots$}
\put(32,140){$\vdots$}
\put(72,15){$\vdots$}
\put(72,11){$\vdots$}
\put(72,7){$\vdots$}
\put(72,3){$\vdots$}
\put(72,1){$\cdot$}
\put(72,-3){$\scr 2$}
\put(72,55){$\vdots$}
\put(72,51){$\vdots$}
\put(72,47){$\vdots$}
\put(72,43){$\vdots$}
\put(72,39){$\vdots$}
\put(72,35){$\vdots$}
\put(72,31){$\vdots$}
\put(72,27){$\vdots$}
\put(72,23){$\vdots$}
\put(72,110){$\vdots$}
\put(72,106){$\vdots$}
\put(72,102){$\vdots$}
\put(72,98){$\vdots$}
\put(72,94){$\vdots$}
\put(72,90){$\vdots$}
\put(72,86){$\vdots$}
\put(72,82){$\vdots$}
\put(72,78){$\vdots$}
\put(72,74){$\vdots$}
\put(72,70){$\vdots$}
\put(72,66){$\vdots$}
\put(72,120){$\vdots$}
\put(72,124){$\vdots$}
\put(72,128){$\vdots$}
\put(72,132){$\vdots$}
\put(72,136){$\vdots$}
\put(72,140){$\vdots$}
\put(37.5,10){$\vdots$}
\put(37.5,6){$\vdots$}
\put(37.5,2){$\vdots$}
\put(37.5,1){$\cdot$}
\put(35,-3){$\scr -\frac{3}{2}$}
\put(37.5,50){$\vdots$}
\put(37.5,46){$\vdots$}
\put(37.5,42){$\vdots$}
\put(37.5,38){$\vdots$}
\put(37.5,34){$\vdots$}
\put(37.5,31){$\vdots$}
\put(37.5,105){$\vdots$}
\put(37.5,101){$\vdots$}
\put(37.5,97){$\vdots$}
\put(37.5,93){$\vdots$}
\put(37.5,89){$\vdots$}
\put(37.5,85){$\vdots$}
\put(37.5,81){$\vdots$}
\put(37.5,77){$\vdots$}
\put(37.5,73){$\vdots$}
\put(37.5,69){$\vdots$}
\put(37.5,124){$\vdots$}
\put(37.5,128){$\vdots$}
\put(37.5,132){$\vdots$}
\put(37.5,136){$\vdots$}
\put(37.5,140){$\vdots$}
\put(67.5,10){$\vdots$}
\put(67.5,6){$\vdots$}
\put(67.5,2){$\vdots$}
\put(67.5,1){$\cdot$}
\put(67,-3){$\scr\frac{3}{2}$}
\put(67.5,50){$\vdots$}
\put(67.5,46){$\vdots$}
\put(67.5,42){$\vdots$}
\put(67.5,38){$\vdots$}
\put(67.5,34){$\vdots$}
\put(67.5,31){$\vdots$}
\put(67.5,105){$\vdots$}
\put(67.5,101){$\vdots$}
\put(67.5,97){$\vdots$}
\put(67.5,93){$\vdots$}
\put(67.5,89){$\vdots$}
\put(67.5,85){$\vdots$}
\put(67.5,81){$\vdots$}
\put(67.5,77){$\vdots$}
\put(67.5,73){$\vdots$}
\put(67.5,69){$\vdots$}
\put(67.5,124){$\vdots$}
\put(67.5,128){$\vdots$}
\put(67.5,132){$\vdots$}
\put(67.5,136){$\vdots$}
\put(67.5,140){$\vdots$}
\put(42,5){$\vdots$}
\put(42,1){$\vdots$}
\put(40,-3){$\scr -1$}
\put(42,45){$\vdots$}
\put(42,41){$\vdots$}
\put(42,37){$\vdots$}
\put(42,33){$\vdots$}
\put(42,100){$\vdots$}
\put(42,96){$\vdots$}
\put(42,92){$\vdots$}
\put(42,88){$\vdots$}
\put(42,84){$\vdots$}
\put(42,81){$\vdots$}
\put(42,77){$\vdots$}
\put(42,73){$\vdots$}
\put(42,128){$\vdots$}
\put(42,132){$\vdots$}
\put(42,136){$\vdots$}
\put(42,140){$\vdots$}
\put(62,5){$\vdots$}
\put(62,1){$\vdots$}
\put(62,-3){$\scr 1$}
\put(62,45){$\vdots$}
\put(62,41){$\vdots$}
\put(62,37){$\vdots$}
\put(62,33){$\vdots$}
\put(62,100){$\vdots$}
\put(62,96){$\vdots$}
\put(62,92){$\vdots$}
\put(62,88){$\vdots$}
\put(62,84){$\vdots$}
\put(62,81){$\vdots$}
\put(62,77){$\vdots$}
\put(62,73){$\vdots$}
\put(62,128){$\vdots$}
\put(62,132){$\vdots$}
\put(62,136){$\vdots$}
\put(62,140){$\vdots$}
\put(47.5,1){$\vdots$}
\put(45,-3){$\scr -\frac{1}{2}$}
\put(47.5,40){$\vdots$}
\put(47.5,95){$\vdots$}
\put(47.5,91){$\vdots$}
\put(47.5,87){$\vdots$}
\put(47.5,83){$\vdots$}
\put(47.5,79){$\vdots$}
\put(47.5,136){$\vdots$}
\put(47.5,140){$\vdots$}
\put(57.5,1){$\vdots$}
\put(57,-3){$\scr \frac{1}{2}$}
\put(57.5,40){$\vdots$}
\put(57.5,95){$\vdots$}
\put(57.5,91){$\vdots$}
\put(57.5,87){$\vdots$}
\put(57.5,83){$\vdots$}
\put(57.5,79){$\vdots$}
\put(57.5,136){$\vdots$}
\put(57.5,140){$\vdots$}
\put(52,-3){$\scr 0$}
\put(52,140){$\vdots$}
\end{picture}
\end{center}
\vspace{0.3cm}
\begin{center}\begin{minipage}{30pc}{\small {\bf Fig.\,2:} The representation block of the second order of the group $\spin_+(1,3)$, that is, the main diagonal (8 cells) of spinorial chessboard of second order. This block is generated by the eight cycles of the group $BW_{\R}\simeq\dZ_8$. Two stability level are placed within circles.}\end{minipage}\end{center}
\end{figure}

The place of these energy levels at the boundaries of the fractal structure allows us to suppose that these levels have a nature of \textit{threshold scales}. In the fractal theory \cite{Mandel} notions of \textit{threshold scale} and \textit{effective dimension} play a key role. It means that at the value of threshold scale fractal acquires a certain new quality (quantity transits to quality), that is, a new stable level of matter is formed.

Let us study an elementary divisor structure of the energy level $\sH_E\simeq\Sym_{(31,30)}$. On this level we have representation $\boldsymbol{\tau}_{15,\frac{29}{2}}$ of the degree 930. Spintensor substrate for this level of the matter spectrum is
\[
\underbrace{\C_2\otimes\C_2\otimes\cdots\otimes\C_2}_{31\;\text{times}}\bigotimes
\underbrace{\overset{\ast}{\C}_2\otimes\overset{\ast}{\C}_2\otimes\cdots\otimes
\overset{\ast}{\C}_2}_{30\;\text{times}}
\]
and a spinspace for the level $\sH_E\simeq\Sym_{(31,30)}$ is
\[
\underbrace{\dS_2\otimes\dS_2\otimes\cdots\otimes\dS_2}_{31\;\text{times}}\bigotimes
\underbrace{\dot{\dS}_2\otimes\dot{\dS}_2\otimes\cdots\otimes\dot{\dS}_2}_{30\;\text{times}}
\simeq\dS_{2^{61}}.
\]
In this case helicity basis consists of 930 ket-vectors:
\[
\left| 15,15;\tfrac{29}{2},\tfrac{29}{2}\right\rangle,\;\left| 15,14;\tfrac{29}{2},\tfrac{29}{2}\right\rangle,\;\ldots,\;
\left| 15,1;\tfrac{29}{2},\tfrac{29}{2}\right\rangle,\;\left| 15,0;\tfrac{29}{2},\tfrac{29}{2}\right\rangle,\;\ldots
\left| 15,-15;\tfrac{29}{2},\tfrac{29}{2}\right\rangle;
\]
\[
\left| 15,15;\tfrac{29}{2},\tfrac{27}{2}\right\rangle,\;\left| 15,14;\tfrac{29}{2},\tfrac{27}{2}\right\rangle,\;\ldots,\;
\left| 15,1;\tfrac{29}{2},\tfrac{27}{2}\right\rangle,\;\left| 15,0;\tfrac{29}{2},\tfrac{27}{2}\right\rangle,\;\ldots
\left| 15,-15;\tfrac{29}{2},\tfrac{27}{2}\right\rangle;
\]
\[
\ldots\ldots\ldots\ldots\ldots\ldots\ldots\ldots\ldots\ldots\ldots\ldots\ldots\ldots\ldots\ldots
\]
\[
\left| 15,15;\tfrac{29}{2},\tfrac{1}{2}\right\rangle,\;\left| 15,14;\tfrac{29}{2},\tfrac{1}{2}\right\rangle,\;\ldots,\;
\left| 15,1;\tfrac{29}{2},\tfrac{1}{2}\right\rangle,\;\left| 15,0;\tfrac{29}{2},\tfrac{1}{2}\right\rangle,\;\ldots
\left| 15,-15;\tfrac{29}{2},\tfrac{1}{2}\right\rangle;
\]
\[
\left| 15,15;\tfrac{29}{2},-\tfrac{1}{2}\right\rangle,\;\left| 15,14;\tfrac{29}{2},-\tfrac{1}{2}\right\rangle,\;\ldots,\;
\left| 15,1;\tfrac{29}{2},-\tfrac{1}{2}\right\rangle,\;\left| 15,0;\tfrac{29}{2},-\tfrac{1}{2}\right\rangle,\;\ldots
\left| 15,-15;\tfrac{29}{2},-\tfrac{1}{2}\right\rangle;
\]
\[
\ldots\ldots\ldots\ldots\ldots\ldots\ldots\ldots\ldots\ldots\ldots\ldots\ldots\ldots\ldots\ldots
\]
\[
\left| 15,15;\tfrac{29}{2},-\tfrac{29}{2}\right\rangle,\;\left| 15,14;\tfrac{29}{2},-\tfrac{29}{2}\right\rangle,\;\ldots,\;
\left| 15,1;\tfrac{29}{2},-\tfrac{29}{2}\right\rangle,\;\left| 15,0;\tfrac{29}{2},-\tfrac{29}{2}\right\rangle,\;\ldots
\left| 15,-15;\tfrac{29}{2},-\tfrac{29}{2}\right\rangle.
\]
In this ket-basis the matrix $\Lambda^{15,\frac{29}{2}}_3$ has the following structure:
\begin{multline}
\Lambda^{15,\frac{29}{2}}_3=\text{{\rm diag}}\left({}_1\Lambda^{15,\frac{29}{2}}_3,\;{}_2\Lambda^{15,\frac{29}{2}}_3,\;
{}_3\Lambda^{15,\frac{29}{2}}_3,\;\ldots,\;{}_{15}\Lambda^{\frac{29}{2},14}_3,\;\right.\\
\left.-{}_{15}\Lambda^{15,\frac{29}{2}}_3,\;\ldots,\;-{}_3\Lambda^{15,\frac{29}{2}}_3,-{}_2\Lambda^{15,\frac{29}{2}}_3,\;
-{}_1\Lambda^{15,\frac{29}{2}}_3\right),\nonumber
\end{multline}
where
\[
{}_1\Lambda^{15,\frac{29}{2}}_3=\text{{\rm diag}}\left(\frac{435}{2}, 203, \frac{377}{2},\ldots, \frac{29}{2}, 0, -\frac{29}{2},\ldots, -\frac{377}{2}, -203, -\frac{435}{2}\right),
\]
\[
{}_2\Lambda^{15,\frac{29}{2}}_3=\text{{\rm diag}}\left(\frac{405}{2}, 189, \frac{351}{2},\ldots,\frac{27}{2}, 0,
-\frac{27}{2},\ldots, -\frac{351}{2},
-189, -\frac{405}{2}\right),
\]
\[
{}_3\Lambda^{15,\frac{29}{2}}_3=\text{{\rm diag}}\left(\frac{375}{2}, 175, \frac{325}{2}, \ldots, \frac{25}{2}, 0,
-\frac{25}{2}, \ldots, -\frac{325}{2},
-175, -\frac{375}{2}\right),
\]
\[
\ldots\ldots\ldots\ldots\ldots\ldots\ldots\ldots\ldots\ldots\ldots\ldots\ldots\ldots\ldots\ldots
\]
\[
{}_{15}\Lambda^{15,\frac{29}{2}}_3=\text{{\rm diag}}\left(\frac{15}{2}, 7, \frac{13}{2}, \ldots, \frac{1}{2}, 0, -\frac{1}{2}, \ldots, -\frac{13}{2},
-7, -\frac{15}{2}\right)
\]
are $31\times 31$ diagonal matrices.

Among the 930 eigenvectors of $\boldsymbol{\Lambda}$ in $\sH_E\simeq\Sym_{(31,30)}$ there are 329 eigenvectors with different eigenvalues. Therefore, characteristic polynomial for $\Lambda^{15,\frac{29}{2}}_3$ consists of 329 different factors (elementary divisors):
\begin{multline}
\boldsymbol{\Delta}(\Lambda^{15,\frac{29}{2}}_3)=\lambda^{30}(\lambda-45/2)^8(\lambda-15/2)^8(\lambda-135/2)^6\cdots
(\lambda-1/2)^2\\
(\lambda+45/2)^8(\lambda+15/2)^8(\lambda+135/2)^6\cdots
(\lambda+1/2)^2.\nonumber
\end{multline}
This polynomial contains four elementary divisors of 8-th order with eigenvalues\footnote{It means that these eigenvalues of $\boldsymbol{\Lambda}$ are 8-fold degenerate. It should be noted that $\boldsymbol{\tau}_{15,29/2}$ is the first representation on the spin-1/2 line, in which there are elementary divisors of eighth order with non-null eigenvalues. In preceding representation $\boldsymbol{\tau}_{29/2,14}$ we have divisors of sixth order only.}
\[
\pm\frac{15}{2},\;\pm\frac{45}{2};
\]
18 elementary divisors of 6-th order with eigenvalues
\[
\pm\frac{9}{2},\;\pm\frac{21}{2}\;\pm\frac{27}{2},\;\pm 15,\;\pm 21,\;\pm\frac{63}{2},\;\pm\frac{75}{2},\;
\pm\frac{105}{2},\;\pm\frac{135}{2}
\]
and 74 elementary divisors of 4-th order with eigenvalues
\[
\pm\frac{3}{2},\;\pm\frac{5}{2},\;\pm 3,\;\frac{7}{2},\;\pm 5,\;\pm\frac{11}{2},\;\pm 6,\;\pm\frac{13}{2},\;
\pm 7,\;\pm 9,\;\frac{25}{2},\;\pm\frac{33}{2},\;\pm\frac{35}{2},
\]
\[
\pm 18,\;\frac{39}{2},\;\pm 25,\;\pm 27,\;\pm\frac{55}{2},\;\pm 30,\pm\frac{65}{2},\;\pm 35,\;\pm\frac{77}{2},\;
\pm\frac{81}{2},\;\pm 42,\;\pm 45,
\]
\[
\pm\frac{91}{2},\;\pm\frac{99}{2},\;\pm 54,\;\pm\frac{117}{2},\;\pm 63,\;\pm\frac{143}{2},\;\pm 75,\;\frac{165}{2},\;
\pm\frac{189}{2},\;\pm\frac{195}{2},\;\pm 105,\;\pm\frac{225}{2}.
\]
All other non-null eigenvalues of $\boldsymbol{\Lambda}$ in $\sH_E\simeq\Sym_{(31,30)}$ are twofold degenerate.

As in the case of electron (see section 3.1), proton is a superposition of state vectors in nonseparable Hilbert space $\bsH^S_2\otimes\bsH^+\otimes\bsH_\infty$, that corresponds to a spin doublet
\[
\unitlength=1mm
\begin{picture}(20,13)
\put(0,5){$\overset{(\frac{29}{2},15)}{\bullet}$}
\put(20,5){$\overset{(15,\frac{29}{2})}{\bullet}$}
\put(5,6){\line(1,0){20}}
\put(0.5,0){$-\frac{1}{2}$}
\put(23.5,0){$\frac{1}{2}$}
\put(7,0){$\cdots$}
\put(12.5,0){$\cdots$}
\put(18.5,0){$\cdots$}
\end{picture}
\]
In this case we have two spin states: the state 1/2, described by the representation $\boldsymbol{\tau}_{15,29/2}$ on the spin-1/2 line, and the state -1/2, described by the representation $\boldsymbol{\tau}_{29/2,15}$ on the dual spin-1/2 line. Representations $\boldsymbol{\tau}_{15,29/2}$ and $\boldsymbol{\tau}_{29/2,15}$ act in the spaces $\Sym_{(31,30)}$ and $\Sym_{(30,31)}$, respectively. Therefore, there are two state vectors in $\bsH^S_2\otimes\bsH^+\otimes\bsH_\infty$: ket-vector  $\left|\Psi\right\rangle=\left.\left|\boldsymbol{\tau}_{15,29/2},\,\Sym_{(31,30)},\,\C_{122},\,\dS_{2^{61}}
\right.\right\rangle$ and bra-vector $\langle\dot{\dS}_{2^{61}},\,\overset{\ast}{\C}_{122},\,\Sym_{(30,31)},\,
\boldsymbol{\tau}_{29/2,15}|=\langle\dot{\Psi}|$. Equations of type (\ref{BS2}) for the spin chain $\left(15,29/2\right)\longleftrightarrow\left(29/2,15\right)$ are
\[
\sum^3_{j=1}\overset{\ast}{\Lambda}{}^{\frac{29}{2},15}_j\frac{\partial\dot{\psi}}
{\partial\widetilde{a}_j}+i\sum^3_{j=1}\overset{\ast}{\Lambda}{}^{\frac{29}{2},15}_j
\frac{\partial\dot{\psi}}{\partial\widetilde{a}^\ast_j}+
m_p\psi=0,
\]
\[
\sum^3_{j=1}\Lambda^{15,\frac{29}{2}}_j\frac{\partial\psi}{\partial a_j}-
i\sum^3_{j=1}\Lambda^{15,\frac{29}{2}}_j\frac{\partial\psi}{\partial a^\ast_j}+
m_p\dot{\psi}=0,
\]
where the matrices $\Lambda^{15,\frac{29}{2}}_1$ and $\Lambda^{15,\frac{29}{2}}_2$ have the same elementary divisor structure as $\Lambda^{15,\frac{29}{2}}_3$.

It is obvious that the next stability level of matter spectrum occurs as a threshold scale at the boundary of spinorial chessboard of third order\footnote{Of course, third stability level can be identified with a so-called `dark matter'.}.
\section{Summary}
We have presented group-theoretical description of \textit{matter spectrum} in terms of interlocking representations of the Lorentz group. The each level of this spectrum corresponds to subspace $\sH_E\simeq\Sym_{(k,r)}$ of the energy operator $H$. On the other hand, the each level of matter spectrum presents a state vector in the spin-charge nonseparable Hilbert space $\bsH^S\otimes\bsH^Q\otimes\bsH_\infty$. Superposition of state vectors in $\bsH^S\otimes\bsH^Q\otimes\bsH_\infty$ is an elementary particle. In such description there are no fundamental particles, all the levels of matter spectrum are equivalent (they are differed from each other by the value of energy as actualized particle states with the given values of spin and mass). \textit{Fundamental symmetry} of matter spectrum is defined by the Lorentz group, whereas \textit{dynamical symmetries} ($\SU(3)$, $\SU(6)$ and so on) can be defined in $\bsH^S\otimes\bsH^Q\otimes\bsH_\infty$ via central extension (for more details see \cite{Var15}). From this viewpoint, following to Heisenberg' opinion \cite{Heisen}, all the approximate dynamical symmetries should be considered as \textit{secondary symmetries}. As is known, Heisenberg divided all known symmetries in particle physics on the two categories: fundamental (primary) symmetries and dynamical (secondary) symmetries. In this context it should be noted that quark scheme is a derivative construction of dynamical $\SU(3)$-symmetry. Moreover, all basic facts of $\SU(3)$- and $\SU(6)$-theories, concerning systematization of hadron spectra, can be obtained without usage of quark hypothesis (see \cite{RF70}).

Among the infinite set of matter levels there are two levels which have a special meaning (at first, from historical viewpoint). These levels correspond to Dirac and Maxwell equations. As the first fermionic and bosonic terms of matter spectrum, Dirac and Maxwell equations have a very similar spinor form. Therefore, these fields should be considered on an equal footing, from one group-theoretical viewpoint. As a consequence of this, we come to Majorana-Oppenheimer formulation of quantum electrodynamics. All other high-energy terms of matter spectrum are studied by similar manner. RWE for all matter levels have a common structure, this structure is defined by interlocking representations of the Lorentz group. Classification of energy spectra for simple and arbitrary spin chains has been obtained with respect to elementary divisor structure of the operators $\boldsymbol{\Lambda}$ (RWE-operators). Among the all matter levels there are stability levels that correspond to threshold scales of the fractal structure associated with the interlocking scheme. Hence it follows that underlying spinor structure has a very complicate nature. In this connection it should be noted recent Finkelstein works about simplex (modular) spinor structure \cite{Fin1,Fin2}.

\end{document}